\documentclass[submission,copyright,creativecommons]{eptcs}

\usepackage{iftex}

\usepackage[T1]{fontenc}
%
\usepackage{graphicx}
%
\usepackage{amsmath,amssymb,amsthm}
\usepackage{bussproofs}
\usepackage{breakurl}
\usepackage{calc}
\usepackage{xcolor,graphicx}
\usepackage{enumitem}
\usepackage{mathtools}
\usepackage{url}

\ifpdf
  \usepackage{underscore}         
  \usepackage[T1]{fontenc}        
\else
  \usepackage{breakurl}           
\fi

\definecolor{azure}{rgb}{0.94,1.00,1.00}
\definecolor{brown}{rgb}{.75,.25,.25}
\definecolor{cyan}{rgb}{0.25,0.88,0.82}
\definecolor{chocolate}{rgb}{0.82,0.41,0.12}
\definecolor{darkcyan}{rgb}{0.5,0,1}
\definecolor{darkgreen}{rgb}{0,0.39,0}
\definecolor{darkmagenta}{rgb}{0.5,0,0.5}
\definecolor{darkgoldenrod}{RGB}{184,134,11}
\definecolor{firebrick}{RGB}{175,25,25}
\definecolor{forestgreen}{rgb}{0.13,0.55,0.13}
\definecolor{goldenrod}{RGB}{218,165,32}
\definecolor{grey}{RGB}{128,128,128}
\definecolor{lightcyan}{rgb}{0.88,1.00,1.00}
\definecolor{lightpink}{rgb}{1.00,0.71,0.76}
\definecolor{myyellow}{RGB}{235,235,0}
\definecolor{lightyellow}{rgb}{1.00,1.00,0.88}
\definecolor{lightgoldenrod}{rgb}{0.83,0.97,0.51}
\definecolor{lightgoldenrodyellow}{rgb}{0.98,0.98,0.82}
\definecolor{lightskyblue}{rgb}{0.53,0.81,0.98}
\definecolor{moccasin}{rgb}{1.00,0.89,0.71}
\definecolor{magenta}{rgb}{1,0,1}
\definecolor{navyblue}{rgb}{0,0,0.5}
\definecolor{orange}{rgb}{1.0,0.65,0.0}
\definecolor{orangered}{rgb}{1.0,0.27,0.0}
\definecolor{palegreen}{rgb}{0.60,0.98,0.60}
\definecolor{powderblue}{rgb}{0.69,0.88,0.90}
\definecolor{purple}{rgb}{1,0.5,1}
\definecolor{royalblue}{RGB}{65,105,225}
\definecolor{mediumblue}{RGB}{0,0,205}
\definecolor{cornflowerblue}{RGB}{100,149,237}
\definecolor{springgreen}{rgb}{0.0,1.0,0.5}
\definecolor{turquoise}{rgb}{0.25,0.88,0.82}
\definecolor{snow}{rgb}{1.00,0.98,0.98}
\definecolor{tan}{rgb}{0.82,0.71,0.55}
\definecolor{red}{rgb}{1,0,0}
\definecolor{violetred}{RGB}{208,32,144}

\newcommand{\colorin}[1]{\textcolor{#1}}

\newcommand{\magenta}{\colorin{magenta}}



\newenvironment{new}{\color{chocolate}}{\color{black}}
\newenvironment{newer}{\color{firebrick}}{\color{black}}
\newenvironment{newest}{\color{red}}{\color{black}}
\newenvironment{revised}{\color{violetred}}{\color{black}}

\newcommand{\isnewest}[1]{\begin{newest}{#1}\end{newest}}

\renewcommand{\alert}{\isnewest}

%
\newcommand{\nb}{\nobreakdash}

\newcommand{\nf}{\normalfont}

%
%

\newcommand{\sdefdby}{{:=}}
\newcommand{\defdby}{\mathrel{\sdefdby}}

%
%
%
\newcommand{\funin}{\mathrel{:}}
\newcommand{\fap}[2]{{#1}(\hspace*{-0.5pt}{#2}\hspace*{-0.5pt})}

%
%

%
%
%
\newcommand{\iap}[2]{#1_{#2}}
\newcommand{\bap}[2]{#1_{#2}}
\newcommand{\pap}[2]{#1^{#2}}

\newcommand{\pbap}[3]{#1_{#3}^{#2}}

\newcommand{\family}[2]{\setexp{#1}_{#2}}
\newcommand{\familynormalsize}[2]{\setexpnormalsize{#1}_{#2}}
\newcommand{\familyns}{\familynormalsize}

\newcommand{\setexp}[1]{\left\{{#1}\right\}}
\newcommand{\setexpnormalsize}[1]{\{{#1}\}}


%

\newcommand{\descsetexpmid}{\mathrel{\vert}}
\newcommand{\descsetexpbigmid}{\mathrel{\big\vert}}

\newcommand{\descsetexp}[2]{\left\{{#1}\descsetexpmid{#2}\right\}}

\newcommand{\descsetexpns}[2]{\{{#1}\descsetexpmid{#2}\}}
\newcommand{\descsetexpbig}[2]{\bigl\{{#1}\descsetexpbigmid{#2}\bigr\}}

%

\newcommand{\cardinalityns}[1]{|#1|}

%
\newcommand\tuple[1]{\langle #1 \rangle}

\newcommand\tuplespace{\hspace*{0.5pt}}
\newcommand\pair[2]{\tuple{#1, \tuplespace #2}}


%
\newcommand{\True}{\top}

%
\newcommand{\slogand}{\wedge}

\newcommand{\logand}{\mathrel{\slogand}}
 
\newcommand{\slognot}{\neg}
\newcommand{\lognot}[1]{\slognot{#1}}
%

%

\newcommand{\existsstzero}[1]{\exists{\hspace*{1pt}#1}}

\newcommand{\forallstzero}[1]{\forall{\hspace*{0.5pt}#1}}

%
\newcommand{\diamondact}[1]{\langle{#1}\rangle}

%

%

%

%

%
\newcommand{\aform}{\phi}
\newcommand{\bform}{\psi}
\newcommand{\aformi}{\iap{\phi}}

%
\newcommand{\aclassforms}{\mathcal{F}}

%
\newcommand{\ssatisfies}{{\vDash}}
\newcommand{\satisfies}{\vDash}
\newcommand{\notsatisfies}{\not\hspace*{1pt}\vDash}

%
\newcommand{\forms}{\mathcal{M}}
\newcommand{\formsof}{\fap{\forms}}
\newcommand{\posforms}{\mathcal{L}}
\newcommand{\posformsof}{\fap{\posforms}}

\newcommand{\someforms}{\mathcal{F}}
\newcommand{\someformsof}{\fap{\someforms}}

%
\newcommand{\negclosure}[1]{\prescript{\neg}{}{\overline{#1}}}
%

%
\newcommand{\posform}[1]{{\left|{#1}\right|_{+}}} 
\newcommand{\actprojform}[1]{{\left|{#1}\right|_{\iap{\pi}{\actions}}}} 

%

\newcommand{\BNFor}{\:\mid\:}
\newcommand{\BNFdefdby}{\:{::=}\:}

\theoremstyle{plain}
\newtheorem{thm}{Theorem}[section]

\newtheorem{lem}[thm]{Lemma}

\newtheorem{prop}[thm]{Proposition}

\theoremstyle{definition}
\newtheorem{defi}[thm]{Definition}

%
\newcommand{\derivativeact}[1]{{$#1$}\nb-de\-riv\-a\-tive}
\newcommand{\derivativesact}[1]{\derivativeact{#1}s}

\newcommand{\imagefinite}{im\-age-fi\-nite}
\newcommand{\indexedby}[1]{{$#1$}\nb-in\-dexed}
\newcommand{\jiparameterized}{ji\nb-pa\-ram\-e\-ter\-ized}
\newcommand{\joininteraction}{join-in\-ter\-ac\-tion}

\newcommand{\jiorparameterized}{(ji\protect\nb-)pa\-ram\-e\-ter\-ized}

\newcommand{\labeledi}[1]{{$#1$}\nb-la\-beled}

\newcommand{\modallogical}{mod\-al-log\-ic\-al}

\newcommand{\nonempty}{non-emp\-ty}

\newcommand{\parameterizedby}[1]{{$#1$}\nb-pa\-ram\-e\-ter\-ized}

\newcommand{\preorder}{pre\-or\-der}
\newcommand{\preorders}{\preorder{s}}

\newcommand{\transitionact}[1]{{${#1}$}\nb-tran\-si\-tion}
\newcommand{\transitionsact}[1]{\transitionact{#1}s}

%
%
\newcommand{\aact}{a}
\newcommand{\bact}{b}

\newcommand{\actions}{A}

%
\newcommand{\aenv}{e}
\newcommand{\benv}{f}
\newcommand{\cenv}{g}
\newcommand{\aenvacc}{{\aenv'}}
\newcommand{\benvacc}{{\benv'}}

\newcommand{\aenvdacc}{{\aenv''}}

\newcommand{\aenvacci}[1]{{\aenv'_{#1}}}

\newcommand{\aenvi}{\iap{\aenv}}

%
\newcommand{\LTS}{LTS}
\newcommand{\LTSs}{LTSs} 

\newcommand{\aLTS}{\mathcal{T}}
\newcommand{\aLTSi}{\bap{\aLTS}}

\newcommand{\states}{\textrm{St}}

\newcommand{\astate}{s}
\newcommand{\astatei}{\bap{\astate}}
\newcommand{\astateacc}{\pap{\astate}{\prime}}
\newcommand{\astateacci}{\pbap{\astate}{\prime}}
\newcommand{\bstate}{t}

\newcommand{\bstateacc}{\pap{\bstate}{\prime}}

\newcommand{\aprocLTS}{\mathcal{P}}
\newcommand{\aenvLTS}{\mathcal{E}}

\newcommand{\procs}{\textrm{Pr}}
\newcommand{\procsi}{\bap{\procs}}

\newcommand{\envs}{\textrm{Env}}

%
\newcommand{\aproc}{p}
\newcommand{\bproc}{q}
\newcommand{\cproc}{r}
\newcommand{\dproc}{s}
\newcommand{\aprocacc}{{\pap{\aproc}{\prime}}}
\newcommand{\bprocacc}{{\pap{\bproc}{\prime}}}
\newcommand{\aproci}{\iap{\aproc}}
\newcommand{\bproci}{\iap{\bproc}}
\newcommand{\aprocacci}{\pbap{\aproc}{\prime}}

\newcommand{\univproc}{\textrm{U}}
\newcommand{\aunivproc}{{u}}
\newcommand{\aunivprocacc}{{\aunivproc'}}

\newcommand{\deadlock}{0}

%

\newcommand{\sjoin}{{\&}} 
\newcommand{\join}[2]{{#1}\hspace*{1.25pt}\sjoin\hspace*{1.25pt}{#2}}

\newcommand{\splus}{{+}}
\newcommand{\plus}{\mathrel{\splus}}

\newcommand{\actpref}[2]{{#1}\hspace*{0.5pt}{.}\hspace*{0.5pt}{#2}}

\newcommand{\sjoindot}{{\bap{\&}{\hspace*{-0.25pt}\bullet}}}
\newcommand{\joindot}[2]{{#1}\hspace*{1.25pt}\sjoindot\hspace*{1.75pt}{#2}}

%

%
\newcommand{\aparambisim}{\mathcal{B}}
\newcommand{\sabisim}{B}
\newcommand{\abisim}{\mathrel{\sabisim}}
\newcommand{\sabisimi}{\bap{\sabisim}}
\newcommand{\abisimi}[1]{\mathrel{\sabisimi{#1}}}
\newcommand{\aparamsim}{\mathcal{S}}
\newcommand{\sasim}{S}
\newcommand{\asim}{\mathrel{\sasim}}
\newcommand{\sasimi}{\bap{\sasim}}
\newcommand{\asimi}[1]{\mathrel{\sasimi{#1}}}

\newcommand{\sconvrel}[1]{\pap{#1}{\smallsmile}}
\newcommand{\convrel}[1]{\mathrel{\sconvrel{#1}}}

%

%
\newcommand{\slt}[1]{{\xrightarrow{#1}}}
\newcommand{\sltzero}{{\xrightarrow{}}}
\newcommand{\lt}[1]{\mathrel{\slt{#1}}}
\newcommand{\permitslt}[1]{\hspace*{1pt}\slt{#1}}
\newcommand{\snotpermitslt}[1]{\overset{#1}{\nrightarrow}}
\newcommand{\notpermitslt}[1]{\hspace*{1pt}\snotpermitslt{#1}}
\newcommand{\slti}[2]{{\xrightarrow{#1}_{#2}}}
\newcommand{\sltzeroi}[1]{\slti{}{#1}}
\newcommand{\lti}[2]{\mathrel{\slti{#1}{#2}}}

\newcommand{\sproclt}{\slt}
\newcommand{\sprocltzero}{\slt{}}
\newcommand{\proclt}[1]{\mathrel{\sproclt{#1}}}

\newcommand{\senvlt}[1]{{\xRightarrow{#1}}}
\newcommand{\senvltzero}{\senvlt{}}
\newcommand{\envlt}[1]{\mathrel{\senvlt{#1}}}

%
\newcommand{\ssimby}{\le}
\newcommand{\simby}{\mathrel{\ssimby}}
\newcommand{\scansimulate}{\ge}
\newcommand{\cansimulate}{\mathrel{\scansimulate}}
\newcommand{\ssimbyequiv}{{\le}\hspace*{-0.5pt}{\ge}}
\newcommand{\simbyequiv}{\mathrel{\ssimbyequiv}}
\newcommand{\sparamsimby}[1]{{\iap{\ssimby}{#1}}}
\newcommand{\paramsimby}[1]{\mathrel{\sparamsimby{#1}}}
\newcommand{\scanparamsim}[1]{{\iap{\scansimulate}{#1}}}
\newcommand{\canparamsim}[1]{\mathrel{\scanparamsim{#1}}}
\newcommand{\sparamsimequiv}[1]{{\iap{({\ssimby}{\ge})}{#1}}}
\newcommand{\paramsimequiv}[1]{\mathrel{\sparamsimequiv{#1}}}
\newcommand{\sjiparamsimby}[1]{{\bap{\ssimby}{\sjoin{#1}}}}
\newcommand{\jiparamsimby}[1]{\mathrel{\sjiparamsimby{#1}}}
\newcommand{\scanjiparamsim}[1]{{\bap{\scansimulate}{\sjoin{#1}}}}
\newcommand{\canjiparamsim}[1]{\mathrel{\scanjiparamsim{#1}}}
\newcommand{\sjiparamsimequiv}[1]{({\ssimby}\hspace*{-0.5pt}{\ge})_{\sjoin{#1}}}
\newcommand{\jiparamsimequiv}[1]{\mathrel{\sjiparamsimequiv{#1}}}
%
%

%

%
\newcommand{\siso}{{\simeq}}
\newcommand{\iso}{\mathrel{\siso}}

%
\newcommand{\sbisim}{\sim}
\newcommand{\bisim}{\mathrel{\sim}}
  \newcommand{\snotbisim}{\nsim}
  \newcommand{\notbisim}{\mathrel{\snotbisim}}

\newcommand{\sparambisim}[1]{{\bap{\sbisim}{#1}}}
\newcommand{\parambisim}[1]{\mathrel{\sparambisim{#1}}}
  \newcommand{\snotparambisim}[1]{{\bap{\snotbisim}{#1}}}
  \newcommand{\notparambisim}[1]{\mathrel{\snotparambisim{#1}}}
\newcommand{\sjiparambisim}[1]{{\sbisim_{\sjoin{#1}}}}  
\newcommand{\jiparambisim}[1]{\mathrel{\sjiparambisim{#1}}}
  \newcommand{\snotjiparambisim}[1]{{\bap{\snotbisim}{\sjoin{#1}}}}
  \newcommand{\notjiparambisim}[1]{\mathrel{\snotjiparambisim{#1}}}

\newcommand{\punc}[1]{\ensuremath{\hspace*{1.5pt}{#1}}}

%
\newcommand{\sdiscrpo}{{\sqsubseteq}}
\newcommand{\discrpo}{\mathrel{\sdiscrpo}}

  \numberwithin{equation}{section}

\usepackage{tikz,fp,tikz-cd,tkz-graph}
\usepackage{pgfplots}
\usetikzlibrary{arrows.meta,arrows,calc,automata,angles,quotes,matrix,
                positioning,intersections,through,shapes.geometric,shapes.symbols,shapes.multipart,through,trees,
                decorations.markings,decorations.text,
                } 

\title{Bisimilarity and Simulatability of Processes\\
       Parameterized by Join Interactions}
\author{Clemens Grabmayer
\institute{GSSI\\ L'Aquila, Italy}
\email{clemens.grabmayer@gssi.it}
\and
Maurizio Murgia 
\institute{GSSI\\ L'Aquila, Italy}
\email{\quad maurizio.murgia@gssi.it}
}

\begin{document}
\maketitle

\begin{abstract}
  Departing from Larsen's concept of parameterized bisimilarity of processes with respect to interaction with environments,
    we start an exploration of its natural weakening: 
      bisimilarity of unrestricted join interactions with environments. 
  Parameterized bisimilarity 
    relates processes $\aproc$ and $\bproc$ with respect to an environment $\aenv$
        if $\aproc$ and $\bproc$ behave bi\nb-similarly while joining---respectively the same---transitions from~$\aenv$.       
  The weakened variant relates processes $\aproc$ and $\bproc$ with respect to environment $\aenv$
    if the join-interaction processes $\join{p}{e}$ and $\join{q}{e}$ of $\aproc$ and $\bproc$ with $\aenv$ are~bisimilar.
  (Hereby join interactions $\join{r}{f}$ facilitate a step with label $a$ to $\join{r'}{f'}$     
    if and only if $r$ and $f$ permit $a$\nb-steps to $r'$ and $f'$,~respectively.)

  \smallskip  
  Join-in\-ter\-ac\-tion pa\-ram\-e\-ter\-ized (\jiparameterized) bi\-sim\-i\-larity 
    coincides with parameterized bi\-sim\-i\-larity 
    for deterministic environments, but that it is a coarser equivalence in~general.     
  We explain how Larsen's concept can be recovered from \jiparameterized\ bisimilarity by `determinizing' interactions. 
  We show that by adaptation to simulatability (simulation \preorder)
    the same concept arises:
      parameterized simulatability coincides with \jiparameterized\ simulatability.            
  For the discrimination \preorder\ of (ji-)pa\-ram\-e\-ter\-ized\ simulatability on environments 
    we obtain the same result as Larsen did for parameterized bisimilarity.
  Also, we give a modal-logic characterization of (ji-)pa\-ram\-e\-ter\-ized simulatability. 
  Finally we gather open problems, and provide an outlook on our current~related~work.

\end{abstract}


\section{\protect\label{intro}%
         Introduction}

With the motivation of developing flexible formal methods 
  for proving correctness of software programs incrementally,
    by showing compositional correctness under the formation of contexts, 
  Larsen in \cite{lars:1986,lars:1987} introduced parameterized bisimilarity of processes as a helpful concept.
It turned out, more recently, to be useful in an area with a similar motivation:
  contextual behavioural metrics (see work of Dal~Lago and Murgia \cite{lago:murg:2023,lago:murg:2023:arxiv}),  
    which measure differences between programs as distances by means of pseudo-metrics.
      This is because parameterized bisimilarity provides natural examples of contextual~behavioural~metrics.  
  
\smallskip  
The idea underlying parameterized bisimilarity is that the behaviors of two processes are compared with respect to a third process
  that represents a common environment, in which both processes are placed, and with which both can interact separately.
The environment is able to `consume' a transition from a process by performing a transition with the same action label, 
  after which both the process and the environment move to the target state of the interaction transition on their side, respectively.
Such consumption interactions are intended to continue as long as possible. 
Yet in case that an environment state permits no transition with the same label as the current process state
  (this is the case, for example, if the environment or the considered process is in a deadlock state),
    the consumption process stops. 
Given this setup, processes $\aproc$ and $\bproc$ are called bisimilar with respect to an environment process~$\aenv$
  if $\aproc$ and $\bproc$ behave in a bisimilar way (fulfilling forth and back conditions as typical for bisimulations)
    for any pair of runs of \emph{synchronous} consumption interactions of the environment with the two processes
      in which the environment takes the \emph{same} transitions on its side. 
    
  
It is distinctive for Larsen's concept of parameterized bisimilarity that 
  the forth and back conditions of two processes $\aproc$ and $\bproc$, and subsequently, of states reached via transitions from $\aproc$ and $\bproc$,
    have to be verified \emph{separately} for every run of the environment
      but while interacting \emph{synchronously} with both processes.
      Indeed, comparisons of possible further interactions have to be carried out in synchronicity of the interactions,
        as long as the environment can interact with either of the processes.
Thereby a mismatch is detected in the following situation:
  Suppose that 
   by successful comparisons in a synchronous run
     derivative processes $\aprocacc$ and $\bprocacc$ as well as derivative environment $\aenvacc$
     are reached.
      Suppose further that $\aenvacc$ permits, say, an \transitionact{\aact}
       that can be joined 
         only with an \transitionact{\aact} from $\aprocacc$, but not from $\bprocacc$ (in case $\bprocacc$ does not permit \transitionsact{\aact}).
           Then it has been determined that $\aproc$ and $\bproc$ are not bisimilar~with~respect~to~$\aenv$.

Parameterized bisimilarity thus compares the behavior of two processes with respect to \emph{controlled} and \emph{synchronous} interactions with an environment process.
For determining whether two processes $\aproc$ and $\bproc$ are bisimilar with respect to an environment process~$\aenv$
  it is necessary to observe the consumption interactions of $\aproc$ with $\aenv$ and of $\bproc$ with $\aenv$
    in a synchronous step-wise manner.    
It is not sufficient to be merely presented the completed processes that result from the interactions of $\aproc$ with $\aenv$, and of $\bproc$ with $\aenv$, respectively,
  and then to ask whether these results are bisimilar. 
  
There are, however, conceivable practical situations, in which one lacks sufficient control over the environment process in order to perform, or merely to analyze, 
  controlled and synchronous interactions with the considered processes.   
That is, situations in which a scientist has access only to the data of completed interactions of two processes with a given environment,
  but in which she lacks sufficient control over the environment in order to perform the two interactions synchronously in a step-by-step manner
    so that she can compare the behaviors that remain after each step.
    
Here we define, and start to investigate, the weaker concept of parameterized bisimilarity
  in which only the completed outcome processes of the possible interactions of two processes $\aproc$ and $\bproc$ with a given environment are compared
    as to whether they are bisimilar.
For this purpose we stipulate that the consumption interaction takes place in the form of a `join' operation ($\sjoin$) between each process and the environment,
  which produces transitions with the same action labels as the two interaction transitions. 
    Indeed, only transitions with the same label from a process and the environment can be `joined' to interact,
    and produce a resulting transition with again that same label.
We call the concept of bisimilarity between the join interactions of each process with the environment
  `\joininteraction\ parameterized' (\jiparameterized) bisimilarity. 
Larsen briefly mentions this concept at the end of the article \cite{lars:1987}.
  He calls it `perhaps more immediate', but excludes it from further consideration on the basis that it lacks some distinctive properties %
      that he was able to show for parameterized bisimilarity. 
      (For more details, see the paragraph `Larsen on \jiparameterized\ bisimilarity \ldots' in Section~\ref{concl}.)
Although that is true,
  it remains the case
             that \jiparameterized\ bisimilarity has a much easier, and appealingly natural definition,
             and that it may be of practical use in cases in which parameterized bisimilarity cannot~be~used.

While these considerations may seem abstract, we got interested in studying \jiparameterized\ bisimilarity
  when we made the following concrete observations (many of which are explained here later):
\begin{itemize}[itemsep=0ex]
  \item
    parameterized bisimilarity and \jiparameterized\ bisimilarity do 
    not coincide, so we try to understand the conceptual difference 
    between them, also by means of concrete examples
        (for an overview see Theorem~\ref{thm:incl:jiorparam:bisim:sim});
  \item
    noticing that, for deterministic environments, parameterized bisimilarity and \jiparameterized\ bi\-sim\-i\-la\-rity coincide
      (see~Proposition~\ref{prop:jiparamsim:equals:parasim});   
  \item  
    recognizing that also parameterized bisimilarity can be formulated as bisimilarity of a special kind ($\sjoindot$) of join interaction 
      (see Definition~\ref{def:joindot} and Lemma~\ref{lem:joindot:join});
  \item
    recognizing that simulation \preorder\ adaptations of the two concepts of parameterized bisimilarity
      and \jiparameterized\ bisimilarity
        do in fact coincide (see Proposition~\ref{prop:jiparamsim:equals:parasim}, \ref{it:1:prop:jiparamsim:equals:parasim});
  \item
    Larsens \modallogical\ characterization of parameterized bisimilarity
      for \jiorparameterized\ can be easily adapted to simulatability\footnotemark\addtocounter{footnote}{-1} (see Theorem~\ref{thm:log-char:paramsim:paramsimequiv} in Section~\ref{log-char});%
  \item
    a natural \modallogical\ characterization also for \jiparameterized\ bisimilarity is worth studying, albeit we do not
    provide a solution to it in this work
      (see current work item \ref{CW1} in Section~\ref{concl}).
\end{itemize}    
  
In Section~\ref{prelims} we summarize Larsen's definition and main results on parameterized bisimilarity,
  and we define parameterized simulatability.%
    \footnote{We use `simulatability' instead of `similarity' for `simulation preorder' for two reasons:
                to prevent the impression that a symmetrical relation were meant,
                  and to avoid a possible confusion with `simulation equivalence' that will also appear here.}
In Section~\ref{ji-param} we define \jiparameterized\ bisimilarity and simulatability,
  and develop basic results about their relationships with parameterized bisimilarity and simulatability.
    We discover that Larsen's theorem about the discrimination \preorder\ induced by parameterized bisimilarity
      has an analogous version for the discrimination \preorder\ that is induced by \jiorparameterized\ simulatability.
Then in Section~\ref{log-char}
  we specialize Larsen's \modallogical\ characterization of parameterized bisimilarity
    to \jiorparameterized\ simulatability. 
Finally in Section~\ref{concl} 
  we give a list that summarizes our results,
    we report about the literature and our ongoing related work,
      and we sketch further ideas~and~plans.

%
%
%

\section{\protect\label{prelims}%
         Preliminaries on Larsen's parameterized bisimilarity}

In this section we summarize definitions and results by Larsen in \cite{lars:1986,lars:1987}
  concerning parameterized bisimilarity, its induced discrimination \preorder, and a \modallogical\ characterization for it.
Additonally we define parameterized simulations, which relate to parameterized bisimulations
  in the same way as how simulations relate to bisimulations.   
We start with the basic concept of labeled~transition~system.   
  
\begin{defi}[\protect\LTSs]\label{def:LTS}
  A \emph{(simple) labeled transition system (\LTS)}  
    is a triple~$\aLTS = \tuple{\states,\actions,\sltzero}$
      that consists of a set $\states$ of \emph{states},
                       a set $\actions$ of \emph{actions},
                   and a ternary \emph{transition relation} $\sltzero \subseteq \states\times\actions\times\states$
                     that represents \labeledi{\actions} \emph{transitions} on the state set.                   
\end{defi}

For \LTSs\ we will use notation and terminology for basic properties as follows.
  For their stipulation,
  we let $\aLTS = \tuple{ \states,\actions,\sltzero }$ be an \LTS. 
  For $\tuple{\astate,\aact,\bstate}\in\sltzero$ we usually write $\astate \lt{\aact} \bstate$,
    and say that ``in state~$\astate$ there is a transition with label $\aact$ (symbolizing an action called $a$) to state~$\bstate$''.
      In this case we also say that $\bstate$ is an \emph{\derivativeact{\aact}} of $\astate$.
  For $\astate\in\states$ and $\aact\in\actions$ we write $\astate \permitslt{\aact}$  
    if there is an \transitionact{\aact} from $\astate$ in $\aLTS$,
      and $\astate \notpermitslt{\aact}$ if there is no \transitionact{\aact} from $\astate$ in $\aLTS$.
  
  We call an \LTS~$\aLTS = \tuple{ \states,\actions,\sltzero }$ \emph{deterministic} (respectively \emph{\imagefinite})
    if $ \cardinalityns{ \descsetexpns{ \astateacc }{ \astate \lt{\aact} \astateacc } } \le 1 $
       (and respectively if $\cardinalityns{ \descsetexpns{ \astateacc }{ \astate \lt{\aact} \astateacc } } < \infty $ )
       for all states $\astate\in\states$ and actions~$\aact\in\actions$,
  that is, if every state of $\aLTS$ has at most one \derivativeact{\aact} 
                                     (resp.\ has only finitely many \derivativesact{\aact}), for all $\aact\in\actions$. 
  We say that a state~$\astate\in\states$
   \emph{is deterministic} (resp.\ \emph{is \imagefinite})
     if every state of $\aLTS$ that is reachable from $\astate$ via a path of transitions 
               has at most one \derivativeact{\aact} 
       (resp.\ has only finitely many \derivativesact{\aact}), for all $\aact\in\actions$.

\smallskip

%
%

Following well-known intuitions, 
  bi-/simulations on such simple \LTSs\ can be defined as follows.  


\begin{defi}[bisimulation\hspace*{0.5pt}/\hspace*{0.5pt}bisimilar, simulation\hspace*{0.5pt}/\hspace*{0.5pt}simulated~by]%
    \label{def:bisim:sim}
  Let $\aLTS = \tuple{ \states, \actions, \sprocltzero }$ be an \LTS.
  
  \begin{enumerate}[label={(\roman{*})},align=right,itemsep=0ex]
    \item{}\label{it:1:def:bisim:sim}
      A \emph{bisimulation $\abisim$ on $\aLTS$}
        is a \nonempty\ binary relation $\sabisim \subseteq \states \times \states$ with the following property:
        If $\astate \abisim \bstate$ for $\astate,\bstate\in\states$,
            then the following two conditions hold:
          \begin{enumerate}[align=right,leftmargin=4em,itemsep=0ex]
            \item[(forth)]\label{forth:bisim}
              $ \;
                (\forall \astateacc\in\states)
                  \bigl[\, \astate \proclt{\aact} \astateacc 
                             \;\;\Longrightarrow\;\;
                               (\exists\, \bstateacc\in\states)
                                     [\, \bstate \proclt{\aact} \bstateacc 
                                            \logand
                                          \astateacc \abisim \bstateacc \,] \bigr ]
                                          $,
            \item[(back)]\label{back:bisim}  
              $ \;
                (\forall \bstateacc\in\states)
                  \bigl[\, \bstate \proclt{\aact} \bstateacc 
                             \;\;\Longrightarrow\;\;
                               (\exists\, \astateacc\in\states)
                                 [\, \astate \proclt{\aact} \astateacc 
                                        \logand
                                      \astateacc \abisim \bstateacc ] \bigr]
                                          $.

         \end{enumerate}
      For processes $\astate,\bstate\in\states$,
        we write $\astate \bisim \bstate$ and say that \emph{$\astate$ and $\bstate$ are bisimilar}
          if there is a bisimulation $\sabisim$ on $\aLTS$ such that $\astate \abisim \bstate$.
    \item{}\label{it:2:def:bisim:sim}  
      A \emph{simulation $\abisim$ on $\aLTS$}
        is a \nonempty\ binary relation $\sasim \subseteq \states \times \states$ with the following property: 
        If $\astate \abisim \bstate$ for $\astate,\bstate\in\states$,
          then the condition (forth) in \ref{it:1:def:bisim:sim} holds for $\sabisim \defdby \sasim$
          (but \underline{\smash{not}} necessarily the condition (back)).
      For processes $\astate,\bstate\in\states$,
        we write $\astate \simby \bstate$ and say that \emph{$\astate$ can be simulated by $\bstate$},
        and we write $\bstate \cansimulate \astate$ and say that \emph{$\bstate$ can simulate $\astate$},
          if there is a simulation $\sasim$ on $\aLTS$ such that $\astate \asim \bstate$.
  \end{enumerate}
\end{defi}

Rather than defining simulations as weakened versions of bisimulations as above,
  bisimulations can also be defined from simulations, as follows.
    A relation~$\abisim$ on an \LTS~$\aLTS$ is a bisimulation 
      if and only if
    both $\abisim$ as its converse relation $\convrel{\abisim} \defdby \descsetexpns{ \pair{\bstate}{\astate} }{ \astate \abisim \bstate }$
      are simulations on $\aLTS$.

For modeling processes whose behavior is studied 
  according to how they interact with environments,
    both processes and environments are formalized as \LTSs.
However, in order to indicate their intended roles for occurring \LTSs,
  we distinguish in notation, name, and in how they are referenced
    between \emph{process \LTSs} $\aprocLTS = \tuple{ \procs, \actions, \sprocltzero }$,
      whose states we call \emph{processes},
        and \emph{environment \LTSs} $\aenvLTS = \tuple{ \envs, \actions, \senvltzero }$,\vspace{-1.5pt}
          whose states we call \emph{environments}.
We follow Larsen \cite{lars:1986,lars:1987} in this terminology and in most of the notation.
Based on this distinction, Larsen defines parameterized bisimulation and bisimilarity~as~follows.

\begin{defi}[parameterized bisimulation (Larsen \cite{lars:1986,lars:1987})]%
    \label{def:parambisim}
  Let $\aprocLTS = \tuple{ \procs, \actions, \sprocltzero }$ be a process \LTS, and let $\aenvLTS = \tuple{ \envs, \actions, \senvltzero }$ be an environment \LTS. 
  An \emph{\parameterizedby{\aenvLTS} bisimulation $\aparambisim$ on $\aprocLTS$}\vspace{-2.5pt}
    is an \indexedby{\envs} family $\aparambisim = \familyns{\sabisimi{\benv}}{\benv\in\envs}$ of 
      \nonempty\ binary relations $\sabisimi{\benv} \subseteq \procs \times \procs$ 
      such that the following holds:\footnote{Note the occurrence of $\aenvacc$ (instead of $\aenv$) in $\sabisimi{\aenvacc}$ in both of the conditions (back) and (forth).}
      If $\aproc \abisimi{\aenv} \bproc$ for 
                                             $\aenv\in\envs$, 
        then if $\aenv \envlt{\aact} \aenvacc$ for $\aact\in\actions$ 
          the following conditions~hold:\vspace{0ex}   
      \begin{enumerate}[align=left,leftmargin=2em,itemsep=0ex]
        \item[(forth)]\label{forth:parabisim}
          $ (\forall \aprocacc\in\procs)
              \bigl[\, \aproc \proclt{\aact} \aprocacc 
                         \;\;\Longrightarrow\;\;
                           (\exists\, \bprocacc\in\procs)
                                 [\, \bproc \proclt{\aact} \bprocacc 
                                        \logand
                                      \aprocacc \abisimi{\aenvacc} \bprocacc \,] \bigr ]
                                      $, 
        \item[(back)]\label{back:parabisim}  
          $ (\forall \bprocacc\in\procs)
              \bigl[\, \bproc \proclt{\aact} \bprocacc 
                         \;\;\Longrightarrow\;\;
                           (\exists\, \aprocacc\in\procs)
                             [\, \aproc \proclt{\aact} \aprocacc 
                                    \logand
                                  \aprocacc \abisimi{\aenvacc} \bprocacc ] \bigr]
                                      $.
      \end{enumerate}
  For processes $\aproc,\bproc\in\procs$, and environments $\aenv\in\envs$
    we write $\aproc \parambisim{\aenv} \bproc$ and say that \emph{$\aproc$ and $\bproc$ are bisimilar with respect to $\aenv$}
      if there is an \parameterizedby{\aenvLTS} bisimulation $\aparambisim = \familyns{\sabisimi{\benv}}{\benv\in\envs}$ such that $\aproc \abisimi{\aenv} \bproc$.
\end{defi}

While simulation plays a crucial role in Larsen's main theorem on parameterized bisimulation, see Theorem~\ref{thm:char:discr-preorder:parambisim} below, 
  it is surprising that he did not also define the simulation version of this concept with only the forth condition from its progression conditions.
For the reason   
  that it can be linked directly to the simulation version of the concept of `\jiparameterized\ bisimulation'
    that we will introduce in Section~\ref{ji-param} (see Definition~\ref{def:jiparamsimby:jiparamsimequiv:jiparambisimby}),
    we also define `parameterized simulation' here.

\begin{defi}[parameterized simulation]\label{def:paramsim}
  Let $\aprocLTS = \tuple{ \procs, \actions, \sprocltzero }$ be a process \LTS,\vspace{-2.5pt}
    and let $\aenvLTS = \tuple{ \envs, \actions, \senvltzero }$ be an environment \LTS.  
  An \emph{\parameterizedby{\aenvLTS} simulation~$\aparamsim$ on $\aprocLTS$}
    is an \indexedby{\envs} family $\aparamsim = \familyns{\sasimi{\benv}}{\benv\in\envs}$ of \nonempty\ binary relations $\asimi{\benv} \,\subseteq\, \procs \times \procs$ 
      such that the following holds:
      If $\aproc \asimi{\aenv} \bproc$ 
                                       holds for $\aenv\in\envs$,
        then for all $\aact\in\actions$, if $\aenv \envlt{\aact} \aenvacc$
          the condition (forth) in Def.~\ref{def:parambisim} for $\sabisimi{\aenvacc} \defdby \asimi{\aenvacc}$ holds:
      \begin{enumerate}[align=left,leftmargin=2em,itemsep=0ex]
        \item[(forth)]
          $ (\forall \aprocacc\in\procs)
              \bigl[\, \aproc \proclt{\aact} \aprocacc 
                         \;\;\Longrightarrow\;\;
                           (\exists\, \bprocacc\in\procs)
                                 [\, \bproc \proclt{\aact} \bprocacc 
                                        \logand
                                      \aprocacc \asimi{\aenvacc} \bprocacc \,] \bigr ]
                                      $.
      \end{enumerate}
  For processes $\aproc,\bproc\in\procs$, and environments $\aenv\in\envs$
    we write $\aproc \paramsimby{\aenv} \bproc$ and say that \emph{$\aproc$ can be simulated by~$\bproc$ with respect to $\aenv$},
      and $\bproc \canparamsim{\aenv} \aproc$ and say that \emph{$\bproc$ can simulate $\aproc$ with respect to $\aenv$},
        if there is an \parameterizedby{\aenvLTS} simulation $\aparamsim = \familyns{\sasimi{\benv}}{\benv\in\envs}$ such that $\aproc \asimi{\aenv} \bproc$.
\end{defi}

\smallskip

Also parameterized bisimulations can be defined from parameterized simulations:
  for process \LTS~$\aprocLTS$, and environment \LTS~$\aenvLTS$,
    $\aparamsim = \familyns{\sasimi{\benv}}{\benv\in\envs}$ is an \parameterizedby{\aenvLTS} bisimulation on $\aprocLTS$
      if and only if
    $\familyns{\sasimi{\benv}}{\benv\in\envs}$, and
     the family $\familyns{\convrel{\sasimi{\benv}}}{\benv\in\envs}$ of converse relations of $\sasimi{\benv}$
       are \parameterizedby{\aenvLTS} simulations.

\smallskip
Larsen's main result on parameterized bisimilarity 
  concerns the \emph{discrimination \preorder~$\sdiscrpo$} that orders environments
    according to their power of discriminating between processes.
    It is defined, for a given process \LTS~$\aprocLTS = \tuple{\procs,\actions,\sprocltzero}$ 
                   and a given environment \LTS~$\aenvLTS = \tuple{\envs,\actions,\senvltzero}$
                   (so that\vspace{-2pt}, for all $\aenv\in\envs$,
                    the relations $\sparambisim{\aenv}$ are then fixed as subsets of $\procs\times\procs$),
    for all $\aenv,\benv\in\envs$ by:
    \begin{equation}\label{eq:def:discrpo}
      \aenv \discrpo \benv \;\funin \,\Longleftrightarrow\; \snotparambisim{\aenv} \subseteq \snotparambisim{\benv}
                             \qquad ( \Longleftrightarrow\; \sparambisim{\benv} \subseteq \sparambisim{\aenv} ) \punc{.}
    \end{equation}      
Now Larsen's result characterizes $\sdiscrpo$ as coinciding with the simulation \preorder~$\ssimby$ on environments.
For the `completeness' direction of this characterization to hold (``$\Leftarrow$'' in \eqref{eq:thm:char:discr-preorder:parambisim}),
  it is, however, necessary to assume that the underlying process \LTS\ is, as Larsen formulates it, `sufficiently rich' structurally.
    The weak natural assumption that he uses for the purpose of guaranteeing sufficient structural richness of any considered process \LTS\
      is that its set of processes is closed under action prefixing and finite~summation
        (see Definition~\ref{def:LTS:closure} in Section~\ref{ji-param}).
     
\begin{thm}[Larsen \cite{lars:1986,lars:1987}]\label{thm:char:discr-preorder:parambisim}
  The following logical equivalence holds,
    provided that the underlying process \LTS\ is closed under action prefixing and finite summation,
      for all \imagefinite\ environments~$\aenv,\,\benv\,$:
  \begin{align}
    \aenv \simby \benv
       \;\; & \:\Longleftrightarrow\;\;\;\:
     \aenv \discrpo \benv
      \quad
             ( {} \Longleftrightarrow\;\;
                    \sparambisim{\benv}
                      \;\subseteq\;
                    \sparambisim{\aenv} 
              ) 
      \punc{.}
        \label{eq:thm:char:discr-preorder:parambisim}
  \end{align}
  The implication ``$\Rightarrow$'' holds for all (thus also for not necessarily \imagefinite) environments $\aenv$ and $\benv$. 
    %
\end{thm}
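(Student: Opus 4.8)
The plan is to prove the two implications of \eqref{eq:thm:char:discr-preorder:parambisim} separately, working with the reformulation $\aenv\discrpo\benv \Leftrightarrow \sparambisim{\benv}\subseteq\sparambisim{\aenv}$. For the direction ``$\Rightarrow$'', which is claimed for arbitrary (not necessarily \imagefinite) environments, I would argue directly by transporting a parameterized bisimulation along a simulation. So suppose $\aenv\simby\benv$, witnessed by a simulation $\sasim$ on $\aenvLTS$ with $\aenv\asim\benv$, and suppose $\aproc\parambisim{\benv}\bproc$, witnessed by an $\aenvLTS$-parameterized bisimulation $\aparambisim=\familyns{\sabisimi{\benv}}{\benv\in\envs}$. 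I would then define a new family $\familyns{\sabisim'_{\cenv}}{\cenv\in\envs}$ by $\sabisim'_{\cenv} \defdby \descsetexpns{\pair{\aproc}{\bproc}}{\existsst{\benvacc\in\envs}{(\cenv \asim \benvacc \logand \aproc \abisimi{\benvacc} \bproc)}}$ (throwing in the diagonal to keep each relation \nonempty) and check that it is an $\aenvLTS$-parameterized bisimulation. The verification is routine: if $\aproc \mathrel{\sabisim'_{\cenv}} \bproc$ holds because $\cenv\asim\benvacc$ and $\aproc\abisimi{\benvacc}\bproc$, and $\cenv\envlt{a}\cenvacc$, then the forth clause of $\sasim$ yields a step $\benvacc\envlt{a}\benvdacc$ with $\cenvacc\asim\benvdacc$; the forth and back clauses of $\aparambisim$ at $\benvacc\envlt{a}\benvdacc$ then match the process steps up to $\abisimi{\benvdacc}$, and $\cenvacc\asim\benvdacc$ lifts the matched pairs back into $\sabisim'_{\cenvacc}$. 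Since $\aproc\abisimi{\benv}\bproc$ gives $\aproc\mathrel{\sabisim'_{\aenv}}\bproc$ via $\aenv\asim\benv$, this shows $\aproc\parambisim{\aenv}\bproc$, hence $\sparambisim{\benv}\subseteq\sparambisim{\aenv}$, i.e.\ $\aenv\discrpo\benv$. No \imagefinite ness is used in this direction.

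For the direction ``$\Leftarrow$'' I would assume both environments \imagefinite\ and prove that $\sdiscrpo$, restricted to \imagefinite\ environments, is itself a simulation on $\aenvLTS$; then $\aenv\discrpo\benv$ yields $\aenv\simby\benv$ at once. The only thing to check is the forth condition: if $\aenv\discrpo\benv$ and $\aenv\envlt{a}\aenvacc$, then $\benv\envlt{a}\benvacc$ for some $\benvacc$ with $\aenvacc\discrpo\benvacc$. I would argue this by contradiction. If no $\aact$-derivative of $\benv$ qualifies, then---using that $\benv$ is \imagefinite, so that its $\aact$-derivatives $\benvacc_1,\dots,\benvacc_n$ are finite in number---for each $i$ the failure $\aenvacc\not\discrpo\benvacc_i$ unfolds to a pair $(\aproc_i,\bproc_i)$ with $\aproc_i\parambisim{\benvacc_i}\bproc_i$ but $\aproc_i\notparambisim{\aenvacc}\bproc_i$. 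The aim is to assemble these finitely many witnesses into a single pair $\pair{\actpref{a}{\aproc}}{\actpref{a}{\bproc}}$, prefixed by $\aact$ and built with the closure of $\aprocLTS$ under action prefixing and finite summation (Definition~\ref{def:LTS:closure}), that is bisimilar with respect to $\benv$ but not with respect to $\aenv$, contradicting $\sparambisim{\benv}\subseteq\sparambisim{\aenv}$. Here $\actpref{a}{\aproc}\parambisim{\benv}\actpref{a}{\bproc}$ reduces exactly to $\aproc\parambisim{\benvacc_i}\bproc$ for every $i$, while $\actpref{a}{\aproc}\notparambisim{\aenv}\actpref{a}{\bproc}$ already follows from $\aproc\notparambisim{\aenvacc}\bproc$, so the whole problem collapses to the assembly. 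To supply the witnesses in a structured way I would run the argument by induction on the least $k$ for which $\aenv\mathrel{\iap{\ssimby}{k}}\benv$ fails, exploiting that on \imagefinite\ systems $\ssimby=\bigcap_{k\in\nat}\iap{\ssimby}{k}$, so that the pairs $(\aproc_i,\bproc_i)$ are delivered by the induction hypothesis at level $k-1$ for the derivatives of $\benv$.

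The hard part will be precisely this assembly, and it is the technical heart of Larsen's argument. The naive candidate, comparing $\aproc_1\plus\cdots\plus\aproc_n$ with $\bproc_1\plus\cdots\plus\bproc_n$, does \emph{not} work: under a single $\benvacc_i$ one must also match the transitions of the foreign summands $\aproc_j$ with $j\neq i$, yet the pair relation for index $j$ is available only with respect to $\benvacc_j$, not with respect to $\benvacc_i$; symmetrically, the plain sum can become accidentally bisimilar again with respect to $\aenvacc$, destroying the intended distinction---a small example with two $\aact$-derivatives enabling $\bact$ respectively $\cact$ already shows the sum collapsing to a bisimilar pair. The combination therefore has to be arranged so that, relative to each $\benvacc_i$, only the $i$-th witness is ``visible'', while relative to $\aenvacc$ the distinguishing move survives. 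This is exactly where both hypotheses are indispensable: \imagefinite ness bounds the construction to a \emph{finite} sum, keeping the assembled processes inside the finite-depth, prefix/sum-generated fragment of $\aprocLTS$, and the structural richness of $\aprocLTS$ supplies the prefixing and summation needed. I expect most of the effort to go into getting the bookkeeping of the synchronous consumption right, so that the foreign summands are provably neutralized under each $\benvacc_i$ without neutralizing the distinction under $\aenvacc$.
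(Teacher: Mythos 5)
Your ``$\Rightarrow$'' direction is correct and complete: transporting a \parameterizedby{\aenvLTS} bisimulation along a simulation witnessing $\aenv \simby \benv$ --- relating $\aproc$ and $\bproc$ at index $\cenv$ whenever $\cenv \asim \benvacc$ and $\aproc \abisimi{\benvacc} \bproc$ for some $\benvacc$, padded with the diagonal to keep every relation of the family \nonempty\ --- does produce an \parameterizedby{\aenvLTS} bisimulation, and you rightly observe that no image-finiteness is used there. For calibration: the paper itself contains no proof of this theorem at all; it is quoted from Larsen \cite{lars:1986,lars:1987}, and the text immediately after the statement singles out the direction ``$\Leftarrow$'' as ``an impressive, technical proof'' which Larsen found ``only after an intensive search that took several months''.

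That direction is exactly where your proposal has a genuine gap. The scaffolding you set up is reasonable: show that $\sdiscrpo$ on image-finite environments satisfies the forth condition, argue by contradiction, use image-finiteness of $\benv$ to get finitely many \derivativesact{\aact}~$\benvacc_1, \dots, \benvacc_n$, extract witness pairs $\aproc_i \parambisim{\benvacc_i} \bproc_i$ with $\aproc_i \notparambisim{\aenvacc} \bproc_i$, and stratify $\ssimby$ as $\bigcap_{k\in\nat} \iap{\ssimby}{k}$ to drive an induction. Your diagnosis of why the naive candidate $\aproc_1 \plus \dots \plus \aproc_n$ versus $\bproc_1 \plus \dots \plus \bproc_n$ fails (under a fixed $\benvacc_i$ the foreign summands $\aproc_j$, $j \neq i$, must also be matched, about which the witnesses say nothing) is also accurate. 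But the assembly itself --- the construction of one pair of processes, built by prefixing and finite summation, that is provably related by $\sparambisim{\benv}$ yet not by $\sparambisim{\aenv}$ --- is never given; you defer it as bookkeeping you ``expect'' to work out. That assembly is not bookkeeping: it is the entire mathematical content of the hard direction, precisely the construction that, by the paper's own account, cost Larsen months of search, and nothing in your text indicates how to neutralize the foreign summands under each $\benvacc_i$ while preserving the distinction under $\aenvacc$ (nor how to treat the degenerate case $n = 0$, where $\benv$ permits no \transitionsact{\aact} at all). Until that construction is exhibited and verified, the implication ``$\Leftarrow$'' in \eqref{eq:thm:char:discr-preorder:parambisim} is not proved; what you have is a correct proof of the easy half and an honest, but incomplete, plan for the hard half.
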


Specifically for the direction ``$\Leftarrow$'' in \eqref{eq:thm:char:discr-preorder:parambisim}
  Larsen provides an impressive, technical proof,
    which he found, as he writes, only after an intensive search that took several months.


\medskip
We now turn to \modallogical\ characterizations of the relations of being able to be simulated by~$\ssimby$, of bisimilarity~$\sbisim$, 
  and of parameterized bisimilarity~$\sparambisim{\aenv}$.
For expressing properties of \LTSs\  
  such as the existence of a transition with label~$\aact$ from a given state such that at the target state property $\aformi{0}$ holds,
    modal formulas should include a diamond modality $\diamondact{\aact}$ to build formulas like $\diamondact{\aact} \aformi{0}$. 
The set $\forms$ of simple modal formulas
   (and the set $\posforms$ of positive formulas)
  are now defined with these diamond modalities and basic propositional connectives (resp.\ such connectives except negation) as constructors.

\begin{defi}[modal formulas]\label{def:forms}
  For given sets $\actions$ of actions,
    we define the following classes of formulas:
      $\posformsof{\actions}$ of positive formulas,
        and
      $\formsof{\actions}$ of (simple modal logic) formulas,
        via the following grammars:
  \begin{alignat}{3}
    \posformsof{\actions}
      & \qquad\quad & 
        \aform 
          & \;\BNFdefdby\;
            \True 
              \BNFor \aform \logand \aform
              \BNFor \diamondact{\aact} \aform
              & \text{(where $\aact\in\actions$),}
      \label{eq:1:def:forms}          
    \displaybreak[0]\\
    \formsof{\actions}
      & & 
        \aform 
          & \;\BNFdefdby\;
            \True 
              \BNFor \lognot{\aform}
              \BNFor \aform \logand \aform
              \BNFor \diamondact{\aact} \aform
              \qquad
              & \text{(where $\aact\in\actions$).}
      \label{eq:2:def:forms} 
  \end{alignat}
  As above, we usually will keep the underlying set $\actions$ of actions implicit,
  and write
  $\posforms$ and $\forms$
  for
  $\posformsof{\actions}$ and $\formsof{\actions}$, 
  respectively.
\end{defi}

\begin{defi}[satisfaction relation, sets of satisfied formulas]\label{def:satisfies}
  Let $\aprocLTS = \tuple{\procs,\actions,\sprocltzero}$ be a process \LTS. 
  The \emph{satisfaction relation} $\ssatisfies \subseteq \procs\times\forms$ \emph{on $\aprocLTS$}
    is defined by the following clauses:
  \begin{align*}
    \aproc \satisfies \True
      \;\; \funin & \Longleftrightarrow \;\;
        \aproc\in\procs \punc{,}
    &
    \aproc \satisfies \aformi{1} \logand \aformi{2}
      \;\; \funin & \Longleftrightarrow \;\;
        \aproc \satisfies \aformi{1}
          \text{ and }
        \aproc \satisfies \aformi{2} \punc{,}
    \\
    \aproc \satisfies \lognot{\aformi{0}}
      \;\; \funin & \Longleftrightarrow \;\;
        \aproc \notsatisfies \aformi{0} \punc{,} 
    &  
    \aproc \satisfies \diamondact{\aact} \aformi{0}
      \;\; \funin & \Longleftrightarrow \;\;
        \existsstzero{\aprocacc\in\procs\,}
          (\, \aproc \proclt{\aact} \aprocacc
                \text{ and }
              \aprocacc \satisfies \aformi{0} \,)  \punc{.}
  \end{align*}
  by induction on the structure of formulas in $\forms\!$.
  For all processes $\aproc\in\procs$ we define by:
  \begin{align*}
    \posformsof{\aproc}
      & \,\defdby\,
          \descsetexp{ \aform\in\posforms }{ \aproc \satisfies \aform } \punc{,}
    &
    \formsof{\aproc}
      & \,\defdby\,
          \descsetexp{ \aform\in\forms }{ \aproc \satisfies \aform } \punc{,}
  \end{align*}
  the set $\posformsof{\aproc}$ of positive formulas in $\posforms$ that are satisfied in $\aproc$, and respectively,
  the set $\formsof{\aproc}$ of formulas in $\forms$ that are satisfied in $\aproc$.
     %
\end{defi}

The classical characterization result via \modallogical\ formulas
  of the relations bisimilarity~$\sbisim$, and `being able to be simulated by'~$\ssimby$
   is the following well-known theorem by Hennessy and Milner.

\begin{thm}[Hennessy, Milner \cite{henn:miln:1985}]\label{thm:log-char:sim:bisim}
  For all \imagefinite\ processes $\aproc$ and $\bproc$ the following
    statements~hold:
  \begin{align}
    \begin{split}
      \aproc \simby \bproc 
        & \;\;\Longleftrightarrow\;\; 
          \posformsof{\aproc}
            \subseteq
          \posformsof{\bproc} \punc{,}
    \end{split}
    \label{eq:sim:thm:log-char:sim:bisim}  
    \\
    \begin{split}
      \aproc \bisim \bproc 
        & \;\;\Longleftrightarrow\;\;  
          \formsof{\aproc}
            =
          \formsof{\bproc} \punc{.}
    \end{split}
    \label{eq:bisim:thm:log-char:sim:bisim} 
  \end{align}
  The implications ``$\Rightarrow$'' hold for all (thus also for not necessarily \imagefinite) processes $\aproc$ and $\bproc$.
\end{thm}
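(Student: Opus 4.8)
The plan is to prove both biconditionals by the standard two-part argument: a \emph{soundness} direction (``$\Rightarrow$'') by induction on the structure of formulas, valid for arbitrary processes, and a \emph{completeness} direction (``$\Leftarrow$'') by exhibiting a (bi)simulation, where \imagefinite ness is used. I would treat the simulation statement \eqref{eq:sim:thm:log-char:sim:bisim} and the bisimulation statement \eqref{eq:bisim:thm:log-char:sim:bisim} in parallel, since the latter only adds the negation clause and the symmetric (back) obligation.

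For ``$\Rightarrow$'' in \eqref{eq:sim:thm:log-char:sim:bisim}, I would fix a simulation $\sasim$ witnessing $\aproc \simby \bproc$ and prove, by induction on $\aform \in \posforms$ and for all pairs related by $\sasim$, that $\aproc \satisfies \aform$ implies $\bproc \satisfies \aform$. The cases $\True$ and $\aformi{1} \logand \aformi{2}$ are immediate; the only interesting case is $\diamondact{\aact}\aformi{0}$, where a witnessing transition $\aproc \proclt{\aact} \aprocacc$ with $\aprocacc \satisfies \aformi{0}$ is matched, by the (forth) clause of Definition~\ref{def:bisim:sim}, by some $\bproc \proclt{\aact} \bprocacc$ with $\aprocacc \asim \bprocacc$, whence the induction hypothesis applied to $\pair{\aprocacc}{\bprocacc}$ and the subformula $\aformi{0}$ gives $\bprocacc \satisfies \aformi{0}$, so $\bproc \satisfies \diamondact{\aact}\aformi{0}$. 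For \eqref{eq:bisim:thm:log-char:sim:bisim} the same induction is run over $\forms$, proving the \emph{biconditional} $\aproc \satisfies \aform \Leftrightarrow \bproc \satisfies \aform$ for bisimilar pairs; the negation clause is then handled by contraposition through the induction hypothesis, and the diamond case invokes (forth) and (back) symmetrically. These arguments use no finiteness assumption, which is exactly why ``$\Rightarrow$'' holds for all processes.

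For ``$\Leftarrow$'' in \eqref{eq:sim:thm:log-char:sim:bisim}, I would show that $\sasim \defdby \descsetexpns{\pair{\aproc}{\bproc}}{\posformsof{\aproc} \subseteq \posformsof{\bproc}}$ is a simulation. Given $\aproc \asim \bproc$ and a transition $\aproc \proclt{\aact} \aprocacc$, \imagefinite ness of $\bproc$ yields only finitely many \derivativesact{\aact} $\bprocacc_{1}, \ldots, \bprocacc_{n}$. Arguing by contradiction, if none satisfied $\posformsof{\aprocacc} \subseteq \posformsof{\bprocacc_{i}}$, then for each $i$ there would be a positive $\bformi{i}$ with $\aprocacc \satisfies \bformi{i}$ but $\bprocacc_{i} \notsatisfies \bformi{i}$; setting $\bform \defdby \bigand_{i=1}^{n} \bformi{i}$ (the empty conjunction being $\True$ when $n = 0$) gives $\aprocacc \satisfies \bform$, hence $\aproc \satisfies \diamondact{\aact}\bform$, so by $\posformsof{\aproc} \subseteq \posformsof{\bproc}$ also $\bproc \satisfies \diamondact{\aact}\bform$, forcing some $\bprocacc_{j} \satisfies \bform$ and thus $\bprocacc_{j} \satisfies \bformi{j}$ --- a contradiction. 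Hence some derivative fulfils the simulation requirement. The bisimulation case is analogous with $\sabisim \defdby \descsetexpns{\pair{\aproc}{\bproc}}{\formsof{\aproc} = \formsof{\bproc}}$, verifying (forth) and, symmetrically, (back); since $\forms$ contains negation, a distinguishing formula of either polarity can be normalized to one true at the ``source'' state, so the same finite-conjunction construction applies.

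The main obstacle is the completeness direction, and specifically the reliance on \imagefinite ness to keep $\bigand_{i=1}^{n} \bformi{i}$ a genuine (finite) formula: without this bound the set of \derivativesact{\aact} of $\bproc$ could be infinite, and no single formula need separate $\aprocacc$ from all of them, which is precisely why the ``$\Leftarrow$'' implications are stated only for \imagefinite\ processes. A secondary point to get right is the polarity of the distinguishing formulas in the simulation case: because $\posforms$ lacks negation, I must note that $\posformsof{\aprocacc} \not\subseteq \posformsof{\bprocacc_{i}}$ already supplies a \emph{positive} formula true at $\aprocacc$ and false at $\bprocacc_{i}$ in the required direction, so no negation is needed there.
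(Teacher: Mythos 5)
Your proof is correct, and since the paper gives no proof of this theorem---it cites it as the classical result of Hennessy and Milner---your argument is exactly the standard one it relies on: soundness (``$\Rightarrow$'') by structural induction on formulas using the (forth)/(back) clauses, and completeness (``$\Leftarrow$'') by showing that $\descsetexpns{\pair{\aproc}{\bproc}}{\posformsof{\aproc}\subseteq\posformsof{\bproc}}$ and $\descsetexpns{\pair{\aproc}{\bproc}}{\formsof{\aproc}=\formsof{\bproc}}$ are themselves a simulation and a bisimulation, respectively, via the finite-conjunction argument that image-finiteness makes available. You also get the two delicate points right: the degenerate case $n=0$, where the empty conjunction $\True$ yields $\bproc \satisfies \diamondact{\aact}\True$ and hence a contradiction with $\bproc$ having no $\aact$-derivatives, and the fact that polarity normalization via negation is needed, and available, only in the bisimulation case.
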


For a \modallogical\ characterization of parameterized bisimilarity,
  the concept of negation closure of positive formulas will be needed. 
    By departing slightly from Larsen's exposition in \cite{lars:1986,lars:1987} 
      we define it via a projection of general formulas to positive formulas.

\begin{defi}[positive-formula projection, negation closure]
  The \emph{pos\-i\-tive-for\-mu\-la projection} is the function 
    $\posform{\cdot} \funin \forms \longrightarrow \posforms$
      that maps formulas $\aform\in\forms$ to positive formulas $\posform{\aform}\in\posforms$,
  and that is defined by induction on the structure of $\aform$ via the following clauses,
    for all formulas $\aformi{0}, \aformi{1}, \aformi{2}\in\forms\,$:
  \begin{align*}
    \posform{\top}
      & \defdby 
          \top \,\punc{,}
    &
    \posform{\lognot{\aformi{0}}}
      & \defdby
          \posform{\aformi{0}} \,\punc{,}       
    &
    \posform{ \aformi{1} \logand \aformi{2} }
      & \defdby
          \posform{ \aformi{1} } \logand \posform{ \aformi{2} } \,\punc{,}
    &
    \posform{ \diamondact{\aact} \aformi{0} } 
      & \defdby
          \diamondact{\aact} \posform{ \aformi{0} } \,\punc{.}          
  \end{align*}
  
  For every positive formula~$\aform\in\posforms$,
    we define by
      $\negclosure{\aform} \defdby \descsetexpns{\bform\in\forms}
                                                {\posform{\bform} = \aform }$
    the \emph{negation closure of~$\aform$ in $\forms$}.
  For subclasses $\aclassforms \subseteq \posforms$ of positive formulas,
    we define the \emph{negation closure of $\aclassforms$ in $\forms$}
      by
      $\negclosure{\aclassforms} 
         \defdby
           \bigcup \descsetexpns{ \negclosure{\aform} }{ \aform\in\aclassforms }
         = \descsetexp{ \bform\in\forms }
                      { \posform{\bform} \in \aclassforms }$.
\end{defi}

Larsen presents \cite{lars:1986,lars:1987} the following modal characterization theorem of parameterized bisimilarity,
  which he attributes to Colin Stirling. The characterization restricts consideration for possible discriminating formulas 
    to those in the negation-closure of positive formulas that are satisfied by the environment.

\begin{thm}[Stirling and Larsen, \cite{lars:1986,lars:1987}]\label{thm:log-char:parambisim}
  For all \imagefinite\ processes $\aproc$, $\bproc$, and environments $\aenv\,$:$\,$\footnotemark 
    \footnotetext{The condition of being \imagefinite\ can be dropped for the environments~$\aenv$.
                  This can be verified by means of a careful analysis of the proof in \cite{lars:1986,lars:1987}
                    for this logical characterization.}
  \begin{equation}
    \begin{aligned}[t]
      \aproc \parambisim{\aenv} \bproc 
        & \;\;\overset{(\star)}{\Longleftrightarrow}\;\; 
          \formsof{\aproc}
            \cap
          \negclosure{\posformsof{\aenv}}
            =
          \formsof{\bproc} 
            \cap
          \negclosure{\posformsof{\aenv}}  
          \\
        & \;\;\Longleftrightarrow\;\;   
          \forallstzero{\aformi{0}\in\posforms}\,
            \bigl[\,
              \aenv \satisfies \aformi{0}
                \;\:\Rightarrow\:\;
                  \forallstzero{\aform\in\!\!\negclosure{\aformi{0}}}\,
                    \bigl(\,
                      \aproc \satisfies \aform
                        \;\:\Leftrightarrow\;\:
                      \bproc \satisfies \aform
                    \,\bigr)
              \,\bigr] \punc{.}
    \end{aligned}
  \end{equation}
  The implication ``$\Rightarrow$'' in $(\star)$ holds for all (thus also for not necessarily \imagefinite) $\aproc$, $\bproc$, and $\aenv$.
\end{thm}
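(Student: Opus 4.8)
The plan is to treat the two bi-implications separately. The second one is essentially bookkeeping: unfolding $\negclosure{\posformsof{\aenv}} = \descsetexpns{\bform\in\forms}{\posform{\bform}\in\posformsof{\aenv}}$ together with $\posformsof{\aenv} = \descsetexpns{\aformi{0}\in\posforms}{\aenv\satisfies\aformi{0}}$ shows that equality of the two intersections says precisely that $\aproc$ and $\bproc$ agree on every $\bform\in\forms$ whose projection $\posform{\bform}$ is satisfied by $\aenv$; grouping such $\bform$ by the value $\aformi{0}\defdby\posform{\bform}$ yields the innermost quantified form. So all the real work lies in $(\star)$.

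For the implication ``$\Rightarrow$'' of $(\star)$, which I expect to hold without any image-finiteness assumption, I would prove by induction on the structure of $\bform\in\forms$ the strengthened statement: \emph{for all $\aprocacc,\bprocacc$ and all $\benv$ with $\aprocacc\parambisim{\benv}\bprocacc$ and $\benv\satisfies\posform{\bform}$, one has $\aprocacc\satisfies\bform$ iff $\bprocacc\satisfies\bform$}. The cases $\top$, conjunction, and negation are immediate, using $\posform{\bformi{1}\logand\bformi{2}}=\posform{\bformi{1}}\logand\posform{\bformi{2}}$ and $\posform{\lognot{\bformi{0}}}=\posform{\bformi{0}}$ to hand the induction hypothesis the correct environment premise. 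The one informative case is $\bform=\diamondact{\aact}\bformi{0}$, where $\posform{\bform}=\diamondact{\aact}\posform{\bformi{0}}$, so $\benv\satisfies\posform{\bform}$ yields an environment step $\benv\envlt{\aact}\benvacc$ with $\benvacc\satisfies\posform{\bformi{0}}$. If $\aprocacc\satisfies\diamondact{\aact}\bformi{0}$, a witnessing step $\aprocacc\proclt{\aact}p''$ with $p''\satisfies\bformi{0}$ is matched, via the (forth) clause of the parameterized bisimulation invoked for exactly this step $\benv\envlt{\aact}\benvacc$, by some $\bprocacc\proclt{\aact}q''$ with $p''\parambisim{\benvacc}q''$; the induction hypothesis at the triple $(p'',q'',\benvacc)$ then gives $q''\satisfies\bformi{0}$ and hence $\bprocacc\satisfies\diamondact{\aact}\bformi{0}$. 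The converse direction is symmetric, using (back).

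For ``$\Leftarrow$'' I would exhibit a witnessing \parameterizedby{\aenvLTS} bisimulation $\aparambisim=\familyns{\sabisimi{\benv}}{\benv\in\envs}$ on $\aprocLTS$, defined by declaring $\aprocacc\abisimi{\benv}\bprocacc$ to hold iff $\formsof{\aprocacc}\cap\negclosure{\posformsof{\benv}}=\formsof{\bprocacc}\cap\negclosure{\posformsof{\benv}}$; each $\sabisimi{\benv}$ is \nonempty\ since it contains the diagonal. It remains to verify the (forth) clause (the (back) clause being symmetric, and both using that derivatives of \imagefinite\ processes are again \imagefinite): assuming $\aprocacc\abisimi{\benv}\bprocacc$, a step $\benv\envlt{\aact}\benvacc$, and a step $\aprocacc\proclt{\aact}p''$, I must find $\bprocacc\proclt{\aact}q''$ with $p''\abisimi{\benvacc}q''$. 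Suppose, towards a contradiction, that no $\aact$-derivative of $\bprocacc$ is so related. Because $\posform{\lognot{\bform}}=\posform{\bform}$, the class $\negclosure{\posformsof{\benvacc}}$ is closed under negation, so for each $\aact$-derivative $q''$ of $\bprocacc$ the failure $p''\not\abisimi{\benvacc}q''$ can be witnessed by a formula $\chi_{q''}\in\negclosure{\posformsof{\benvacc}}$ with $p''\satisfies\chi_{q''}$ and $q''\notsatisfies\chi_{q''}$ (negating the witness if necessary). As $\bprocacc$ is \imagefinite, there are finitely many such $q''$, and the conjunction $\chi$ of all the $\chi_{q''}$ then has $p''\satisfies\chi$ and $q''\notsatisfies\chi$ for every $\aact$-derivative $q''$ of $\bprocacc$, while $\benvacc\satisfies\posform{\chi}$ because projection commutes with conjunction and $\benvacc$ satisfies each $\posform{\chi_{q''}}$. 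Then $\diamondact{\aact}\chi$ fulfils $\aprocacc\satisfies\diamondact{\aact}\chi$ but $\bprocacc\notsatisfies\diamondact{\aact}\chi$, while $\posform{\diamondact{\aact}\chi}=\diamondact{\aact}\posform{\chi}$ is satisfied by $\benv$ thanks to $\benv\envlt{\aact}\benvacc$; hence $\diamondact{\aact}\chi\in\negclosure{\posformsof{\benv}}$, contradicting $\aprocacc\abisimi{\benv}\bprocacc$. Specialising to $\benv=\aenv$, $\aprocacc=\aproc$, $\bprocacc=\bproc$ delivers $\aproc\parambisim{\aenv}\bproc$.

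The main obstacle is this completeness direction, and within it the need to keep the distinguishing formula inside $\negclosure{\posformsof{\benv}}$ at every stage. Two facts make this controllable: negation closure is stable under negation (allowing every distinguishing witness to be oriented as true at $p''$ and false at $q''$), and $\posform{\cdot}$ is a homomorphism for $\logand$ and for $\diamondact{\aact}$ (so that conjoining the finitely many witnesses---where image-finiteness is essential---and prefixing a diamond both preserve membership in the relevant negation closure). Everything else reduces to the standard Hennessy--Milner bookkeeping.
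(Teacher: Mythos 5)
Your proposal is correct in substance, but note that the paper does not actually prove Theorem~\ref{thm:log-char:parambisim}: it imports the statement from Larsen \cite{lars:1986,lars:1987} (crediting Stirling), and the only proof-related content it supplies is a sketch, at the end of Section~\ref{log-char}, of an \emph{alternative} derivation, namely: combine the characterization $\aproc \parambisim{\aenv} \bproc \Longleftrightarrow (\joindot{\aproc}{\aenv}) \bisim (\joindot{\bproc}{\aenv})$ from Lemma~\ref{lem:joindot:join} with the $\sjoindot$\nb-specialization $\actprojform{\formsof{\joindot{\aproc}{\aenv}}} = \formsof{\aproc} \cap \negclosure{\posformsof{\aenv}}$ of Lemma~\ref{lem:posforms:of:joins}, and then conclude by the Hennessy--Milner Theorem~\ref{thm:log-char:sim:bisim}, in analogy with the proof of Theorem~\ref{thm:log-char:paramsim:paramsimequiv}. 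Your route is the direct, self-contained one (essentially a reconstruction of Stirling and Larsen's original argument): soundness by induction on formulas, with the induction statement relativized by the premise $\benv \satisfies \posform{\bform}$ so that the projection identities hand each case the correct environment, and completeness via the canonical family $\sabisimi{\benv}$ defined from agreement on $\negclosure{\posformsof{\benv}}$, with a finite distinguishing conjunction placed under a diamond. Your decomposition also makes transparent exactly where image-finiteness enters (only on the process side, and only for completeness), which matches both the footnote and the final sentence of the theorem; the paper's sketched route buys brevity by recycling lemmas it proves anyway, but rests on the $\sjoindot$\nb-analogue of Lemma~\ref{lem:posforms:of:joins}, which the paper states without proof.

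Two small repairs are needed in your completeness direction. First, as literally written, $\sabisimi{\benv}$ relates \emph{arbitrary} processes that agree on $\negclosure{\posformsof{\benv}}$, yet your verification of (forth) uses that $\bprocacc$ has only finitely many \derivativesact{\aact}; for pairs of non-\imagefinite\ processes the clause can fail in general (the usual Hennessy--Milner counterexamples embed here, e.g.\ relative to an environment satisfying every positive formula), so the family as defined need not be a parameterized bisimulation. The fix is standard and costs one line: restrict each $\sabisimi{\benv}$ to pairs of \imagefinite\ processes. Since the paper's notion of an \imagefinite\ state is hereditary, derivatives of \imagefinite\ processes are again \imagefinite\ --- precisely the fact you invoke --- so the restricted family still satisfies the progression clauses, each $\sabisimi{\benv}$ remains \nonempty\ (it contains the diagonal on \imagefinite\ processes), and the family still contains $\pair{\aproc}{\bproc}$. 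Second, when $\bprocacc$ has no \derivativesact{\aact} at all your conjunction is empty; take $\chi \defdby \True$ in that case, so that $\diamondact{\aact}\True$ is the distinguishing formula. With these adjustments your argument is complete.
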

%

\section{\protect\label{ji-param}%
         Join-Interaction parameterized simulatability and bisimilarity}
In this section we first define 
  the weaker versions of parameterized bisimilarity and simulatability (simulation \preorder) 
    that are based on a definition of `\joininteraction' of \LTSs\ (Definition~\ref{def:joindot}):
      \jiparameterized\ simulatability and bisimilarity
        (Definition~\ref{def:jiparamsimby:jiparamsimequiv:jiparambisimby}).
Then we investigate the basic relationship between the new concepts
  and parameterized bisimilarity and simulatability (Theorem~\ref{thm:incl:jiorparam:bisim:sim}),
    and explain that also parameterized bisimilarity and simulatability can be viewed
      as bisimilarity and simulatability, resp., 
        with respect to a special kind of join operation (Lemma~\ref{lem:joindot:join}).
Finally we present a theorem (Theorem~\ref{thm:char:discr-preorder:jiparamsim:jiparamsimequiv})
  that characterizes the discrimination \preorder\ of \jiorparameterized\ simulatability
    in analogy with Larsen's characterization of the discrimination \preorder\
      of parameterized bisimilarity, see Theorem~\ref{thm:char:discr-preorder:parambisim}.

\subsection*{Definitions of \protect\jiparameterized\ simulatablity and bisimilarity}
 
In order to prepare for the definition of \jiparameterized\ bisimilarity
  we define `\joininteraction\ LTSs' by using an operation of processes that Larsen calls `join' \cite[p.$\,$43,44]{lars:1986}.
Later we also need the subsequent stipulation of when a single \LTS\ is closed under the `join' operation. 

\begin{defi}[\joininteraction\ of \protect\LTSs]\label{def:join}
  Let $\aLTSi{1} = \tuple{\procsi{1},\actions,\sltzeroi{1}}$
    and  $\aLTSi{2} = \tuple{\procsi{2},\actions,\sltzeroi{2}}$
      two \LTSs. 
  By the \emph{\joininteraction\ of $\aLTSi{1}$ and $\aLTSi{2}$}
    we mean the \LTS~$\join{\aLTSi{1}}{\aLTSi{2}} = \tuple{ 
                                                            \join{\procsi{1}}{\procsi{2}}, \actions, \sltzero }$
      where $ \sltzero \subseteq 
                                 (\join{\procsi{1}}{\procsi{2}})\times\actions\times
                                                                                    (\join{\procsi{1}}{\procsi{2}}) $
      with $
            \join{\procsi{1}}{\procsi{2}} \defdby \descsetexpns{ \join{\aproci{1}}{\aproci{2}} }{ \aproci{1}\in\procsi{1},\, \aproci{2}\in\procsi{2} }$ 
      is defined via 
        the 
            transition system rule: 
      \begin{equation*}
        \AxiomC{$ \aproci{1} \lti{\aact}{1} \aprocacci{1} $}
        \AxiomC{$ \aproci{2} \lti{\aact}{2} \aprocacci{2} $}
        \BinaryInfC{$ \join{\aproci{1}}{\aproci{2}}  \lt{\aact} \join{\aprocacci{1}}{\aprocacci{2}} $}
        \DisplayProof
      \end{equation*}
      %
      The symbol ``$\sjoin$'' in processes $\join{\aproci{1}}{\aproci{2}}$ of 
                                                                      $\join{\aLTSi{1}}{\aLTSi{2}}$ 
        is to be understood as a term constructor that from any two processes $\bproci{1}$ in $\procsi{1}$ and $\bproci{2}$ in $\procsi{2}$
          constructs a formal \joininteraction\ process $\join{\aproci{1}}{\aproci{2}}$ in $\join{\procsi{1}}{\procsi{2}}$. 
\end{defi}

\begin{defi}[closure of an \protect\LTS\ under action prefixing, sum, and join]\label{def:LTS:closure}
  Let $\aLTS = \tuple{\states,\actions,\sltzero}$ be a labeled transition system.
  We say that $\aLTS$ \emph{is closed under action prefixing}, resp.\ \emph{under sum}, and resp.\ \emph{under join}
    if for every states $\astate,\astatei{1},\astatei{2}\in\states$
      there exists a state $\actpref{\aact}{\astate}\in\states$ for all $\aact\in\actions$,
      resp.\ there exists a state $\astatei{1} \plus \astatei{2}\in\states$,
      and resp.\ there exists a state $\join{\astatei{1}}{\astatei{2}} \in \states$
        such that the respective transition rule below is satisfied:
  \begin{center}
    $
    \begin{aligned}
      \AxiomC{$ \phantom{\actpref{\aact}{\astate} \lt{\aact} \astate}\rule{0pt}{13.5pt} $}
      \UnaryInfC{$ \actpref{\aact}{\astate} \lt{\aact} \astate $}
      \DisplayProof
      & \qquad &
      \AxiomC{$ \astatei{i} \lt{\aact} \astateacci{i} $}
      \RightLabel{{\small (where $i\in\setexp{1,2}$)}}
      \UnaryInfC{$ \astatei{1} \plus \astatei{2}  \lt{\aact} \astateacci{i} $}
      \DisplayProof 
      & \qquad &
      \AxiomC{$ \astatei{1} \lt{\aact} \astateacci{1} $}
      \AxiomC{$ \astatei{2} \lt{\aact} \astateacci{2} $}
      \BinaryInfC{$ \join{\astatei{1}}{\astatei{2}}  \lt{\aact} \join{\astateacci{1}}{\astateacci{2}} $}
      \DisplayProof
    \end{aligned}
    $
  \end{center}
\end{defi}

\smallskip
We now proceed to defining \joininteraction\ versions 
  of parameterized simulatability, simulation equivalence, and bisimilarity
    as simulatability, simulation equivalence, and bisimilarity, respectively,
      of \joininteraction{s} between two processes and an environment.

\begin{defi}\label{def:jiparamsimby:jiparamsimequiv:jiparambisimby}
  Let $\aprocLTS = \tuple{ \procs, \actions, \sprocltzero }$ be a process \LTS, 
    and let $\aenvLTS = \tuple{ \envs, \actions, \senvltzero }$ be an environment \LTS.\vspace{-1pt}
  
  For all environments~$\aenv\in\envs$,
    we define three binary relations on $\procs\,$:
      \emph{\jiparameterized\ simulatability}~$\sjiparamsimby{\aenv}$,
      \emph{\jiparameterized\ bisimilarity}~$\sjiparambisim{\aenv}$,
      and finally,
      \emph{\jiparameterized\ simulation equivalence}~$\sjiparamsimequiv{\aenv}$
   where
     $\sjiparamsimby{\aenv}, \sjiparambisim{\aenv}, \sjiparamsimequiv{\aenv} \,\subseteq\, \procs\times\procs$,
   are defined by the following clauses, for all processes $\aproc,\bproc\in\procs\,$: 
  \begin{align}
    \parbox[c]{\widthof{(with respect to \joininteraction\ with $\aenv$)}}
              {({\em $\aproc$ can be simulated by $\bproc$ 
               \\[-0.25ex]\phantom{(}%
                 with respect to \joininteraction\ with $\aenv$})} 
    & &
    \aproc \jiparamsimby{\aenv} \bproc
      \; \funin & \: \Longleftrightarrow \;\;
        \join{\aproc}{\aenv}
          \:\simby\:
        \join{\bproc}{\aenv} \punc{,} 
          \label{eq:jiparamsim:def:jiparamsimby:jiparamsimequiv:jiparambisimby}
        \\    
    \parbox[c]{\widthof{(with respect to \joininteraction\ with $\aenv$)}}
              {({\em $\aproc$ is bisimilar to $\bproc$
               \\[-0.25ex]\phantom{(}%
                 with respect to \joininteraction\ with $\aenv$})} 
    & &
    \aproc \jiparambisim{\aenv} \bproc
      \; \funin & \: \Longleftrightarrow \;\;
        \join{\aproc}{\aenv}
          \:\bisim\:
        \join{\bproc}{\aenv} \punc{,}
          \label{eq:jiparambisim:def:jiparamsimby:jiparamsimequiv:jiparambisimby}
        \\
    \parbox[c]{\widthof{(with respect to \joininteraction\ with $\aenv$)}}
              {({\em $\aproc$ and $\bproc$ are simulation equivalent
               \\[-0.25ex]\phantom{(}%
               with respect to \joininteraction\ with $\aenv$})} 
    & &     
    \aproc \jiparamsimequiv{\aenv} \bproc
      \; \funin & \: \Longleftrightarrow \;\;
        \aproc \jiparamsimby{\aenv} \bproc
          \;\logand\;
        \bproc \jiparamsimby{\aenv} \aproc \punc{,} 
          \label{eq:jiparamsimequiv:def:jiparamsimby:jiparamsimequiv:jiparambisimby} 
  \end{align}  
  where $\join{\aproc}{\aenv}$ and $\join{\bproc}{\aenv}$ on the right 
    in \eqref{eq:jiparamsim:def:jiparamsimby:jiparamsimequiv:jiparambisimby} and in \eqref{eq:jiparambisim:def:jiparamsimby:jiparamsimequiv:jiparambisimby}
      are processes from the \joininteraction~\LTS~$\join{\aprocLTS}{\aenvLTS}$.
  By $\scanjiparamsim{\aenv}$ we denote the converse of $\sparamsimby{\aenv}$,
    and express $\bproc \canjiparamsim{\aenv} \aproc$ verbally by saying that \emph{$\bproc$ can simulate $\aproc$ with respect to \joininteraction\ with $\aenv$}.
\end{defi}


\subsection*{Relationship of ji-parameterized bisimilarity with parameterized bisimilarity}

In order to recognize Larsen's parameterized bisimilarity as bisimilarity with respect to a specific form of \joininteraction, 
  we introduce a `right-determinizing' variant~$\sjoindot$ of the join operation~$\sjoin$.
    For interactions of a process~$\aproc$ with an environment~$\aenv$
      this operation yields the process~$\joindot{\aproc}{\aenv}$ 
        from which transitions are labeled by pairs $\pair{\aact}{\aenvacc}$ 
          that result from joining an \transitionact{\aact} from $\aproc$
            with an \transitionact{\aact} from $\aenv$ to target $\aenvacc$. 
              In this way different environment steps that originally have the same action label are distinguished
                from $\sjoindot$\nb-joins. 
Indeed, by making different targets of environment transitions visible as different transitions from $\joindot{\aproc}{\aenv}$ 
  for processes $\aproc$ and $\bproc$ and an environment $\aenv$,
    a correspondence arises between bisimulations that link $\joindot{\aproc}{\aenv}$ and $\joindot{\bproc}{\aenv}$ 
       and parameterized bisimulations that link $\aproc$ and~$\bproc$~with~respect~to~$\aenv$.
    
        
\begin{defi}[right-determinizing \joininteraction\ with environment \protect\LTSs]\label{def:joindot}
  Let
    $\aprocLTS = \tuple{\procs,\actions,\sprocltzero}$ be a process \LTS,
    and  $\aenvLTS = \tuple{\envs,\actions,\senvltzero}$ be an environment \LTS. 
  By the \emph{right-determinizing join-interaction\vspace{-2.5pt} of $\aprocLTS$ and $\aenvLTS$}
    we understand the \LTS\ of the form $\joindot{\aprocLTS}{\aenvLTS} = \tuple{ 
                                                                                 \joindot{\procs}{\envs}, \actions\times\envs, \sltzero }$
      with $
            \joindot{\procs}{\envs} \defdby \descsetexpns{ \joindot{\aproc}{\aenv} }{ \aproc\in\procs,\, \aenv\in\envs }$ and  
      where $ \sltzero \subseteq 
                                 (\joindot{\procs}{\envs})\times(\actions\times\envs)\times
                                                                                           (\joindot{\procs}{\envs}) $ is defined by 
        as transitions that are generated by the following rules:
      \begin{equation*}
        \AxiomC{$ \smash{\aproc \proclt{\aact} \aprocacc}\rule[-3pt]{0pt}{8pt} $}
        \AxiomC{$ \smash{\aenv  \envlt{\aact}  \aenvacc} $}
        \insertBetweenHyps{\hspace*{2em}}
        \BinaryInfC{$ \joindot{\aproc}{\aenv}  \lt{\pair{\aact}{\aenvacc}} \joindot{\aprocacc}{\aenvacc} $}
        \DisplayProof
      \end{equation*} 
      Hereby ``$\sjoindot$'' in processes $\joindot{\aproc}{\aenv}$ of 
                                                                        $\joindot{\aprocLTS}{\envs}$ 
        has to be understood as a term constructor that from any process $\aproc\in\procs$ and environment $\aenv\in\envs$
          constructs a formal \joininteraction\ process $\joindot{\aproc}{\aenv}$~in~$\joindot{\procs}{\envs}$. 
\end{defi}


Now this variant ``$\sjoindot$'' of the join operation ``$\sjoin$''
  facilitates characterizations of parameterized simulatability and bisimilarity
    that are analogous in kind to the definitions of \jiparameterized\ simulatability and bisimilarity in Definition~\ref{def:jiparamsimby:jiparamsimequiv:jiparambisimby}.
As stated by logical equivalences in the following lemma,
  parameterized simulatability, and parameterized bisimilarity
    correspond to simulatability, and respectively to bisimilarity,
      of $\sjoindot$\nb-in\-ter\-ac\-tions between two processes and an environment.
From this we obtain inclusions of parameterized simulatability and bisimilarity in \jiparameterized\ simulatability and bisimilarity.      
                                    
\begin{lem}\label{lem:joindot:join}
  For all processes $\aproc$ and $\bproc$, and environments~$\aenv$ the following two chains of statements hold:
  \begin{alignat}{3}
    \aproc \paramsimby{\aenv} \bproc
      \;\; & \Longleftrightarrow\;\;
    (\joindot{\aproc}{\aenv}) 
      \simby
    (\joindot{\bproc}{\aenv}) 
    & 
    \qquad\qquad\quad &
    &
    \aproc \parambisim{\aenv} \bproc
      \;\; & \Longleftrightarrow\;\;
    (\joindot{\aproc}{\aenv}) 
      \bisim
    (\joindot{\bproc}{\aenv}) 
      \label{eq:1:lem:joindot:join}
    \\
      \;\; & \:\Longrightarrow\;\;
    (\join{\aproc}{\aenv}) 
      \simby
    (\join{\bproc}{\aenv})
    &
    &
    &
      \;\; & \:\Longrightarrow\;\;
    (\join{\aproc}{\aenv}) 
      \bisim
    (\join{\bproc}{\aenv})
      \label{eq:2:lem:joindot:join}
    \\
      \;\; & \:\Longleftrightarrow\;\;
    \aproc \jiparamsimby{\aenv} \bproc \punc{,}
    &
    &
    &
      \;\; & \:\Longleftrightarrow\;\;
    \aproc \jiparambisim{\aenv} \bproc \punc{,}
      \notag     
  \end{alignat}
  %
  where $\joindot{\aproc}{\aenv}$ and $\joindot{\bproc}{\aenv}$ 
      are processes from the right-determinizing \joininteraction~\LTS~$\joindot{\aprocLTS}{\aenvLTS}$,
        and $\join{\aproc}{\aenv}$ and $\join{\bproc}{\aenv}$
            are processes from the \joininteraction~\LTS~$\join{\aprocLTS}{\aenvLTS}$.
\end{lem}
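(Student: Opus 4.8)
The plan is to treat each of the two displayed chains link by link. In both chains the bottom equivalence is nothing but the definitions \eqref{eq:jiparamsim:def:jiparamsimby:jiparamsimequiv:jiparambisimby} and \eqref{eq:jiparambisim:def:jiparamsimby:jiparamsimequiv:jiparambisimby} of $\jiparamsimby{\aenv}$ and $\jiparambisim{\aenv}$, so there is nothing to prove there. What remains, for the simulation chain, is the equivalence $\aproc \paramsimby{\aenv} \bproc \Leftrightarrow (\joindot{\aproc}{\aenv}) \simby (\joindot{\bproc}{\aenv})$ together with the implication $(\joindot{\aproc}{\aenv}) \simby (\joindot{\bproc}{\aenv}) \Rightarrow (\join{\aproc}{\aenv}) \simby (\join{\bproc}{\aenv})$; the bisimulation chain asks for the same two facts with $\simby$ replaced by $\bisim$. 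Since a relation is a bisimulation exactly when both it and its converse are simulations (as recorded just after Definition~\ref{def:bisim:sim}), I would obtain the (back) halves of the bisimulation statements by applying the (forth) arguments below to the converse relations, so that it suffices to argue the simulation versions.

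For the first equivalence I would exhibit a correspondence between $\aenvLTS$\nobreakdash-parameterized simulations on $\aprocLTS$ and simulations on the right\nobreakdash-determinizing join \LTS~$\joindot{\aprocLTS}{\aenvLTS}$. From a parameterized simulation $\aparamsim = \familyns{\sasimi{\benv}}{\benv\in\envs}$ with $\aproc \asimi{\aenv} \bproc$ I would build the relation $\descsetexpns{ \pair{\joindot{\cproc}{\benv}}{\joindot{\dproc}{\benv}} }{ \cproc \asimi{\benv} \dproc }$, and conversely, from a simulation $\sarel$ on $\joindot{\aprocLTS}{\aenvLTS}$ containing $\pair{\joindot{\aproc}{\aenv}}{\joindot{\bproc}{\aenv}}$ I would read off the family in which $\cproc \asimi{\benv} \dproc$ holds iff $\joindot{\cproc}{\benv} \arel \joindot{\dproc}{\benv}$. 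The crux is that the two (forth) requirements coincide on the nose: a transition $\joindot{\cproc}{\benv} \lt{\pair{\aact}{\benvacc}} \joindot{\cproc'}{\benvacc}$ exists exactly when $\cproc \proclt{\aact} \cproc'$ and $\benv \envlt{\aact} \benvacc$, and---this is the effect of right\nobreakdash-determinization---any matching move of $\joindot{\dproc}{\benv}$ carrying the \emph{same} label $\pair{\aact}{\benvacc}$ is forced to reach environment $\benvacc$, since the target environment is displayed in the label. Hence the simulation condition on $\sarel$ says precisely that whenever $\benv \envlt{\aact} \benvacc$ and $\cproc \proclt{\aact} \cproc'$ there is $\dproc'$ with $\dproc \proclt{\aact} \dproc'$ and $\cproc' \asimi{\benvacc} \dproc'$, which is the (forth) clause of Definition~\ref{def:paramsim} at index $\benvacc$. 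The one piece of bookkeeping is the nonemptiness demanded of each $\sasimi{\benv}$ in Definition~\ref{def:paramsim}: this is secured by taking the pointwise union of the family read off from $\sarel$ with the diagonal family $\familyns{\descsetexpns{\pair{\cproc}{\cproc}}{\cproc\in\procs}}{\benv\in\envs}$, which is itself a parameterized simulation and leaves (forth) intact.

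For the implication I would use the evident label\nobreakdash-forgetting passage from $\joindot{\aprocLTS}{\aenvLTS}$ to $\join{\aprocLTS}{\aenvLTS}$: the two \LTSs\ have the same underlying transitions, save that $\sjoindot$ records the reached environment inside the label $\pair{\aact}{\benvacc}$ whereas $\sjoin$ retains only $\aact$. Concretely, from a simulation $\sarel$ on $\joindot{\aprocLTS}{\aenvLTS}$ I would form $\descsetexpns{ \pair{\join{\cproc}{\benv}}{\join{\dproc}{\cenv}} }{ \joindot{\cproc}{\benv} \arel \joindot{\dproc}{\cenv} }$ and verify it is a simulation on $\join{\aprocLTS}{\aenvLTS}$: a step $\join{\cproc}{\benv} \lt{\aact} \join{\cproc'}{\benvacc}$ lifts to $\joindot{\cproc}{\benv} \lt{\pair{\aact}{\benvacc}} \joindot{\cproc'}{\benvacc}$, is matched through $\sarel$, and the matching $\sjoindot$\nobreakdash-step projects back to the required $\aact$\nobreakdash-step. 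This passage runs in one direction only because forgetting the target environment is a genuine loss of information: a $\sjoin$\nobreakdash-simulation is free to answer an $\aact$\nobreakdash-move of the environment on the left by an $\aact$\nobreakdash-move to a \emph{different} environment on the right, a latitude a $\sjoindot$\nobreakdash-simulation does not have---which is exactly why the converse implication fails for nondeterministic environments (and, by Proposition~\ref{prop:jiparamsim:equals:parasim}, is regained in the deterministic case). I expect no genuine obstacle here; the only place that repays care is the quantifier alignment in the first equivalence, where one must confirm that pinning the label $\pair{\aact}{\benvacc}$ pins the reached environment and thereby makes the two (forth) clauses literally the same statement.
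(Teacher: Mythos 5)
Your proposal is correct and follows essentially the same route as the paper's proof sketch: the same two constructions appear there, namely the correspondence between \parameterizedby{\aenvLTS} families and relations on $\joindot{\aprocLTS}{\aenvLTS}$ that pair states with equal environment components (exploiting that the label $\pair{\aact}{\aenvacc}$ pins the target environment), and the label-forgetting passage from $\joindot{\aprocLTS}{\aenvLTS}$ to $\join{\aprocLTS}{\aenvLTS}$; your two refinements---deriving the bisimulation statements from the (forth) arguments applied to converse relations rather than checking (back) directly, and restoring nonemptiness of the indexed relations by union with the diagonal family---are harmless presentational variants of the paper's ``(back) is analogous'' shortcut. One caveat on your closing aside: the claim that the converse of \eqref{eq:2:lem:joindot:join} ``fails for nondeterministic environments'' is accurate only for the bisimilarity chain; for simulatability that converse in fact holds for \emph{all} environments (this is Lemma~\ref{lem:jiparamsim:subseteq:paramsim} and Proposition~\ref{prop:jiparamsim:equals:parasim}, one of the paper's main points), and the deterministic-environment recovery on the bisimilarity side is Proposition~\ref{prop:parambisim:jiparambisim:det:envs}, not the proposition you cite.
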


\begin{proof}[Proof (Sketch).]
  We consider a process~\LTS~$\aprocLTS = \tuple{\procs,\actions,\sprocltzero}$, and an environment LTS~$\aenvLTS = \tuple{\envs,\actions,\senvltzero}$.
   
  We only argue for the chain of equivalences and implications on the right for $\sparambisim{\aenv}$, $\sbisim$, and $\sjiparambisim{\aenv}$, 
    since the chain of statements on the left for $\sparamsimby{\aenv}$, $\ssimby$, and $\sjiparamsimby{\aenv}$ can be demonstrated analogously.

  We first consider statement~\eqref{eq:1:lem:joindot:join}.
    For showing ``$\Rightarrow$'' 
      it suffices to demonstrate that if $\aparambisim = \family{\sabisimi{\aenv}}{\aenv\in\envs}$
        is an \parameterizedby{\aenvLTS} bisimulation on $\aprocLTS$, 
          then $\abisim \defdby \descsetexpns{ \pair{\joindot{\aproc}{\aenv}}{\joindot{\bproc}{\aenv}} }
                                             { \aproc \abisimi{\aenv} \bproc }$ is a bisimulation on $\joindot{\aprocLTS}{\aenvLTS}$.
    For ``$\Leftarrow$''
      it suffices to show that if $\abisim$ is a bisimulation on $\joindot{\aprocLTS}{\aenvLTS}$,
        then $\aparambisim = \family{\abisimi{\aenv}}
                     {\aenv\in\envs}$ 
             with the defining clause
               $\abisimi{\aenv} \defdby \descsetexpns{ \pair{\aproc}{\bproc} }{ \pair{\joindot{\aproc}{\aenv}}{\joindot{\bproc}{\aenv}} \in \abisim }$  
                 for $\aenv\in\envs$
        is an \parameterizedby{\aenvLTS} bisimulation on $\aprocLTS$. 
    Both auxiliary statements can be shown by using the conditions (forth) and (back) from the assumed (parameterized) bisimulation
      in order to demonstrate the conditions (forth) and (back) of the (parameterized) bisimulation in the conclusion of the implication.                                            
  
  The implication in \eqref{eq:2:lem:joindot:join}
    can be easily verified similarly:
      by showing that whenever $\abisimi{\bullet}$ is a bisimulation on $\joindot{\aprocLTS}{\aenvLTS}$,
        then $\abisim \defdby \descsetexpns{ \pair{\join{\aproc}{\aenv}}{\join{\bproc}{\aenv}} }
                                           { \pair{\joindot{\aproc}{\aenv}}{\joindot{\bproc}{\aenv}} \in \abisimi{\bullet} }$
        is a bisimulation on $\join{\aprocLTS}{\aenvLTS}$.
\end{proof}
\begin{figure}[t!]%
\begin{center}  
\begin{tikzpicture}

  \matrix[anchor=north,row sep=1cm,column sep=0.35cm,
            ampersand replacement=\& ] at (7.5,0) {
                   \&  \node(e-pos){}; \&                   \&[0.4cm] \&   \node(p-pos){};     \& \&[0.4cm]                \& \node(q-pos){}; \&               \& \&[0.75cm]     
                   \&[0.1cm] \node(pe-pos){}; \&[0.1cm]          \&[0.6cm]  \&                \& \&                \& \&[0.1cm] \& \node(qe-pos){};
    \\ 
    \node(e0-pos){};   \&              \&  \node(e1-pos){};     \&         \&   \node(p0-pos){};   \& \&        \node(q0-pos){};   \&             \&  \node(q1-pos){}; \& \&[0.75cm] 
    \node(pe0-pos){};  \&                     \&  \node(pe1-pos){};  \&[0.6cm]  \& \node(qe0-pos){};   \& \& \&  \node(qe1-pos){};  \& \&        \&[0.1cm] \& \node(qe2-pos){}; \& \& \& \node(qe3-pos){};              
    \\
    \node(e00-pos){};  \&              \&  \node(e10-pos){};    \&         \&   \node(p00-pos){};  \& \&        \node(q00-pos){};  \&             \&               \& \&[0.75cm]
    \node(pe00-pos){}; \&                     \&                 \&[0.6cm]  \& \node(qe00-pos){};
    \\  
  };
  %
  \path (e-pos) ++ (0cm,0cm) node[red](e){$\aenv$};
  \path (e0-pos) ++ (0cm,0cm) node[red](e0){$\bact$};
  \path (e00-pos) ++ (0cm,0cm) node(e00){$\deadlock$};
  \path (e1-pos) ++ (0cm,0cm) node(e1){$\deadlock$};
  \draw[-implies,double equal sign distance,firebrick](e) to node[left,pos=0.325]{$\aact$} (e0); 
  \draw[-implies,double equal sign distance](e) to node[right,pos=0.325]{$\aact$} (e1); 
  \draw[-implies,double equal sign distance](e0) to node[left,pos=0.4]{$\bact$} (e00); 
  \path (p-pos) ++ (0cm,0cm) node(p){$\aproc$};
  \path (p0-pos) ++ (0cm,0cm) node(p0){$\bact$};
  \path (p00-pos) ++ (0cm,0cm) node(p00){$\deadlock$};
  \draw[->,firebrick](p) to node[left,pos=0.4]{$\aact$} (p0); 
  \draw[->](p0) to node[left,pos=0.4]{$\bact$} (p00); 
  \path (q-pos) ++ (0cm,0cm) node(q){$\bproc$};
  \path (q0-pos) ++ (0cm,0cm) node(q0){$\bact$};
  \path (q00-pos) ++ (0cm,0cm) node(q00){$\deadlock$};
  \path (q1-pos) ++ (0cm,0cm) node(q1){$\deadlock$};
  \draw[->](q) to node[left,pos=0.325]{$\aact$} (q0); 
  \draw[->,firebrick](q) to node[right,pos=0.325]{$\aact$} (q1); 
  \draw[->](q0) to node[left,pos=0.4]{$\bact$} (q00);
  \draw[-,densely dashed,red,out=20,in=160,shorten >=0pt] 
    (p) to node{\scalebox{1.5}{$\times$}} node[pos=0.3,above]{$\aenv$} (q);
  \draw[-,densely dashed,red,out=35,in=150,shorten >=0pt] 
    (p0) to node[pos=0.4]{\scalebox{1.5}{$\times$}} node[pos=0.2,above]{$\bact$} (q1);

  \path(pe-pos) ++ (0cm,0cm) node(pe){$\joindot{\aproc}{\alert{\aenv}}$};
  \path(pe0-pos) ++ (0cm,0cm) node(pe0){\small $\joindot{\bact}{\alert{\bact}}$};
  \path(pe00-pos) ++ (0cm,0cm) node(pe00){\small $\joindot{\deadlock}{\deadlock}$};
  \path(pe1-pos) ++ (0cm,0cm) node(pe1){\small $ \joindot{\bact}{\deadlock}$};
  \draw[->,firebrick](pe) to node[left,pos=0.325]{\small $\pair{\aact}{\alert{\bact}}$} (pe0); 
  \draw[->](pe) to node[right,pos=0.325]{\small $\pair{\aact}{\deadlock}$} (pe1); 
  \draw[->](pe0) to node[left,pos=0.4,xshift=2pt]{\small $\pair{\bact}{\deadlock}$} (pe00); 
  \path(qe-pos) ++ (0cm,0cm) node(qe){$\joindot{\bproc}{\alert{\aenv}}$};
  \path(qe0-pos) ++ (0cm,0cm) node(qe0){\small $\joindot{\bact}{\bact}$};
  \path(qe00-pos) ++ (0cm,0cm) node(qe00){\small $\joindot{\deadlock}{\deadlock}$};
  \path(qe1-pos) ++ (0cm,0cm) node(qe1){\small $ \joindot{\deadlock}{\alert{\bact}}$};
  \path(qe2-pos) ++ (0cm,0cm) node(qe2){\small $ \joindot{\bact}{\deadlock}$};
  \path(qe3-pos) ++ (0cm,0cm) node(qe3){\small $ \joindot{\deadlock}{\deadlock}$};
  \draw[->](qe) to node[left,pos=0.325]{\small $\pair{\aact}{\bact}$} (qe0); 
  \draw[->,firebrick](qe) to node[pos=0.5,xshift=2pt]{\small $\pair{\aact}{\alert{\bact}}$} (qe1); 
  \draw[->](qe) to node[pos=0.5]{\small $\pair{\aact}{\deadlock}$} (qe2); 
  \draw[->](qe) to node[right,pos=0.325,xshift=1.5pt]{\small $\pair{\aact}{\deadlock}$} (qe3); 
  \draw[->,shorten >=-2pt](qe0) to node[left,pos=0.2,xshift=3.5pt]{\small $\pair{\bact}{\deadlock}$} (qe00);   
  \draw[-,red,densely dashed,out=20,in=160,shorten >=-2pt] 
    (pe) to node{\scalebox{1.5}{$\times$}} (qe);
  \draw[-,red,densely dashed,out=-30,in=220,shorten >=-2pt] 
    (pe0) to node[pos=0.4]{\scalebox{1.5}{$\times$}} (qe1);

\end{tikzpicture}  
\end{center}  
  \vspace*{-2ex}
  \caption{\label{fig:ex:lem:joindot:join}%
           Example that witnesses the correspondence \eqref{eq:1:lem:joindot:join} in Lemma~\ref{lem:joindot:join}: 
           For the environment $\aenv \defdby \actpref{\aact}{\bact} \plus \aact$
             and the processes $\aproc \defdby \actpref{\aact}{\bact}$
                           and $\bproc \defdby \aenv$,
           it holds that $\aproc \notparambisim{\aenv} \bproc$
             (indicated by the \alert{mismatches} ${\bf \alert{{\times}}}$ when building a parameterized bisimulation on the left),
           and also $\joindot{\aproc}{\aenv} \notbisim \joindot{\bproc}{\aenv}$
               (indicated by the \alert{mismatches} ${\bf \alert{{\times}}}$ when building a bisimulation on the right).
           Note that in contrast $\aproc \jiparambisim{\aenv} \bproc$ holds $\aproc$, $\bproc$, $\aenv$, see Fig.~\ref{fig:lem:jiparamsim:subseteq:paramsim} later.
               }
\end{figure}%

In Figure~\ref{fig:ex:lem:joindot:join} we illustrate the characterization \eqref{eq:1:lem:joindot:join} of parameterized bisimilarity $\sparambisim{\aenv}$
  as bisimilarity of $\sjoindot$\nb-in\-ter\-ac\-tions by an example.
By using Lemma~\ref{lem:joindot:join}  
  we can now show 
    that with respect to deterministic environments no difference arises between 
      parameterized bisimilarity and \jiparameterized\ bisimilarity.

\begin{prop}\label{prop:parambisim:jiparambisim:det:envs}
  $\parambisim{\aenv} \;\,=\;\, \jiparambisim{\aenv}\:$
          holds for all deterministic environments $\aenv$.
\end{prop}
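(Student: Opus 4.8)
The plan is to run the whole argument through Lemma~\ref{lem:joindot:join}, exploiting that for a deterministic environment the extra label information that $\sjoindot$ records over $\sjoin$ is redundant. The inclusion $\sparambisim{\aenv} \subseteq \sjiparambisim{\aenv}$ is immediate for \emph{every} environment: it is the composite of the equivalence in~\eqref{eq:1:lem:joindot:join}, the implication in~\eqref{eq:2:lem:joindot:join}, and the defining clause~\eqref{eq:jiparambisim:def:jiparamsimby:jiparamsimequiv:jiparambisimby} of $\sjiparambisim{\aenv}$. Hence only $\sjiparambisim{\aenv} \subseteq \sparambisim{\aenv}$ remains, and by the equivalence in~\eqref{eq:1:lem:joindot:join} it is enough to upgrade the implication in~\eqref{eq:2:lem:joindot:join} to a biconditional when $\aenv$ is deterministic, i.e.\ to prove
\[
  (\join{\aproc}{\aenv}) \bisim (\join{\bproc}{\aenv})
    \;\;\Longrightarrow\;\;
  (\joindot{\aproc}{\aenv}) \bisim (\joindot{\bproc}{\aenv}) \punc{.}
\]

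To prove this implication I would take a bisimulation $\sabisim$ on $\join{\aprocLTS}{\aenvLTS}$ with $\join{\aproc}{\aenv} \abisim \join{\bproc}{\aenv}$ and define a candidate relation on $\joindot{\aprocLTS}{\aenvLTS}$ by
\[
  \sabisim^{\bullet}
    \defdby
  \descsetexpns{ \pair{\joindot{\aproc}{\cenv}}{\joindot{\bproc}{\cenv}} }
              { \join{\aproc}{\cenv} \abisim \join{\bproc}{\cenv}
                  \text{ and } \cenv \text{ is reachable from } \aenv } \punc{,}
\]
keeping only pairs whose two join-interactions carry the \emph{same} environment component $\cenv$, and restricting that component to the states reachable from $\aenv$ in $\aenvLTS$ (here $\aproc,\bproc$ range over all processes). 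Since $\aenv$ is reachable from itself, the root pair $\pair{\joindot{\aproc}{\aenv}}{\joindot{\bproc}{\aenv}}$ lies in $\sabisim^{\bullet}$; so once $\sabisim^{\bullet}$ is shown to be a bisimulation we obtain $\joindot{\aproc}{\aenv} \bisim \joindot{\bproc}{\aenv}$, whence $\aproc \parambisim{\aenv} \bproc$ by~\eqref{eq:1:lem:joindot:join}. It therefore remains to verify the (forth) and (back) conditions for $\sabisim^{\bullet}$.

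The (forth) verification is where determinism is used. Suppose $\joindot{\aproc}{\cenv} \mathrel{\sabisim^{\bullet}} \joindot{\bproc}{\cenv}$, so that $\join{\aproc}{\cenv} \abisim \join{\bproc}{\cenv}$ with $\cenv$ reachable from $\aenv$, and suppose $\joindot{\aproc}{\cenv} \lt{\pair{\aact}{\cenvacc}} \joindot{\aprocacc}{\cenvacc}$. By the rule of Definition~\ref{def:joindot} this step decomposes as $\aproc \proclt{\aact} \aprocacc$ and $\cenv \envlt{\aact} \cenvacc$, so by the rule of Definition~\ref{def:join} we have $\join{\aproc}{\cenv} \lt{\aact} \join{\aprocacc}{\cenvacc}$. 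Applying (forth) for $\sabisim$ yields a step $\join{\bproc}{\cenv} \lt{\aact} \join{\bprocacc}{\cenvdacc}$ with $\join{\aprocacc}{\cenvacc} \abisim \join{\bprocacc}{\cenvdacc}$, which decomposes as $\bproc \proclt{\aact} \bprocacc$ and $\cenv \envlt{\aact} \cenvdacc$. A priori $\cenvdacc$ need not equal $\cenvacc$; but $\cenv$, being reachable from the deterministic environment $\aenv$, is itself deterministic in the sense of Definition~\ref{def:LTS}, so its unique $\aact$-derivative forces $\cenvdacc = \cenvacc$. Hence the environment components re-synchronize: $\join{\aprocacc}{\cenvacc} \abisim \join{\bprocacc}{\cenvacc}$ with $\cenvacc$ again reachable from $\aenv$, so $\joindot{\aprocacc}{\cenvacc} \mathrel{\sabisim^{\bullet}} \joindot{\bprocacc}{\cenvacc}$, and $\joindot{\bproc}{\cenv} \lt{\pair{\aact}{\cenvacc}} \joindot{\bprocacc}{\cenvacc}$ is the required matching step. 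The (back) condition is entirely symmetric.

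The main obstacle is exactly this re-synchronization: the (forth) condition for the $\sjoin$-interaction only guarantees a matching transition with the same \emph{action} $\aact$, whose environment target could in principle differ from $\cenvacc$ and thereby throw the pair outside $\sabisim^{\bullet}$; determinism of $\aenv$, inherited by all of its reachable descendants, is precisely what pins this target down to the unique $\cenvacc$ and keeps the two environment components equal along the whole bisimulation. This is also the single place where the argument breaks for nondeterministic environments, matching the strict coarsening recorded in Theorem~\ref{thm:incl:jiorparam:bisim:sim}.
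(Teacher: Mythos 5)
Your proof is correct and follows essentially the same route as the paper's: both reduce the proposition via Lemma~\ref{lem:joindot:join} to upgrading the implication in~\eqref{eq:2:lem:joindot:join} to a biconditional for deterministic $\aenv$, and both do so by transferring a bisimulation on $\join{\aprocLTS}{\aenvLTS}$ to one on $\joindot{\aprocLTS}{\aenvLTS}$, using determinism (inherited by all reachable environment states) to force the matching transition's environment target to coincide. The only difference is bookkeeping: you restrict the candidate relation to environment components reachable from $\aenv$, while the paper instead assumes all environments occurring in the given bisimulation are deterministic --- two equivalent ways of invoking the heredity of determinism.
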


\begin{proof}
  Let $\aprocLTS = \tuple{\procs,\actions,\sprocltzero}$ be a process LTS, 
    and $\aenvLTS = \tuple{\envs,\actions,\senvltzero}$ be an environment LTS.\vspace{-2.5pt} 
  The Proposition follows from Lemma~\ref{lem:joindot:join},
    once the converse implication ``$\Leftarrow$'' in \eqref{eq:2:lem:joindot:join},
      which is the only implication that is missing there for $\parambisim{\aenv}$ to coincide with $\jiparambisim{\aenv}\,$,  
        is shown to hold for deterministic environments~$\aenv\,$:
        \begin{equation}
          \text{$\aenv$ is deterministic}
            \;\;\Longrightarrow\;\;
              \bigl[\, 
                (\joindot{\aproc}{\aenv}) 
                  \bisim
                (\joindot{\bproc}{\aenv}) 
                \;\;\Longleftarrow\;\;
                (\join{\aproc}{\aenv}) 
                  \bisim
                (\join{\bproc}{\aenv})
              \,\bigr] \punc{.}
            \label{eq:prf:prop:parambisim:jiparambisim:det:envs}
        \end{equation}
        
    For this, it suffices to show,
      that whenever $\sabisim$ is a bisimulation on $\join{\aprocLTS}{\aenvLTS}$
        in which all environments that occur in joins in pairs in $\sabisim$ are deterministic,
          then $\sabisimi{\bullet} 
                  \defdby
                     \descsetexp{ \pair{ \joindot{\aproc}{\aenv} }{ \joindot{\bproc}{\aenv} } }
                                { \pair{ \join{\aproc}{\aenv} }{ \join{\bproc}{\aenv} } \in \sabisim }$
            is a bisimulation on $\joindot{\aprocLTS}{\aenvLTS}$.  
      The assumption that only deterministic environments occur in $\sabisim$
        is not too restrictive, because derivatives of deterministic environments are deterministic again.
        
    We let $\sabisim$ be a bisimulation on $\join{\aprocLTS}{\aenvLTS}$ as described, 
      and let $\sabisimi{\bullet}$ be defined as above.
      We have to show that $\sabisimi{\bullet}$ is a bisimulation on $\joindot{\aprocLTS}{\aenvLTS}$. 
      
    For showing\vspace{-2pt} the condition (forth) for $\sabisimi{\bullet}$ to be a bisimulation on $\joindot{\aprocLTS}{\aenvLTS}$,
      we let $\pair{ \joindot{\aproc}{\aenv} }{ \joindot{\bproc}{\aenv} } \in \sabisimi{\bullet}$,
        and a transition $\joindot{\aproc}{\aenv} \lt{l} \cproc$ be arbitrary, where $\cproc\in\joindot{\procs}{\envs}$, $l$ some label in $\actions\times\procs$.
          Due to operational semantics of $\joindot{\aprocLTS}{\aenvLTS}$,
            this transition must actually be of the form $\joindot{\aproc}{\aenv} \lt{\pair{\aact}{\aenvacc}} \joindot{\aprocacc}{\aenvacc}$,
              for $\aprocacc\in\procs$ and $\aenvacc\in\envs$.
      We have to show that there is $\dproc\in\joindot{\procs}{\envs}$
        such that $\joindot{\bproc}{\aenv} \lt{\pair{\aact}{\aenvacc}} \dproc$ 
               and $\pair{\cproc}{\dproc} = \pair{\joindot{\aprocacc}{\aenvacc}}{\dproc}\in\sabisimi{\bullet}$. 
      
    From $\joindot{\aproc}{\aenv} \lt{\pair{\aact}{\aenvacc}} \joindot{\aprocacc}{\aenvacc}$\vspace{-5pt}
      it follows that $\aproc \proclt{\aact} \aprocacc$ and $\aenv \envlt{\aact} \aenvacc$.  
    By the definition of $\join{\aprocLTS}{\aenvLTS}$ it follows that there is 
      also the transition $\join{\aproc}{\aenv} \lt{\aact} \join{\aprocacc}{\aenvacc}$ in $\join{\aprocLTS}{\aenvLTS}$.
    From $\pair{ \joindot{\aproc}{\aenv} }{ \joindot{\bproc}{\aenv} } \in \sabisimi{\bullet}$
      it follows by the definition of $\sabisimi{\bullet}$ that $\pair{ \join{\aproc}{\aenv} }{ \join{\bproc}{\aenv} } \in \sabisim$,
        and by the assumption on $\sabisim$ also that $\aenv$ is deterministic.
    Then it follows from the condition (forth) of $\sabisim$ as a bisimulation on $\join{\aprocLTS}{\aenvLTS}$
      that there is some $\dproc\in\join{\procs}{\envs}$ such that $\join{\bproc}{\aenv} \lt{\aact  } \dproc$ 
        and $\pair{\join{\aprocacc}{\aenvacc}}{\dproc}\in\sabisim$. 
    By the definition of $\join{\aprocLTS}{\aenvLTS}$
      we find that $\dproc = \join{\bprocacc}{\aenvdacc}$ and $\pair{\join{\aprocacc}{\aenvacc}}{\join{\bprocacc}{\aenvdacc}}\in\sabisim$   
        for some $\bprocacc\in\procs$ and $\aenvacc\in\envs$
          with $\bproc \proclt{\aact} \bprocacc$ and $\aenv \envlt{\aact} \aenvdacc$.
    But now\vspace{-4pt} from $\aenv \envlt{\aact} \aenvacc$ and $\aenv \envlt{\aact} \aenvdacc$ 
      we can conclude, because $\aenv$ is deterministic,
        that $\aenvdacc = \aenvacc$.
    From this we obtain $\pair{\join{\aprocacc}{\aenvacc}}{\join{\bprocacc}{\aenvacc}}\in\sabisim$,
      which entails $\pair{\joindot{\aprocacc}{\aenvacc}}{\joindot{\bprocacc}{\aenvacc}}\in\sabisimi{\bullet}$.
    From $\aenv \envlt{\aact} \aenvacc$ and $\aenv \envlt{\aact} \aenvacc$
      we also obtain $\joindot{\bproc}{\aenv} \lt{\pair{\aact}{\aenvacc}} \joindot{\bprocacc}{\aenvacc}$.  
    Therefore we have\vspace{-5pt}   found in $\dproc \defdby \joindot{\bprocacc}{\aenvacc}$ the desired  $\dproc\in\joindot{\procs}{\envs}$ 
      with $\joindot{\bproc}{\aenv} \lt{\pair{\aact}{\aenvacc}} \dproc$ 
       and $\pair{\cproc}{\dproc} = \pair{\joindot{\aprocacc}{\aenvacc}}{\dproc}\in\sabisimi{\bullet}$.        
     
    In this way we have established the condition (forth) for $\sabisimi{\bullet}$ to be a bisimulation on $\joindot{\aprocLTS}{\aenvLTS}$. 
      Since the condition (back) can be verified analogously, we conclude that $\sabisimi{\bullet}$ is indeed a bisimulation on $\joindot{\aprocLTS}{\aenvLTS}$. 
      
    By having shown that $\sabisimi{\bullet}$ is a bisimulation on $\joindot{\aprocLTS}{\aenvLTS}$
      under the assumption that $\abisim$ is a bisimulation on $\join{\aprocLTS}{\aenvLTS}$ in which only deterministic environments occur,
        we have established \eqref{eq:prf:prop:parambisim:jiparambisim:det:envs}, from which the proposition follows
          from Lemma~\ref{lem:joindot:join} as argued above. 
\end{proof}

The proposition below clarifies which inclusions hold in general between $\sparambisim{\aenv}$, $\sjiparambisim{\aenv}$, and $\sjiparamsimequiv{\aenv}$.

\begin{prop}\label{prop:parambisim:jiparambisim:jiparamsimequiv}
  The following set-theoretical relationships hold 
    between parameterized bisimilarity~$\sparambisim{\aenv}$,
            \jiparameterized\ bisimilarity~$\sjiparambisim{\aenv}$,
            and \jiparameterized\ simulation equivalence~$\sjiparamsimequiv{\aenv}\,$:
    %
  \begin{enumerate}[label={(\roman{*})},itemsep=0.15ex]
    \item{}\label{it:1:prop:parambisim:jiparambisim:jiparamsimequiv}
      $\parambisim{\aenv} \;\,\subseteq\;\, \jiparambisim{\aenv}\:$ for all environments~$\aenv$.
    \item{}\label{it:2:prop:parambisim:jiparambisim:jiparamsimequiv}  
      $\parambisim{\aenv} \;\,\neq\;\, \jiparambisim{\aenv}\:$
        for some environments~$\aenv$, 
          for which then $\,\parambisim{\aenv} \;\,\subsetneqq\;\, \jiparambisim{\aenv}\:$ holds due to \ref{it:1:prop:parambisim:jiparambisim:jiparamsimequiv}.
        %
    \item{}\label{it:3:prop:parambisim:jiparambisim:jiparamsimequiv}             
      $\jiparambisim{\aenv} \;\,\subseteq\;\, \jiparamsimequiv{\aenv}\:$ for all environments~$\aenv$.
    \item{}\label{it:4:prop:parambisim:jiparambisim:jiparamsimequiv}             
      $\jiparambisim{\aenv} \;\,\neq\;\, \jiparamsimequiv{\aenv}\:$ for some environments~$\aenv$,
          for which then $\,\jiparambisim{\aenv} \;\,\subsetneqq\;\, \jiparamsimequiv{\aenv}\:$ holds due to \ref{it:3:prop:parambisim:jiparambisim:jiparamsimequiv}.
  \end{enumerate}       
  The counterexample statements \ref{it:2:prop:parambisim:jiparambisim:jiparamsimequiv} and \ref{it:4:prop:parambisim:jiparambisim:jiparamsimequiv}
    hold under the proviso
      that environments are included among processes, they permit at least two actions, and are closed under action prefixing and sums.
  This can be weakened to merely require that $\actpref{\aact}{\bact} \plus \aact$ and $\actpref{\aact}{\bact}$ 
  are contained among environments and processes.
\end{prop}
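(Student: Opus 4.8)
The plan is to obtain the two inclusions \ref{it:1:prop:parambisim:jiparambisim:jiparamsimequiv} and \ref{it:3:prop:parambisim:jiparambisim:jiparamsimequiv} directly from earlier results, and to witness the two strictness claims \ref{it:2:prop:parambisim:jiparambisim:jiparamsimequiv} and \ref{it:4:prop:parambisim:jiparambisim:jiparamsimequiv} with the very terms named in the weakened proviso, choosing a different environment for each. For \ref{it:1:prop:parambisim:jiparambisim:jiparamsimequiv} I would simply read off the right-hand chain of Lemma~\ref{lem:joindot:join}: combining the equivalence in \eqref{eq:1:lem:joindot:join} with the implication in \eqref{eq:2:lem:joindot:join} yields $\aproc \parambisim{\aenv} \bproc \Rightarrow \aproc \jiparambisim{\aenv} \bproc$ for every environment~$\aenv$, which is precisely $\parambisim{\aenv} \subseteq \jiparambisim{\aenv}$. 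For \ref{it:3:prop:parambisim:jiparambisim:jiparamsimequiv} I would use the standard fact that bisimilarity is contained in simulation equivalence: any bisimulation witnessing $\join{\aproc}{\aenv} \bisim \join{\bproc}{\aenv}$ is in particular a simulation, and so is its converse, hence $\join{\aproc}{\aenv} \simby \join{\bproc}{\aenv}$ and $\join{\bproc}{\aenv} \simby \join{\aproc}{\aenv}$; by \eqref{eq:jiparamsim:def:jiparamsimby:jiparamsimequiv:jiparambisimby} and \eqref{eq:jiparamsimequiv:def:jiparamsimby:jiparamsimequiv:jiparambisimby} this is exactly $\aproc \jiparamsimequiv{\aenv} \bproc$.

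For \ref{it:2:prop:parambisim:jiparambisim:jiparamsimequiv} I would take the instance already depicted in Figure~\ref{fig:ex:lem:joindot:join}, namely $\aenv \defdby \actpref{\aact}{\bact} \plus \aact$, $\aproc \defdby \actpref{\aact}{\bact}$, and $\bproc \defdby \aenv$. That figure witnesses $\aproc \notparambisim{\aenv} \bproc$: when $\aenv$ takes its $\aact$-branch to the $\bact$-capable state, the deadlocking $\aact$-step of $\bproc$ must be answered under (back), and since $\aproc$'s only $\aact$-step leads to the $\bact$-capable process this forces a pair relating that process to $\deadlock$ at the $\bact$-capable environment; but there the subsequent (forth) condition on the $\bact$-step would require $\deadlock$ to perform $\bact$, which is impossible. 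It remains to check $\aproc \jiparambisim{\aenv} \bproc$, i.e.\ $(\join{\aproc}{\aenv}) \bisim (\join{\bproc}{\aenv})$: here $\join{\aproc}{\aenv}$ has exactly one $\bact$-live $\aact$-successor and one deadlocked one, whereas $\join{\bproc}{\aenv}$ has the same $\bact$-live successor together with three deadlocked ones, so relating the two live successors and identifying all deadlocked successors is a bisimulation. Thus $(\aproc,\bproc)$ lies in $\jiparambisim{\aenv}$ but not in $\parambisim{\aenv}$, and with \ref{it:1:prop:parambisim:jiparambisim:jiparamsimequiv} this gives $\parambisim{\aenv} \subsetneqq \jiparambisim{\aenv}$.

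For \ref{it:4:prop:parambisim:jiparambisim:jiparamsimequiv} I would reuse the same two process terms but now take the \emph{deterministic} environment $\aenv \defdby \actpref{\aact}{\bact}$, with $\aproc \defdby \actpref{\aact}{\bact} \plus \aact$ and $\bproc \defdby \actpref{\aact}{\bact}$. Computing the joins, $\join{\aproc}{\aenv}$ has two $\aact$-successors, one $\bact$-live (from joining the $\actpref{\aact}{\bact}$-summand of $\aproc$ with $\aenv$) and one deadlocked (from joining the bare $\aact$-summand of $\aproc$ with $\aenv$), while $\join{\bproc}{\aenv}$ has only the single $\bact$-live $\aact$-successor. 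These are mutually similar: the deadlocked successor of $\join{\aproc}{\aenv}$ is trivially simulated by the $\bact$-live one, giving $\join{\aproc}{\aenv} \simby \join{\bproc}{\aenv}$, and the unique successor of $\join{\bproc}{\aenv}$ is matched by the $\bact$-live successor of $\join{\aproc}{\aenv}$, giving $\join{\bproc}{\aenv} \simby \join{\aproc}{\aenv}$; hence $\aproc \jiparamsimequiv{\aenv} \bproc$. They are not bisimilar, as the (forth) clause from $\join{\aproc}{\aenv}$ for its deadlocked $\aact$-successor cannot be answered by the only, $\bact$-live, $\aact$-successor of $\join{\bproc}{\aenv}$, so $\aproc \notjiparambisim{\aenv} \bproc$. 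Therefore $(\aproc,\bproc)$ lies in $\jiparamsimequiv{\aenv}$ but not in $\jiparambisim{\aenv}$, and with \ref{it:3:prop:parambisim:jiparambisim:jiparamsimequiv} this yields $\jiparambisim{\aenv} \subsetneqq \jiparamsimequiv{\aenv}$.

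The inclusions \ref{it:1:prop:parambisim:jiparambisim:jiparamsimequiv} and \ref{it:3:prop:parambisim:jiparambisim:jiparamsimequiv} are immediate; the real work is in the counterexamples, and I expect the main obstacle to be the bookkeeping for \ref{it:2:prop:parambisim:jiparambisim:jiparamsimequiv}. There one must check at the same time that the several deadlocked join-successors on the $\bproc$-side collapse under bisimilarity while the single live successor is matched on both sides, and, quite separately, that the parameterized (back) condition genuinely fails; keeping the synchronous parameterized mismatch conceptually apart from the purely local bisimulation argument on the joins is where care is needed.
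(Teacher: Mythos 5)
Your proposal is correct and follows essentially the same route as the paper: part~\ref{it:1:prop:parambisim:jiparambisim:jiparamsimequiv} from the chain in Lemma~\ref{lem:joindot:join}, part~\ref{it:3:prop:parambisim:jiparambisim:jiparamsimequiv} from bisimilarity being a symmetric simulation, the counterexample $\aproc = \actpref{\aact}{\bact}$, $\bproc = \aenv = \actpref{\aact}{\bact} \plus \aact$ for part~\ref{it:2:prop:parambisim:jiparambisim:jiparamsimequiv} (including the same (back)-failure argument forcing $\bact \parambisim{\bact} \deadlock$), and the same counterexample over the deterministic environment $\actpref{\aact}{\bact}$ for part~\ref{it:4:prop:parambisim:jiparambisim:jiparamsimequiv}, with $\aproc$ and $\bproc$ merely swapped, which is immaterial since both relations are symmetric. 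The only cosmetic difference is that you verify the join-side bisimilarity and mutual similarity by exhibiting the relations directly on the join \LTS{}, where the paper computes with term-level isomorphisms and bisimilarities such as $\join{\aproc}{\aenv} \iso \actpref{\aact}{\bact} \plus \aact$.
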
  
\begin{figure}[t!]
\begin{center}  
\begin{tikzpicture}

  \matrix[anchor=north,row sep=1cm,column sep=0.35cm,
            ampersand replacement=\& ] at (7.5,0) {
                   \&  \node(e-pos){}; \&                   \&[0.4cm] \&   \node(p-pos){};     \& \&[0.4cm]                \& \node(q-pos){}; \&               \& \&[0.5cm]     
                   \&[0.1cm] \node(pe-pos){}; \&[0.1cm]          \&[0.4cm]  \&                \& \&                \& \&[0.1cm] \node(qe-pos){};
    \\ 
    \node(e0-pos){};   \&              \&  \node(e1-pos){};     \&         \&   \node(p0-pos){};   \& \&        \node(q0-pos){};   \&             \&  \node(q1-pos){}; \& \&[0.5cm] 
    \node(pe0-pos){};  \&                     \&  \node(pe1-pos){};  \&[0.4cm]  \& \node(qe0-pos){};   \& \& \&  \node(qe1-pos){};  \&            \&[0.1cm] \node(qe2-pos){}; \& \& \& \node(qe3-pos){};              
    \\
    \node(e00-pos){};  \&              \&  \node(e10-pos){};    \&         \&   \node(p00-pos){};  \& \&        \node(q00-pos){};  \&             \&               \& \&[0.5cm]
    \node(pe00-pos){}; \&                     \&                 \&[0.4cm]  \& \node(qe00-pos){};
    \\  
  };
  %
  \path (e-pos) ++ (0cm,0cm) node[red](e){$\aenv$};
  \path (e0-pos) ++ (0cm,0cm) node[red](e0){$\bact$};
  \path (e00-pos) ++ (0cm,0cm) node(e00){$\deadlock$};
  \path (e1-pos) ++ (0cm,0cm) node(e1){$\deadlock$};
  \draw[-implies,double equal sign distance,firebrick](e) to node[left,pos=0.325]{$\aact$} (e0); 
  \draw[-implies,double equal sign distance](e) to node[right,pos=0.325]{$\aact$} (e1); 
  \draw[-implies,double equal sign distance](e0) to node[left,pos=0.4]{$\bact$} (e00); 
  \path (p-pos) ++ (0cm,0cm) node(p){$\aproc$};
  \path (p0-pos) ++ (0cm,0cm) node(p0){$\bact$};
  \path (p00-pos) ++ (0cm,0cm) node(p00){$\deadlock$};
  \draw[->,firebrick](p) to node[left,pos=0.4]{$\aact$} (p0); 
  \draw[->](p0) to node[left,pos=0.4]{$\bact$} (p00); 
  \path (q-pos) ++ (0cm,0cm) node(q){$\bproc$};
  \path (q0-pos) ++ (0cm,0cm) node(q0){$\bact$};
  \path (q00-pos) ++ (0cm,0cm) node(q00){$\deadlock$};
  \path (q1-pos) ++ (0cm,0cm) node(q1){$\deadlock$};
  \draw[->](q) to node[left,pos=0.325]{$\aact$} (q0); 
  \draw[->,firebrick](q) to node[right,pos=0.325]{$\aact$} (q1); 
  \draw[->](q0) to node[left,pos=0.4]{$\bact$} (q00);
  \draw[-,densely dashed,red,out=20,in=160,shorten >=0pt] 
    (p) to node{\scalebox{1.5}{$\times$}} node[pos=0.3,above]{$\aenv$} (q);
  \draw[-,densely dashed,red,out=35,in=150,shorten >=0pt] 
    (p0) to node[pos=0.4]{\scalebox{1.5}{$\times$}} node[pos=0.2,above]{$\bact$} (q1);

  \path(pe-pos) ++ (0cm,0cm) node(pe){$\join{\aproc}{\aenv}$};
  \path(pe0-pos) ++ (0cm,0cm) node(pe0){\small $\join{\bact}{\bact}$};
  \path(pe00-pos) ++ (0cm,0cm) node(pe00){\small $\join{\deadlock}{\deadlock}$};
  \path(pe1-pos) ++ (0cm,0cm) node(pe1){\small $ \join{\bact}{\deadlock}$};
  \draw[->](pe) to node[left,pos=0.325]{$\aact$} (pe0); 
  \draw[->](pe) to node[right,pos=0.325]{$\aact$} (pe1); 
  \draw[->](pe0) to node[left,pos=0.4]{$\bact$} (pe00); 
  \path(qe-pos) ++ (0cm,0cm) node(qe){$\join{\bproc}{\aenv}$};
  \path(qe0-pos) ++ (0cm,0cm) node(qe0){\small $\join{\bact}{\bact}$};
  \path(qe00-pos) ++ (0cm,0cm) node(qe00){\small $\join{\deadlock}{\deadlock}$};
  \path(qe1-pos) ++ (0cm,0cm) node(qe1){\small $ \join{\deadlock}{\bact}$};
  \path(qe2-pos) ++ (0cm,0cm) node(qe2){\small $ \join{\bact}{\deadlock}$};
  \path(qe3-pos) ++ (0cm,0cm) node(qe3){\small $ \join{\deadlock}{\deadlock}$};
  \draw[->](qe) to node[left,pos=0.325]{$\aact$} (qe0); 
  \draw[->](qe) to node[left,pos=0.5,xshift=2pt]{$\aact$} (qe1); 
  \draw[->](qe) to node[right,pos=0.5]{$\aact$} (qe2); 
  \draw[->](qe) to node[right,pos=0.325,xshift=1.5pt]{$\aact$} (qe3); 
  \draw[->,shorten >=-2pt](qe0) to node[right,pos=0.2,xshift=-2pt]{$\bact$} (qe00);   
  \draw[-,thick,densely dashed,magenta,out=20,in=160,shorten >=-2pt] 
    (pe) to (qe);
  \draw[-,thick,densely dashed,magenta,out=30,in=150,shorten >=-2pt] 
    (pe0) to (qe0);
  \draw[-,thick,densely dashed,magenta,out=-40,in=225,shorten >=-2pt] 
    (pe1) to (qe1); 
  \draw[-,thick,densely dashed,magenta,out=-40,in=200,shorten >=-2pt,] 
    (pe1) to (qe2); 
  \draw[-,thick,densely dashed,magenta,out=-40,in=195,shorten >=-2pt] 
    (pe1) to (qe3); 
  \draw[-,thick,densely dashed,magenta,out=-10,in=190,shorten >=-2pt] 
    (pe00) to (qe00);
\end{tikzpicture}  
\end{center}  
  \vspace*{-3ex}
  \caption{\label{fig:lem:jiparamsim:subseteq:paramsim}%
           Example for witnessing $\sparambisim{\aenv} \;\neq\; \sjiparambisim{\aenv}\,$:
           For $\aenv \defdby \actpref{\aact}{\bact} \plus \aact$,
               $\aproc \defdby \actpref{\aact}{\bact}$,
           and $\bproc \defdby \aenv$
           it holds that $\aproc \notparambisim{\aenv} \bproc$
             (indicated by the \alert{mismatches} ${\bf \alert{{\times}}}$ when building a parameterized bisimulation),
             but $\join{\aproc}{\aenv} \bisim \join{\bproc}{\aenv}$
               (indicated by the \magenta{bisimulation links})
             and hence $\aproc \jiparambisim{\aenv} \bproc$.}%
             \label{fig:1}
\end{figure}
\vspace*{-2ex}%
\begin{proof}
  Statement~\ref{it:1:prop:parambisim:jiparambisim:jiparamsimequiv} 
    follows directly from the chain of implications as guaranteed by Lemma~\ref{lem:joindot:join}.
    %
  
  A counterexample for \ref{it:2:prop:parambisim:jiparambisim:jiparamsimequiv} 
    is in Figure~\ref{fig:1}: 
  We have that $\aproc \jiparambisim{\aenv} \bproc$ holds
       due to $ \join{\aproc}{\aenv} =      \join{(\actpref{a}{b} + a)}{(\actpref{a}{b})}
                                     \iso   \actpref{a}{b} + a 
                                     \bisim \actpref{a}{b + a + a + a}
                                     \iso   \join{(\actpref{a}{b} + a)}{(\actpref{a}{b} + a)}
                                     =      \join{\bproc}{\aenv} $,
       where $\siso$ denotes being isomorphic,
    which shows $ \join{\aproc}{\aenv} \bisim \join{\bproc}{\aenv}$.  
  However, $\aproc \notparambisim{\aenv} \bproc$ holds for the following reason:
    Suppose that $\aproc \parambisim{\aenv} \bproc$ holds.
      Then due to $\aenv = \actpref{a}{b} + a \envlt{a} b$,\vspace{-4.25pt}
        and the condition (back) of an underlying parameterized bisimilarity
          the transition $\bproc \proclt{a} 0$ must be matched by the transition $\aproc \proclt{a} b$
            so that $b \parambisim{b} 0$ holds.
              However the latter is false, 
            because $b \notparambisim{b} 0$ holds, as the environment $b$ and the process $b$ can make a $b$\nb-step, but $0$ cannot.
Statement \ref{it:3:prop:parambisim:jiparambisim:jiparamsimequiv}
follows from the fact that bisimilarity is symmetric and it is a 
simulation \cite{sang:2011}. For \ref{it:4:prop:parambisim:jiparambisim:jiparamsimequiv}, let $\aproc$ and
$\bproc$ be as in Figure~\ref{fig:1}, and let $\aenv = \aproc$. Then $\aproc \jiparamsimequiv{\aenv} \bproc$ holds due to 
$\join{\aproc}{\aenv} = \join{(\actpref{a}{b})}{(\actpref{a}{b})}
\iso \actpref{a}{b} \simbyequiv \actpref{a}{b} + a \iso 
\join{(\actpref{a}{b} + a)}{(\actpref{a}{b})} = \join{\bproc}{\aenv}$. However,
$\aproc \notjiparambisim{\aenv} \bproc\,$: indeed, $\join{\bproc}{\aenv} \proclt{a} \join{b}{0} \bisim 0$, to which 
$\join{\aproc}{\aenv}$ can only answer by reducing to $\join{b}{b}
\bisim b$, and clearly $0 \not\bisim b$.
\end{proof}

\subsection*{Parameterized simulatability coincides with \protect\jiparameterized\ simulatability}

While parameterized bisimilarity and \jiparameterized\ bisimilarity 
  are two different relations in general by Proposition~\ref{prop:parambisim:jiparambisim:jiparamsimequiv}, \ref{it:2:prop:parambisim:jiparambisim:jiparamsimequiv},  
    it turns out that this does not hold for the corresponding two concepts of parameterized simulatability. 
For us it was surprising to find the proof of the first of the following two lemmas,
  which together show that
    parameterized simulatability and \jiparameterized\ simulatability coincide.


\begin{lem}\label{lem:jiparamsim:subseteq:paramsim}
  $\: \jiparamsimby{\aenv} \;\,\subseteq\;\, \paramsimby{\aenv} \;$ holds for all environments~$\aenv$.
\end{lem}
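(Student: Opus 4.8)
The plan is to build, once and for all, a single $\aenvLTS$-parameterized simulation out of the given join-simulation, exploiting the freedom that the simulation game (as opposed to the bisimulation game) grants on the right-hand side. Fix a process \LTS~$\aprocLTS = \tuple{\procs,\actions,\sprocltzero}$ and an environment \LTS~$\aenvLTS = \tuple{\envs,\actions,\senvltzero}$. By definition~\eqref{eq:jiparamsim:def:jiparamsimby:jiparamsimequiv:jiparambisimby} the relation $\aproc \jiparamsimby{\aenv} \bproc$ means $\join{\aproc}{\aenv} \simby \join{\bproc}{\aenv}$, so it suffices to exhibit a \parameterizedby{\aenvLTS} simulation $\aparamsim = \familyns{\sasimi{\benv}}{\benv\in\envs}$ with $\aproc \asimi{\aenv} \bproc$ whenever $\join{\aproc}{\aenv} \simby \join{\bproc}{\aenv}$. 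The family I would take is, for every $\benv\in\envs$,
\[
  \asimi{\benv} \,\defdby\, \descsetexpns{ \pair{\aproc}{\bproc} }{ \existsstzero{\benvacc\in\envs}\, \join{\aproc}{\benv} \simby \join{\bproc}{\benvacc} } \punc{.}
\]
The decisive design choice is the decoupling of the two environments: the left join keeps the index $\benv$, but the right join is allowed to use \emph{some}, possibly different, environment $\benvacc$. Equivalently, via Lemma~\ref{lem:joindot:join} this amounts to upgrading the given join-simulation into one that preserves the environment component of the targets, i.e.\ a simulation on $\joindot{\aprocLTS}{\aenvLTS}$. Each $\asimi{\benv}$ is \nonempty\ by reflexivity of $\ssimby$ (take $\benvacc \defdby \benv$ and the diagonal pair), and under the hypothesis $\join{\aproc}{\aenv} \simby \join{\bproc}{\aenv}$ one gets $\aproc \asimi{\aenv} \bproc$ at once (witness $\benvacc \defdby \aenv$).

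The core step is to verify the (forth) clause of Definition~\ref{def:paramsim} for this family. So I would assume $\aproc \asimi{\aenvi{1}} \bproc$, fix a witness $\aenvi{2}$ with $\join{\aproc}{\aenvi{1}} \simby \join{\bproc}{\aenvi{2}}$, and consider an environment step $\aenvi{1} \envlt{\aact} \aenvacci{1}$ together with a process step $\aproc \proclt{\aact} \aprocacc$. By the rule defining $\join{\aprocLTS}{\aenvLTS}$ these combine to $\join{\aproc}{\aenvi{1}} \lt{\aact} \join{\aprocacc}{\aenvacci{1}}$, so applying the (forth) condition of the simulation underlying $\join{\aproc}{\aenvi{1}} \simby \join{\bproc}{\aenvi{2}}$ produces a matching transition, which by the shape of the join rule must be $\join{\bproc}{\aenvi{2}} \lt{\aact} \join{\bprocacc}{\aenvacci{2}}$ with $\bproc \proclt{\aact} \bprocacc$, $\aenvi{2} \envlt{\aact} \aenvacci{2}$, and $\join{\aprocacc}{\aenvacci{1}} \simby \join{\bprocacc}{\aenvacci{2}}$.

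The key move is then to answer in the parameterized game with this very $\bprocacc$: we have $\bproc \proclt{\aact} \bprocacc$, and the relation $\join{\aprocacc}{\aenvacci{1}} \simby \join{\bprocacc}{\aenvacci{2}}$ shows, by taking $\aenvacci{2}$ as the existential witness, that $\aprocacc \asimi{\aenvacci{1}} \bprocacc$. Crucially the index of the reached relation is $\aenvacci{1}$ --- the \emph{left} environment derivative forced by the challenge --- while the slack that keeps the simulation alive is absorbed entirely by the free right environment $\aenvacci{2}$. This establishes (forth), so $\aparamsim$ is a \parameterizedby{\aenvLTS} simulation, yielding $\aproc \paramsimby{\aenv} \bproc$. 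I expect the main subtlety to be exactly this asymmetric bookkeeping --- responding with the left environment's move while pulling the next-round witness from the right environment's move --- and it is illuminating that the same trick collapses for a (back) clause (the back challenge would start from $\join{\bproc}{\aenvi{1}}$, whose matching transition need not reconcile $\aenvi{1}$ with $\aenvi{2}$). This is the conceptual reason why the inclusion holds for simulatability but, by Proposition~\ref{prop:parambisim:jiparambisim:jiparamsimequiv}\ref{it:2:prop:parambisim:jiparambisim:jiparamsimequiv}, fails for bisimilarity.
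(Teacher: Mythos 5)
Your proposal is correct and follows essentially the same route as the paper's proof: you define the identical \parameterizedby{\aenvLTS} family $\asimi{\benv} \defdby \descsetexpns{ \pair{\aproc}{\bproc} }{ \existsstzero{\benvacc\in\envs}\, \join{\aproc}{\benv} \simby \join{\bproc}{\benvacc} }$ (the paper's equation~\eqref{eq:1:lem:jiparamsim:subseteq:paramsim}), verify (forth) by lifting the challenge to a $\sjoin$\nb-transition and reading the witness for the next round off the right environment's derivative, and conclude by instantiating the witness with $\aenv$ itself. Your extra remarks --- the explicit non-emptiness check and the observation of why the same bookkeeping breaks for a (back) clause --- are sound additions that the paper leaves implicit.
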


\begin{proof}
  We fix a process~\LTS~$\aprocLTS = \tuple{\procs,\actions,\sprocltzero}$, 
    and an environment~\LTS~$\aenvLTS = \tuple{\envs,\actions,\senvltzero}$.\vspace{-1.5pt}
    
  As the crucial stepping stone, we show that $\aparamsim = \familynormalsize{ \asimi{\aenv} }{\aenv\in\envs}$ as defined by,
    for all $\aenv\in\envs\,$:
  \begin{equation}\label{eq:1:lem:jiparamsim:subseteq:paramsim}
    \asimi{\aenv} 
      \,\defdby\,
        \descsetexpbig{ \pair{\aproc}{\bproc} \in \procs }
                      {  \existsstzero{\aenvi{2}\in\envs} \bigl[ \join{\aproc}{\aenv} \:\simby\: \join{\bproc}{\aenvi{2}} \bigr] } 
          \subseteq \procs\times\procs 
  \end{equation}
  is an \parameterizedby{\aenvLTS} simulation on $\aprocLTS$.
  For this, we let $\aenv\in\envs$, and $\pair{\aproc}{\bproc} \in \sasimi{\aenv}$ be arbitrary.
    We assume that $\aenv \envlt{\aact} \aenvacc$, and $\aproc \proclt{\aact} \aprocacc$ for some $\aact\in\actions$, $\aenvacc\in\envs$, and $\aprocacc\in\procs$.%
      \vspace*{-4.25pt}
      We have to show that there exists $\bprocacc\in\procs$ with $\bproc \proclt{\aact} \bprocacc$ such that $\pair{\aprocacc}{\bprocacc}\in\sasimi{\aenvacc}$.
      
  From $\aenv \envlt{\aact} \aenvacc$ and $\aproc \proclt{\aact} \aprocacc$ we find that
    $\join{\aproc}{\aenv} \lt{\aact} \join{\aprocacc}{\aenvacc}\,$ holds.\vspace{-2pt}   
  Due to $\pair{\aproc}{\bproc} \in \sasimi{\aenv}$ we can pick $\aenvi{2}\in\envs$
    with $\join{\aproc}{\aenv} \simby \join{\bproc}{\aenvi{2}}$. 
  It follows, by the forward-property (forth) of the (largest) simulation~$\ssimby$
              applied to $\join{\aproc}{\aenv} \simby \join{\bproc}{\aenvi{2}}$ and $\join{\aproc}{\aenv} \lt{\aact} \join{\aprocacc}{\aenvacc}$,
              and by the operational semantics of the join operation,
    that there are $\bprocacc\in\procs$ and $\aenvacci{2}\in\envs$ such that\vspace{-4.25pt}
      $\join{\bproc}{\aenvi{2}} \proclt{\aact} \join{\bprocacc}{\aenvacci{2}}$,
      as well as $\bproc \proclt{\aact} \bprocacc$ and $\aenvi{2} \envlt{\aact} \aenvacci{2}$
      and with $\join{\aprocacc}{\aenvacc} \simby \join{\bprocacc}{\aenvacci{2}}$. 
        The latter shows that \mbox{$\pair{\aprocacc}{\bprocacc}\in\sasimi{\aenvacc}$},  
          and thus we have found $\bproc \proclt{\aact} \bprocacc$ such that $\pair{\aprocacc}{\bprocacc}\in\sasimi{\aenvacc}$.
  In this way we have verified that $\aparamsim = \familynormalsize{ \asimi{\aenv} }{\aenv\in\envs}$ as defined in \eqref{eq:1:lem:jiparamsim:subseteq:paramsim}
    is an \parameterizedby{\aenvLTS} simulation~on~$\aprocLTS$.
  
  \smallskip
  For showing $\: \jiparamsimby{\aenv} \;\,\subseteq\;\, \paramsimby{\aenv} \;$, 
    suppose now that $\aproc \jiparamsimby{\aenv} \bproc$ holds, for some $\aproc,\bproc\in\procs$ and $\aenv\in\envs$. 
      By the definition of $\sjiparamsimby{\aenv}$, this means that $\join{\aproc}{\aenv} \simby \join{\bproc}{\aenv}$ holds.
        That, however, implies $\pair{\aproc}{\bproc} \in \sasimi{\aenv}$ due to \eqref{eq:1:lem:jiparamsim:subseteq:paramsim}.
          But since we have recognized $\aparamsim$ as an \parameterizedby{\aenvLTS} simulation,
            we conclude that $\aproc \paramsimby{\aenv} \bproc$ holds.
\end{proof}

\begin{lem}\label{lem:paramsim:subseteq:jiparamsim}
  $\: \paramsimby{\aenv} \;\,\subseteq\;\, \jiparamsimby{\aenv} \;$ holds for all environments~$\aenv$.
\end{lem}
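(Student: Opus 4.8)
The plan is to note first that this inclusion is already implicit in Lemma~\ref{lem:joindot:join}. Reading its left-hand column top-to-bottom gives $\aproc \paramsimby{\aenv} \bproc \Leftrightarrow (\joindot{\aproc}{\aenv}) \simby (\joindot{\bproc}{\aenv}) \Rightarrow (\join{\aproc}{\aenv}) \simby (\join{\bproc}{\aenv}) \Leftrightarrow \aproc \jiparamsimby{\aenv} \bproc$, and composing these steps yields exactly $\paramsimby{\aenv} \subseteq \jiparamsimby{\aenv}$. This is the simulation analogue of how Proposition~\ref{prop:parambisim:jiparambisim:jiparamsimequiv}, \ref{it:1:prop:parambisim:jiparambisim:jiparamsimequiv} was obtained for bisimilarity, so the shortest proof is a single appeal to Lemma~\ref{lem:joindot:join}.

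For a direct argument that does not pass through the right-determinizing join $\sjoindot$, I would build a plain simulation on $\join{\aprocLTS}{\aenvLTS}$ straight from a parameterized simulation. Given an \parameterizedby{\aenvLTS} simulation $\aparamsim = \familyns{\sasimi{\benv}}{\benv\in\envs}$, I set
\[
  \sasim \,\defdby\, \descsetexpns{ \pair{\join{\aproc}{\aenv}}{\join{\bproc}{\aenv}} }{ \aenv\in\envs,\ \aproc \asimi{\aenv} \bproc }.
\]
If $\aproc \paramsimby{\aenv} \bproc$ is witnessed by such an $\aparamsim$, then $\pair{\join{\aproc}{\aenv}}{\join{\bproc}{\aenv}} \in \sasim$, so once $\sasim$ is shown to be a simulation we obtain $\join{\aproc}{\aenv} \simby \join{\bproc}{\aenv}$, that is, $\aproc \jiparamsimby{\aenv} \bproc$.

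The (forth) verification is where the operational semantics does all the work. Suppose $\pair{\join{\aproc}{\aenv}}{\join{\bproc}{\aenv}} \in \sasim$ and $\join{\aproc}{\aenv} \lt{\aact} c$. By the transition rule for $\sjoin$ the step must have the form $\join{\aproc}{\aenv} \lt{\aact} \join{\aprocacc}{\aenvacc}$ with $\aproc \proclt{\aact} \aprocacc$ and $\aenv \envlt{\aact} \aenvacc$. The key point is that a join step automatically carries an environment step $\aenv \envlt{\aact} \aenvacc$, which is exactly the trigger demanded by the (forth) condition of $\sasimi{\aenv}$; applying that condition to $\aproc \asimi{\aenv} \bproc$, $\aenv \envlt{\aact} \aenvacc$, and $\aproc \proclt{\aact} \aprocacc$ produces $\bprocacc\in\procs$ with $\bproc \proclt{\aact} \bprocacc$ and $\aprocacc \asimi{\aenvacc} \bprocacc$. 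The $\sjoin$-rule then gives $\join{\bproc}{\aenv} \lt{\aact} \join{\bprocacc}{\aenvacc}$, and $\aprocacc \asimi{\aenvacc} \bprocacc$ puts $\pair{\join{\aprocacc}{\aenvacc}}{\join{\bprocacc}{\aenvacc}}$ back into $\sasim$, closing the match.

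In contrast to its converse Lemma~\ref{lem:jiparamsim:subseteq:paramsim}, I expect this direction to be entirely routine, with no genuine obstacle: the only subtlety is that in a join step the environment component is forced to move to the \emph{same} target $\aenvacc$ on the $\aproc$- and $\bproc$-sides, which is precisely what lets the resulting pair re-enter $\sasim$ at the matching index $\aenvacc$. No determinism or image-finiteness assumption on $\aenv$ is needed here.
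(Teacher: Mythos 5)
Your proposal is correct and takes essentially the same route as the paper: the paper likewise first observes that the inclusion follows from the left-hand chain of Lemma~\ref{lem:joindot:join}, and then gives the same direct construction, defining $\sasim$ as the set of pairs $\pair{\join{\aproc}{\aenv}}{\join{\bproc}{\aenv}}$ with $\aproc \asimi{\aenv} \bproc$ for a given \parameterizedby{\aenvLTS} simulation and checking that it is a simulation on $\join{\aprocLTS}{\aenvLTS}$. Your (forth) verification merely spells out the step the paper dismisses as ``easy to verify'', and your closing remark about the matched environment target $\aenvacc$ is exactly the reason that check goes through.
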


\begin{proof}
  The inclusion as stated by the lemma follows from the chain of implications displayed on the left in Lemma~\ref{lem:joindot:join},
    which as stated in its proof can be proved analogously as the implications on the right there.
  But since we dropped the argument there, we also provide the sketch of a direct proof here.
      
  
  Let $\aparamsim = \family{\asimi{\aenv}}{\aenv\in\envs}$ be an \parameterizedby{\aenvLTS} simulation 
    on a process~\LTS $\aprocLTS = \tuple{\procs,\actions,\sprocltzero}$ 
      with respect to an environment~\LTS\ $\aenvLTS = \tuple{ \envs, \actions, \senvltzero }$.
  Then it is easy to verify that:
  \begin{equation*}
     \sasim \defdby \descsetexpns{ \tuple{ \join{\aproc}{\aenv}, \join{\bproc}{\aenv} } }
                                 { \aproc, \bproc \in \procs \text{ and } \aenv\in\envs \text{ such that } \aproc \asimi{\aenv} \bproc } 
            \:\subseteq\: \join{\procs}{\envs}                  
  \end{equation*}
  is a simulation on $\join{\aprocLTS}{\aenvLTS} = \tuple{ \join{\procs}{\envs}, \actions, \sltzero }$. 
  This statement implies that if $\aproc \paramsimby{\aenv} \bproc$,
    then $\join{\aproc}{\aenv} \simby \join{\bproc}{\aenv}$ follows, 
      and hence $\aproc \jiparamsimby{\aenv} \bproc$.
\end{proof}


\begin{prop}\label{prop:jiparamsim:equals:parasim}
  For all environments~$\aenv$ the following two statements hold:
        %
  \begin{enumerate}[label={(\roman{*})},align=right,itemsep=0ex]
    \item{}\label{it:1:prop:jiparamsim:equals:parasim}
      $\:\jiparamsimby{\aenv} \;\,=\;\, \paramsimby{\aenv}\:$.
    \item{}\label{it:2:prop:jiparamsim:equals:parasim}
      $\:\jiparamsimequiv{\aenv} \;\,=\;\, \paramsimequiv{\aenv}\:$.
  \end{enumerate}  
\end{prop}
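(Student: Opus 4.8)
The plan is to obtain both statements as immediate corollaries of the two preceding lemmas, which between them have already carried out all of the substantive work. For statement~\ref{it:1:prop:jiparamsim:equals:parasim} I would simply combine the two inclusions just established: Lemma~\ref{lem:jiparamsim:subseteq:paramsim} yields $\jiparamsimby{\aenv} \subseteq \paramsimby{\aenv}$, and Lemma~\ref{lem:paramsim:subseteq:jiparamsim} yields the reverse inclusion $\paramsimby{\aenv} \subseteq \jiparamsimby{\aenv}$, each for every environment~$\aenv$. The equality of the two relations then follows at once by antisymmetry of set inclusion.

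For statement~\ref{it:2:prop:jiparamsim:equals:parasim} I would unfold both simulation-equivalence relations into their defining conjunctions and apply statement~\ref{it:1:prop:jiparamsim:equals:parasim} conjunct by conjunct. By the definition of $\sjiparamsimequiv{\aenv}$ in \eqref{eq:jiparamsimequiv:def:jiparamsimby:jiparamsimequiv:jiparambisimby} we have $\aproc \jiparamsimequiv{\aenv} \bproc$ precisely when $\aproc \jiparamsimby{\aenv} \bproc \logand \bproc \jiparamsimby{\aenv} \aproc$, and $\sparamsimequiv{\aenv}$ is defined analogously as mutual parameterized simulatability, so that $\aproc \paramsimequiv{\aenv} \bproc$ holds precisely when $\aproc \paramsimby{\aenv} \bproc \logand \bproc \paramsimby{\aenv} \aproc$. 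Invoking the equality from \ref{it:1:prop:jiparamsim:equals:parasim} on each conjunct separately---the second conjunct being an instance of the same equality with the roles of $\aproc$ and $\bproc$ interchanged---shows the two conjunctions to be logically equivalent, and hence $\jiparamsimequiv{\aenv} = \paramsimequiv{\aenv}$.

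I expect no genuine obstacle at this stage: all of the conceptual and technical difficulty is concentrated in Lemma~\ref{lem:jiparamsim:subseteq:paramsim}---the inclusion the authors themselves flag as surprising---whose proof builds a parameterized simulation that tolerates replacing the environment parameter~$\aenv$ by an arbitrary witnessing environment~$\aenvi{2}$ supplied by the join-simulation. Once both inclusions are in hand, the proposition is a purely formal consequence obtained by combining inclusions and unfolding the definition of simulation equivalence as mutual simulatability.
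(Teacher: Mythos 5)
Your proposal is correct and follows essentially the same route as the paper: statement \ref{it:1:prop:jiparamsim:equals:parasim} as the conjunction of Lemma~\ref{lem:jiparamsim:subseteq:paramsim} and Lemma~\ref{lem:paramsim:subseteq:jiparamsim}, and statement \ref{it:2:prop:jiparamsim:equals:parasim} by unfolding simulation equivalence as mutual simulatability and applying \ref{it:1:prop:jiparamsim:equals:parasim} twice. The paper phrases the second step as an equality of intersections with converse relations, $\sjiparamsimequiv{\aenv} = \sjiparamsimby{\aenv} \cap \scanjiparamsim{\aenv} = \sparamsimby{\aenv} \cap \scanparamsim{\aenv} = \sparamsimequiv{\aenv}$, which is exactly your conjunct-by-conjunct argument in set-theoretic notation.
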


\begin{proof}
  The inclusions ``$\subseteq$'' and ``$\supseteq$'' that make up statement~\ref{it:1:prop:jiparamsim:equals:parasim}
    are guaranteed
      by Lemma~\ref{lem:jiparamsim:subseteq:paramsim} and by Lemma~\ref{lem:paramsim:subseteq:jiparamsim}, respectively. 
  Then 
       \ref{it:2:prop:jiparamsim:equals:parasim} follows from 
                                                              \ref{it:1:prop:jiparamsim:equals:parasim}     
    by:
  \mbox{%
  $\jiparamsimequiv{\aenv} \,=\,  \jiparamsimby{\aenv} \cap \canjiparamsim{\aenv}
                           \,=\,  \paramsimby{\aenv}   \cap \canparamsim{\aenv}
                           \,=\,  \paramsimequiv{\aenv} \,$.}
\end{proof}

The theorem below collects results we have obtained about which inclusions hold in general 
  between parameterized bisimilarity, \jiparameterized\ bisimilarity, and \jiorparameterized\ simulation equivalence.

\begin{thm}\label{thm:incl:jiorparam:bisim:sim}
  The following set-theoretical relationships hold
    between parameterized bisimilarity, \jiparameterized\ bisimilarity, and \jiorparameterized\ simulation equivalence,
      for environments $\aenv, \benv, \cenv$:
      \begin{equation*}
        \begin{matrix}
          \sparambisim{\aenv}
            & 
            \underset{\text{\nf Prop.$\,$\ref{prop:parambisim:jiparambisim:jiparamsimequiv},\ref{it:1:prop:parambisim:jiparambisim:jiparamsimequiv}}}{\subseteq} 
            &
          \sjiparambisim{\aenv} 
            &
            \,\underset{\text{\nf Prop.$\,$\ref{prop:parambisim:jiparambisim:jiparamsimequiv},\ref{it:3:prop:parambisim:jiparambisim:jiparamsimequiv}}}{\subseteq}\,
            & 
          \sparamsimequiv{\aenv} 
            &
            \underset{\text{\nf Prop.$\,$\ref{prop:jiparamsim:equals:parasim},\ref{it:2:prop:jiparamsim:equals:parasim}}}{=} 
            &
          \sjiparamsimequiv{\aenv}
          &
          \qquad \text{(for \underline{\smash{all}} $\aenv$)} \punc{,}
          \\[3ex]
          \sparambisim{\benv}
            & 
            \underset{\text{\nf Prop.$\,$\ref{prop:parambisim:jiparambisim:jiparamsimequiv},\ref{it:2:prop:parambisim:jiparambisim:jiparamsimequiv}}}{\subsetneqq}
            & 
          \sjiparambisim{\benv} 
            &
            \underset{\text{\nf Prop.$\,$\ref{prop:parambisim:jiparambisim:jiparamsimequiv},\ref{it:3:prop:parambisim:jiparambisim:jiparamsimequiv}}}{\subseteq} 
            &
          \sparamsimequiv{\benv} 
            &
            \underset{\text{\nf Prop.$\,$\ref{prop:jiparamsim:equals:parasim},\ref{it:2:prop:jiparamsim:equals:parasim}}}{=} 
            &
          \sjiparamsimequiv{\benv} 
          &
          \qquad \text{(for \underline{\smash{some}}~$\benv$}) \punc{,}
          \\[3ex]
          \sparambisim{\cenv}
            & 
            \underset{\text{\nf Prop.$\,$\ref{prop:parambisim:jiparambisim:jiparamsimequiv},\ref{it:1:prop:parambisim:jiparambisim:jiparamsimequiv}}}{\subseteq}
            & 
          \sjiparambisim{\cenv} 
            &
            \underset{\text{\nf Prop.$\,$\ref{prop:parambisim:jiparambisim:jiparamsimequiv},\ref{it:4:prop:parambisim:jiparambisim:jiparamsimequiv}}}{\subsetneqq} 
            &
          \sparamsimequiv{\cenv} 
            &
            \underset{\text{\nf Prop.$\,$\ref{prop:jiparamsim:equals:parasim},\ref{it:2:prop:jiparamsim:equals:parasim}}}{=} 
            &
          \sjiparamsimequiv{\cenv} 
          &
          \qquad \text{(for \underline{\smash{some}}~$\cenv$}) \punc{,}
        \end{matrix}
      \end{equation*}      
      %
      where 
        the statements that guarantee the relationship in question are indicated.
\end{thm}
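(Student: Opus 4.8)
The plan is to read this theorem as a pure compilation statement: every individual link in the three displayed chains has already been proved in the preceding results, so the task reduces to reading off each relation from its indicated source and then closing up each row by transitivity of $\subseteq$ together with substitution along the equality $\sjiparamsimequiv{\aenv} = \sparamsimequiv{\aenv}$.

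First I would dispose of the top row, which is asserted for \emph{all} environments~$\aenv$. The inclusion $\sparambisim{\aenv} \subseteq \sjiparambisim{\aenv}$ is exactly Proposition~\ref{prop:parambisim:jiparambisim:jiparamsimequiv}\,\ref{it:1:prop:parambisim:jiparambisim:jiparamsimequiv}, and $\sjiparambisim{\aenv} \subseteq \sjiparamsimequiv{\aenv}$ is Proposition~\ref{prop:parambisim:jiparambisim:jiparamsimequiv}\,\ref{it:3:prop:parambisim:jiparambisim:jiparamsimequiv}. The closing equality $\sjiparamsimequiv{\aenv} = \sparamsimequiv{\aenv}$ comes from Proposition~\ref{prop:jiparamsim:equals:parasim}\,\ref{it:2:prop:jiparamsim:equals:parasim}; rewriting the right-hand end of the chain with it reproduces the displayed first row verbatim.

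Next I would treat the two rows asserting strictness, where the only fresh content is to name a witnessing environment; all non-strict links in them are instances of the top row. For the middle row I would take $\benv \defdby \actpref{a}{b} \plus a$ with processes $\aproc \defdby \actpref{a}{b}$ and $\bproc \defdby \benv$ as in Figure~\ref{fig:1}: the counterexample behind Proposition~\ref{prop:parambisim:jiparambisim:jiparamsimequiv}\,\ref{it:2:prop:parambisim:jiparambisim:jiparamsimequiv} gives $\aproc \notparambisim{\benv} \bproc$ yet $\aproc \jiparambisim{\benv} \bproc$, which upgrades the first link to $\sparambisim{\benv} \subsetneqq \sjiparambisim{\benv}$. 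For the bottom row I would take $\cenv \defdby \actpref{a}{b}$ (that is, $\aenv = \aproc$ in the notation of Proposition~\ref{prop:parambisim:jiparambisim:jiparamsimequiv}\,\ref{it:4:prop:parambisim:jiparambisim:jiparamsimequiv}) with the same two processes: there $\aproc \jiparamsimequiv{\cenv} \bproc$ but $\aproc \notjiparambisim{\cenv} \bproc$, which upgrades the middle link to $\sjiparambisim{\cenv} \subsetneqq \sparamsimequiv{\cenv}$.

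There is no genuine mathematical obstacle, since all the work lives in the cited lemmas and propositions and the theorem merely collates them. The one place that calls for care is the bookkeeping around simulation equivalence: Proposition~\ref{prop:parambisim:jiparambisim:jiparamsimequiv}\,\ref{it:3:prop:parambisim:jiparambisim:jiparamsimequiv} is phrased with $\sjiparamsimequiv{\aenv}$ on the right, whereas the display writes $\sparamsimequiv{\aenv}$ in that slot, so one must invoke Proposition~\ref{prop:jiparamsim:equals:parasim}\,\ref{it:2:prop:jiparamsim:equals:parasim} consistently to pass between the two notations. I would also verify that the single example of Figure~\ref{fig:1} genuinely furnishes both gaps---$\sparambisim{\benv} \subsetneqq \sjiparambisim{\benv}$ for the middle row and $\sjiparambisim{\cenv} \subsetneqq \sparamsimequiv{\cenv}$ for the bottom row---by reusing its two processes under the two different choices of environment.
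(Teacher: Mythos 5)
Your proposal is correct and matches the paper's own treatment exactly: the paper gives no separate proof for Theorem~\ref{thm:incl:jiorparam:bisim:sim}, since the theorem is stated as a compilation whose justification consists of the citations displayed beneath each relation symbol, namely Proposition~\ref{prop:parambisim:jiparambisim:jiparamsimequiv} for the inclusions and strictness witnesses (with the Figure~\ref{fig:1} environments $\actpref{\aact}{\bact} \plus \aact$ and $\actpref{\aact}{\bact}$, exactly as you identify them) and Proposition~\ref{prop:jiparamsim:equals:parasim}\,\ref{it:2:prop:jiparamsim:equals:parasim} for passing between $\sjiparamsimequiv{\aenv}$ and $\sparamsimequiv{\aenv}$. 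Your bookkeeping observation about rewriting the right-hand end of Proposition~\ref{prop:parambisim:jiparambisim:jiparamsimequiv}\,\ref{it:3:prop:parambisim:jiparambisim:jiparamsimequiv} along that equality is precisely the only glue the paper's presentation implicitly relies on.
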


\subsection*{Discrimination \protect\preorder\ induced by (ji-)parameterized similarity}

Larsen noted in \cite[p.$\,$209--210]{lars:1987}:
  ``Due to the modal characterization [see Theorem~\ref{thm:log-char:parambisim}]
      and the simple characterization of the discrimination ordering presented [see Theorem~\ref{thm:char:discr-preorder:parambisim}], 
        we are  confident that the notion of parameterized bisimulation equivalence proposed is indeed a natural one''.
Indeed Larsen also explains that 
  ``the simulation ordering does not characterize the discrimination ordering generated by this alternative parameterized version [namely $\sjiparambisim{\aenv}$]''.
  
This is witnessed by the following proposition.
  Indeed it demonstrates
    that a characterization 
      of the discrimination \preorder\ induced by \jiparameterized\ bisimilarity~$\sjiparambisim{\aenv}$
        cannot, in analogy with Theorem~\ref{thm:char:discr-preorder:parambisim} for~$\sparambisim{\aenv}\,$,
          be of the form 
            $ \:
              \aenv \simby \benv 
                \;\Longleftrightarrow\;  
              \sjiparambisim{\benv}
               \;\subseteq\;
             \sjiparambisim{\aenv} \: $,
        for all environments~$\aenv$ and $\benv$.

\begin{prop}\label{prop:discr-preorder:jiparambisim}
  There are environments~$\aenv$ and $\benv$ such that:
  \begin{equation}
    \aenv \simby \benv
      \;\; \logand \;\;\;
     \sjiparambisim{\benv}
       \;\not\subseteq\;
     \sjiparambisim{\aenv} \:\punc{.}
  \end{equation}
\end{prop}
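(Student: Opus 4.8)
The plan is to produce a direct counterexample to the would-be analogue $\aenv \simby \benv \Longleftrightarrow \sjiparambisim{\benv} \subseteq \sjiparambisim{\aenv}$ of Theorem~\ref{thm:char:discr-preorder:parambisim}, reusing the separating configuration of Figure~\ref{fig:1}. I would set $\aproc \defdby \actpref{\aact}{\bact}$ and $\bproc \defdby \actpref{\aact}{\bact} \plus \aact$, and take as the two environments $\benv \defdby \actpref{\aact}{\bact} \plus \aact$ (the environment of Figure~\ref{fig:1}, now playing the role of $\benv$) together with its deterministic truncation $\aenv \defdby \actpref{\aact}{\bact}$. The pair $\pair{\aproc}{\bproc}$ is then the intended witness that lies in $\sjiparambisim{\benv}$ but not in $\sjiparambisim{\aenv}$.

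Three things have to be checked. (i) $\aenv \simby \benv$: the finite relation $\setexp{ \pair{\actpref{\aact}{\bact}}{\benv},\, \pair{\bact}{\bact},\, \pair{\deadlock}{\deadlock} }$ is a simulation, since the only move $\actpref{\aact}{\bact} \proclt{\aact} \bact$ is answered by $\benv \proclt{\aact} \bact$, and thereafter the behaviours agree. (ii) $\aproc \jiparambisim{\benv} \bproc$: this is precisely the statement verified for Figure~\ref{fig:1} in the proof of Proposition~\ref{prop:parambisim:jiparambisim:jiparamsimequiv},\ref{it:2:prop:parambisim:jiparambisim:jiparamsimequiv}, via $\join{\aproc}{\benv} \iso \actpref{\aact}{\bact} \plus \aact \bisim \join{\bproc}{\benv}$, so I would simply invoke it. (iii) $\aproc \notjiparambisim{\aenv} \bproc$: I would compute the two joins with the deterministic $\aenv = \actpref{\aact}{\bact}$, obtaining $\join{\aproc}{\aenv} = \join{(\actpref{\aact}{\bact})}{(\actpref{\aact}{\bact})} \iso \actpref{\aact}{\bact}$ while $\join{\bproc}{\aenv} = \join{(\actpref{\aact}{\bact} \plus \aact)}{(\actpref{\aact}{\bact})} \iso \actpref{\aact}{\bact} \plus \aact$, the extra summand arising because the bare $\aact$-branch of $\bproc$ synchronises with the $\aact$-step of $\aenv$ to yield a transition $\join{\bproc}{\aenv} \proclt{\aact} \join{\deadlock}{\bact}$ into a deadlock. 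Since $\actpref{\aact}{\bact} \notbisim \actpref{\aact}{\bact} \plus \aact$ (the extra $\aact$-transition to deadlock cannot be matched), this gives $\join{\aproc}{\aenv} \notbisim \join{\bproc}{\aenv}$.

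Taken together these yield $\aenv \simby \benv$ and $\pair{\aproc}{\bproc} \in \sjiparambisim{\benv} \mysetminus \sjiparambisim{\aenv}$, i.e.\ $\sjiparambisim{\benv} \not\subseteq \sjiparambisim{\aenv}$, as required. I expect no real obstacle: the only point requiring care is to choose $\aenv$ both small enough to be simulated by $\benv$ and still discriminating enough to separate $\aproc$ and $\bproc$ under join, and the deterministic truncation $\actpref{\aact}{\bact}$ of $\benv$ achieves both at once. Should a direct computation of the joins in step (iii) be deemed too ad hoc, it can alternatively be obtained by invoking Proposition~\ref{prop:parambisim:jiparambisim:det:envs} (as $\aenv$ is deterministic) to reduce $\aproc \notjiparambisim{\aenv} \bproc$ to $\aproc \notparambisim{\aenv} \bproc$; the latter fails the back-condition, since the branch $\bproc \proclt{\aact} \deadlock$ can only be answered by $\aproc \proclt{\aact} \bact$ under the residual environment $\bact$, which would require the false $\bact \parambisim{\bact} \deadlock$.
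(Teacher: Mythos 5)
Your proposal is correct and coincides with the paper's own proof: the paper uses exactly the same counterexample, namely $\aenv = \actpref{\aact}{\bact}$, $\benv = \actpref{\aact}{\bact} \plus \aact$, with witness pair $\aproc = \aenv$, $\bproc = \benv$, asserting $\aenv \simby \benv$, $\aproc \jiparambisim{\benv} \bproc$, and $\aproc \notjiparambisim{\aenv} \bproc$. The only difference is that the paper states these three facts without elaboration, whereas you verify them explicitly (including the alternative route to (iii) via Proposition~\ref{prop:parambisim:jiparambisim:det:envs}), and your verifications are correct.
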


\begin{proof}
  Let $\aenv = \actpref{\aact}{\bact}$ and 
  $\benv = \actpref{\aact}{\bact} \plus \aact$. We have that 
  $\aenv \simby \benv$. Set $\aproc = \aenv$ and 
  $\bproc = \benv$. We have that $\aproc \jiparambisim{\benv}
  \bproc$ but $\aproc \notjiparambisim{\aenv}
  \bproc$, hence $\sjiparambisim{\benv}$ is not contained
  in $\sjiparambisim{\aenv}$.   
\end{proof}

Unfortunately we have not yet found an appealing characterization of the discrimination \preorder\ that is induced by $\sjiparambisim{\aenv}$.
We formulate this question together with a perhaps also interesting specialization
  as the open problems \ref{P1} and \ref{P2} in the conclusion.


However, and somewhat surprisingly,
    we do obtain characterizations analogous to Theorem~\ref{thm:char:discr-preorder:parambisim}
      for the discrimination \preorder\ on environments $\aenv$ with respect to \jiorparameterized\ similarity $\sjiparamsimby{\aenv}$ and~$\sparamsimby{\aenv}$,
        and with respect to \jiorparameterized\ simulation equivalence $\sjiparamsimequiv{\aenv}$ and~$\sparamsimequiv{\aenv}$.

Similar to the proviso for the `completeness' direction ``$\Rightarrow$'' 
  in \eqref{eq:thm:char:discr-preorder:parambisim} of Theorem~\ref{thm:char:discr-preorder:parambisim}
    our characterization of $\sjiparamsimequiv{\aenv}$ 
      requires an assumption that guarantees that the structure of the underlying process \LTS\
        is sufficiently rich in relation to the environment \LTS.
While Larsen assumed closure under the formation of action prefixing and finite sums,
  we will assume the existence of a `universal' process,
    and that the underlying process \LTS\ contains the environment \LTS, and is closed under the formation of joins.
(The assumption of a universal process simplifies the proof, but can be dropped.)  
    
Let $\aprocLTS = \tuple{\procs,\actions,\sprocltzero}$ be a process \LTS.
We say that a process $\aunivproc\in\procs$
  is \emph{universal} 
    if for all $\aact\in\actions$ there is 
      a transition $\aunivproc \proclt{\aact} \aunivprocacc$ with $\aunivprocacc\in\procs$ in $\aprocLTS$
        such that $\aunivprocacc \bisim \aunivproc$
          (consequently it permits all actions in transitions from any of its reachable states).
Note that all universal processes in $\aprocLTS$ are bisimilar. 
If $\aprocLTS$ is additionally closed under the formation~$\sjoin$,
  then $\join{\aunivproc}{\aproc} \bisim \join{\aproc}{\aunivproc} \bisim \aproc$
    holds for all $\aproc,\aunivproc\in\procs$ where $\aunivproc$ is universal.           
    
For proving our characterization below, Theorem~\ref{thm:char:discr-preorder:jiparamsim:jiparamsimequiv}
  we will use the following lemma.    
    
\begin{lem}\label{lem:aux1}
For all environments $\aenv,\benv$ it holds,
  provided that the environment \LTS\ is closed under joins:
\[
\aenv \simby \join{\benv}{\aenv} \;\;\iff\;\; \aenv \simby \benv \punc{.}
\]
\end{lem}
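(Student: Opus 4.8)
The plan is to prove the two implications of the biconditional separately, in each direction by exhibiting an explicit relation and checking the (forth) condition of Definition~\ref{def:bisim:sim}, relying throughout on the operational semantics of $\sjoin$ from Definition~\ref{def:join}: a join state $\join{\cenvi{1}}{\cenvi{2}}$ admits an \transitionact{\aact} precisely when \emph{both} $\cenvi{1}$ and $\cenvi{2}$ admit \transitionsact{\aact}, and it then moves to the join of the two targets. Because the environment \LTS\ is assumed closed under joins, $\join{\benv}{\aenv}$ is itself an environment, so both sides of each occurrence of $\ssimby$ live in one and the same \LTS\ and the simulation \preorder\ is meaningful.

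For the direction ``$\Leftarrow$'' I would assume $\aenv \simby \benv$ and establish $\aenv \simby \join{\benv}{\aenv}$ by showing that the relation
\[
  \sarel \defdby \descsetexpns{ \pair{\aenvi{0}}{\join{\benvi{0}}{\aenvi{0}}} }{ \aenvi{0} \simby \benvi{0} } \punc{.}
\]
is a simulation. Indeed, if $\aenvi{0} \simby \benvi{0}$ and $\aenvi{0} \envlt{\aact} \aenvacci{0}$, then since $\aenvi{0}$ can be simulated by $\benvi{0}$ there is a matching step $\benvi{0} \envlt{\aact} \benvacc_{0}$ with $\aenvacci{0} \simby \benvacc_{0}$; the two steps combine, by the join rule, into $\join{\benvi{0}}{\aenvi{0}} \envlt{\aact} \join{\benvacc_{0}}{\aenvacci{0}}$, and the target pair $\pair{\aenvacci{0}}{\join{\benvacc_{0}}{\aenvacci{0}}}$ again lies in $\sarel$. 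As $\aenv \simby \benv$ yields $\pair{\aenv}{\join{\benv}{\aenv}} \in \sarel$, this gives $\aenv \simby \join{\benv}{\aenv}$.

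For the direction ``$\Rightarrow$'' I would fix a simulation $\sasim$ with $\aenv \asim \join{\benv}{\aenv}$ and project it onto the left join-component, defining
\[
  \sarel \defdby \descsetexpns{ \pair{\aenvi{0}}{\benvi{0}} }{ \existsstzero{\cenvi{0}\in\envs}\; \aenvi{0} \asim (\join{\benvi{0}}{\cenvi{0}}) } \punc{.}
\]
Given $\pair{\aenvi{0}}{\benvi{0}} \in \sarel$ with witness $\cenvi{0}$ and a step $\aenvi{0} \envlt{\aact} \aenvacci{0}$, the (forth) condition of $\sasim$ supplies a matching step $\join{\benvi{0}}{\cenvi{0}} \envlt{\aact} \join{\benvacc_{0}}{\cenvacc_{0}}$ with $\aenvacci{0} \asim (\join{\benvacc_{0}}{\cenvacc_{0}})$, where by the join semantics $\benvi{0} \envlt{\aact} \benvacc_{0}$ and $\cenvi{0} \envlt{\aact} \cenvacc_{0}$. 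Thus $\benvi{0}$ matches the step and $\pair{\aenvacci{0}}{\benvacc_{0}} \in \sarel$ with witness $\cenvacc_{0}$, so $\sarel$ is a simulation; since $\aenv \asim (\join{\benv}{\aenv})$ puts $\pair{\aenv}{\benv}$ in $\sarel$ (witness $\cenvi{0} \defdby \aenv$), we conclude $\aenv \simby \benv$.

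The main obstacle — and the reason the projected relation carries an existential over its right join-component — is that for a nondeterministic $\aenv$ the second coordinate of the reached join need not keep tracking $\aenv$: a step $\aenvi{0} \envlt{\aact} \aenvacci{0}$ may be answered in the join by a transition whose right part $\cenvacc_{0}$ bears no relation to $\aenvacci{0}$. What rescues the argument is exactly that every transition out of a join state \emph{forces and exposes} a transition of its left component, so the projection to $\benv$ remains a genuine simulation however the right component drifts. I expect the one delicate point to be justifying that a transition $\join{\benvi{0}}{\cenvi{0}} \envlt{\aact} u'$ must have the form $u' = \join{\benvacc_{0}}{\cenvacc_{0}}$ with the two projected steps present; this is precisely where the closure hypothesis and the convention that join transitions are exactly those generated by the rule of Definition~\ref{def:join} are used, while the remaining verifications are routine.
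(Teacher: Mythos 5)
Your proof is correct, and it follows the same overall strategy as the paper's: each direction is established by exhibiting an explicit relation and checking the (forth) condition, and your relation for ``$\Leftarrow$'', namely $\descsetexpns{\pair{\aenvi{0}}{\join{\benvi{0}}{\aenvi{0}}}}{\aenvi{0} \simby \benvi{0}}$, is literally the one used in the paper. The one divergence is in ``$\Rightarrow$'': the paper asserts that $\descsetexpns{\pair{\aenv}{\benv}}{\aenv \simby \join{\benv}{\aenv}}$ is a simulation, whereas you project a fixed simulation $\sasim$ while existentially quantifying the right join-component, $\descsetexpns{\pair{\aenvi{0}}{\benvi{0}}}{\existsstzero{\cenvi{0}}\,\aenvi{0} \asim (\join{\benvi{0}}{\cenvi{0}})}$. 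Your variant is the more careful one: with the paper's relation, after matching a step $\aenv \envlt{\aact} \aenvacc$ one only obtains $\aenvacc \simby \join{\benvacc}{\aenvdacc}$ for some \derivativeact{\aact}~$\aenvdacc$ of $\aenv$ that need not equal $\aenvacc$, so re-entering the relation requires an auxiliary observation (that $\aenvacc \simby \join{\benvacc}{\aenvdacc}$ implies $\aenvacc \simby \join{\benvacc}{\aenvacc}$ --- which is proved exactly by the existential relation you wrote down). Your formulation absorbs this drift of the right component into the existential witness, so the one-step verification closes without any auxiliary facts; this is precisely the delicate point you flagged, and you handled it correctly, whereas the paper leaves it under ``can be verified in a straightforward manner''.
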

\begin{proof}
  For showing the direction ``$\Rightarrow$'', and the direction ``$\Leftarrow$'' in the statement of the lemma,
    it suffices to prove that the relation
        $\{\pair{\aenv}{\benv} \mid \aenv \simby \join{\benv}{\aenv}\}$, 
        and respectively,
          that the relation $\{\pair{\aenv}{\join{\benv}{\aenv}} \mid \aenv \simby \benv\}$
      is a simulation. Both of these statements can be verified in a straightforward manner. 
\end{proof}


We now are in a position to show characterizations of the discrimination \preorder{s} 
  of \jiorparameterized\ simulatability and \jiorparameterized\ simulation equivalence,
    as formulated by the theorem below.

\begin{thm}
      \label{thm:char:discr-preorder:jiparamsim:jiparamsimequiv}
  The following logical equivalences hold,    
    for all environments $\aenv$ and $\benv$, 
      provided that: the underlying process \LTS\ 
        contains a universal process and the environment \LTS,
        and additionally
        is closed under the formation of joins
        \emph{(but note that image-finiteness as in Theorem~\ref{thm:char:discr-preorder:parambisim} is not required)}:
  \begin{align}
    \aenv \simby \benv
      \;\; & \Longleftrightarrow\;\;\;
         \sjiparamsimby{\benv}
           \:\subseteq\:
         \sjiparamsimby{\aenv} \:\punc{,} 
         \label{eq:1:thm:char:discr-preorder:jiparamsim:jiparamsimequiv}     
         \displaybreak[0]\\
    \aenv \simby \benv
      \;\; & \Longleftrightarrow\;\;\;
         \scanjiparamsim{\benv}
           \:\subseteq\:
         \scanjiparamsim{\aenv} \:\punc{,} 
         \label{eq:2:thm:char:discr-preorder:jiparamsim:jiparamsimequiv}     
         \displaybreak[0]\\
    \aenv \simby \benv
      \;\; & \Longleftrightarrow\;\;\;
         \sjiparamsimequiv{\benv}
           \:\subseteq\:
         \sjiparamsimequiv{\aenv} \:\punc{.}
         \label{eq:3:thm:char:discr-preorder:jiparamsim:jiparamsimequiv} 
  \end{align}
\end{thm}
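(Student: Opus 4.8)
The plan is to establish \eqref{eq:1:thm:char:discr-preorder:jiparamsim:jiparamsimequiv} first, to read off \eqref{eq:2:thm:char:discr-preorder:jiparamsim:jiparamsimequiv} from it by a duality remark, and to reduce \eqref{eq:3:thm:char:discr-preorder:jiparamsim:jiparamsimequiv} to the first two together with one further application of the completeness argument. First I would record three elementary facts about joins with respect to $\simby$, each witnessed by an obvious simulation: the left projection $\join{\cenv}{\benv} \simby \cenv$, the diagonal inclusion $\cenv \simby \join{\cenv}{\cenv}$, and the absorption $\join{\aunivproc}{\aenv} \bisim \aenv$ for a universal process $\aunivproc$ (the last already noted before the theorem). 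I also keep in mind that $\sjiparamsimby{\aenv} = \sparamsimby{\aenv}$ and that $\scanjiparamsim{\aenv}$ is, by definition and Proposition~\ref{prop:jiparamsim:equals:parasim}, the converse relation of $\sjiparamsimby{\aenv}$.

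For the soundness direction ``$\Rightarrow$'' of \eqref{eq:1:thm:char:discr-preorder:jiparamsim:jiparamsimequiv}, I assume $\aenv \simby \benv$ and $\aproc \jiparamsimby{\benv} \bproc$, i.e.\ $\join{\aproc}{\benv} \simby \join{\bproc}{\benv}$, and must produce a simulation on $\join{\aprocLTS}{\aenvLTS}$ containing $\pair{\join{\aproc}{\aenv}}{\join{\bproc}{\aenv}}$. A naive attempt to chain $\join{\aproc}{\aenv} \simby \join{\aproc}{\benv} \simby \join{\bproc}{\benv}$ fails, since returning from $\join{\bproc}{\benv}$ to $\join{\bproc}{\aenv}$ would need $\benv \simby \aenv$. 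Instead I would use, for $\cproc,\dproc\in\procs$ and $\cenv\in\envs$, the ``two-environment'' relation
\[
  \sasim \defdby \descsetexpbig{ \pair{\join{\cproc}{\cenv}}{\join{\dproc}{\cenv}} }{ \existsstzero{\benvi{1},\benvi{2}\in\envs}[\, \cenv \simby \benvi{1} \logand \join{\cproc}{\benvi{1}} \simby \join{\dproc}{\benvi{2}} \,] } \punc{.}
\]
To see $\sasim$ is a simulation: a step $\join{\cproc}{\cenv} \lt{\aact} \join{\cproc'}{\cenvacc}$ gives $\cproc \proclt{\aact} \cproc'$ and $\cenv \envlt{\aact} \cenvacc$; the witness $\cenv \simby \benvi{1}$ lifts the environment step to some $\benvi{1} \envlt{\aact} \benvi{1}'$ with $\cenvacc \simby \benvi{1}'$, so $\join{\cproc}{\benvi{1}} \lt{\aact} \join{\cproc'}{\benvi{1}'}$; then $\join{\cproc}{\benvi{1}} \simby \join{\dproc}{\benvi{2}}$ supplies a matching step with $\dproc \proclt{\aact} \dproc'$, $\benvi{2} \envlt{\aact} \benvi{2}'$ and $\join{\cproc'}{\benvi{1}'} \simby \join{\dproc'}{\benvi{2}'}$. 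Then $\join{\dproc}{\cenv} \lt{\aact} \join{\dproc'}{\cenvacc}$, and $\benvi{1}',\benvi{2}'$ are witnesses for the target pair $\pair{\join{\cproc'}{\cenvacc}}{\join{\dproc'}{\cenvacc}}$. Taking $\cenv = \aenv$ and $\benvi{1} = \benvi{2} = \benv$ places the required pair in $\sasim$; note this direction needs no assumption on the \LTSs.

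For the completeness direction ``$\Leftarrow$'' of \eqref{eq:1:thm:char:discr-preorder:jiparamsim:jiparamsimequiv}, the universal process carries the argument. From $\join{\aunivproc}{\benv} \bisim \benv \simby \join{\benv}{\benv}$ (absorption then the diagonal inclusion) I obtain $\aunivproc \jiparamsimby{\benv} \benv$; the hypothesis $\sjiparamsimby{\benv} \subseteq \sjiparamsimby{\aenv}$ then yields $\aunivproc \jiparamsimby{\aenv} \benv$, that is $\join{\aunivproc}{\aenv} \simby \join{\benv}{\aenv}$, which via $\join{\aunivproc}{\aenv} \bisim \aenv$ reads $\aenv \simby \join{\benv}{\aenv}$; Lemma~\ref{lem:aux1} turns this into $\aenv \simby \benv$. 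Equivalence \eqref{eq:2:thm:char:discr-preorder:jiparamsim:jiparamsimequiv} is then free: since $\scanjiparamsim{\cenv}$ is the converse of $\sjiparamsimby{\cenv}$ and taking converses preserves and reflects inclusion, $\scanjiparamsim{\benv} \subseteq \scanjiparamsim{\aenv}$ is equivalent to $\sjiparamsimby{\benv} \subseteq \sjiparamsimby{\aenv}$, so \eqref{eq:2:thm:char:discr-preorder:jiparamsim:jiparamsimequiv} coincides with \eqref{eq:1:thm:char:discr-preorder:jiparamsim:jiparamsimequiv}.

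Finally, for \eqref{eq:3:thm:char:discr-preorder:jiparamsim:jiparamsimequiv} I would write $\sjiparamsimequiv{\cenv} = \sjiparamsimby{\cenv} \cap \scanjiparamsim{\cenv}$. Direction ``$\Rightarrow$'' follows by intersecting the two inclusions from \eqref{eq:1:thm:char:discr-preorder:jiparamsim:jiparamsimequiv} and \eqref{eq:2:thm:char:discr-preorder:jiparamsim:jiparamsimequiv}. For ``$\Leftarrow$'', I reuse the pair $\pair{\aunivproc}{\benv}$: besides $\aunivproc \jiparamsimby{\benv} \benv$ from above, the converse half $\benv \jiparamsimby{\benv} \aunivproc$ holds because $\join{\benv}{\benv} \simby \benv \bisim \join{\aunivproc}{\benv}$ by the projection inclusion and absorption; hence $\aunivproc \jiparamsimequiv{\benv} \benv$, and the hypothesis gives $\aunivproc \jiparamsimequiv{\aenv} \benv$, whose $\sjiparamsimby{\aenv}$ component is exactly $\aunivproc \jiparamsimby{\aenv} \benv$, returning us to the completeness argument for \eqref{eq:1:thm:char:discr-preorder:jiparamsim:jiparamsimequiv} and thus to $\aenv \simby \benv$. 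The main obstacle is the soundness direction of \eqref{eq:1:thm:char:discr-preorder:jiparamsim:jiparamsimequiv}: the correct relation $\sasim$ (or, equivalently, the observation that $\join{\benv}{\aenv} \simbyequiv \aenv$ when $\aenv \simby \benv$, combined with associativity and monotonicity of join) has to be found, after which everything reduces to routine diagram chases and invocations of Lemma~\ref{lem:aux1}.
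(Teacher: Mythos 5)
Your proof is correct, and on two of its three components it coincides with the paper's: the reduction of \eqref{eq:2:thm:char:discr-preorder:jiparamsim:jiparamsimequiv} to \eqref{eq:1:thm:char:discr-preorder:jiparamsim:jiparamsimequiv} by taking converses, and the completeness direction ``$\Leftarrow$'' of \eqref{eq:1:thm:char:discr-preorder:jiparamsim:jiparamsimequiv} (universal process, absorption $\join{\aunivproc}{\benv} \bisim \benv$, diagonal $\benv \simby \join{\benv}{\benv}$, then Lemma~\ref{lem:aux1}) are exactly the paper's steps. You genuinely diverge on the soundness direction ``$\Rightarrow$'': the paper never touches the join-interaction \LTS\ there; it proves $\aenv \simby \benv \Rightarrow \sparamsimby{\benv} \subseteq \sparamsimby{\aenv}$ by checking that the family given by $\aproc \asimi{\cenv} \bproc \iff \existsstzero{\benvi{1} \cansimulate \cenv}\,\bigl[\aproc \paramsimby{\benvi{1}} \bproc\bigr]$ is an \parameterizedby{\aenvLTS} simulation, and then invokes Proposition~\ref{prop:jiparamsim:equals:parasim},~\ref{it:1:prop:jiparamsim:equals:parasim} to transfer the inclusion to $\sjiparamsimby{\benv} \subseteq \sjiparamsimby{\aenv}$. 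Your two-environment relation $\sasim$ implements the same existential-closure idea, but directly as a simulation on $\join{\aprocLTS}{\aenvLTS}$; the diagram chase you give is sound, and this variant buys self-containedness (no appeal to Proposition~\ref{prop:jiparamsim:equals:parasim}, i.e.\ to Lemmas~\ref{lem:jiparamsim:subseteq:paramsim} and~\ref{lem:paramsim:subseteq:jiparamsim}), whereas the paper's detour reuses that machinery and yields the corresponding statement for parameterized simulatability $\sparamsimby{\aenv}$ as a by-product. Finally, on \eqref{eq:3:thm:char:discr-preorder:jiparamsim:jiparamsimequiv} your write-up is more careful than the paper's: the paper merely remarks that \eqref{eq:3:thm:char:discr-preorder:jiparamsim:jiparamsimequiv} follows from \eqref{eq:1:thm:char:discr-preorder:jiparamsim:jiparamsimequiv} and \eqref{eq:2:thm:char:discr-preorder:jiparamsim:jiparamsimequiv} via $\sjiparamsimequiv{\cenv} = \sjiparamsimby{\cenv} \cap \scanjiparamsim{\cenv}$, but intersecting inclusions only gives ``$\Rightarrow$''; an inclusion between the intersections does not formally imply an inclusion between the components, so ``$\Leftarrow$'' needs precisely your extra observation that $\benv \jiparamsimby{\benv} \aunivproc$ (from $\join{\benv}{\benv} \simby \benv \bisim \join{\aunivproc}{\benv}$), whence $\aunivproc \jiparamsimequiv{\benv} \benv$ and the universal-process argument can be rerun from the weaker hypothesis --- your proposal supplies this missing step explicitly.
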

\begin{proof}
  We let 
    $\aenvLTS = \tuple{\envs,\actions,\senvltzero}$ be an environment \LTS,
      and we let $\aprocLTS = \tuple{\procs,\actions,\sprocltzero}$\vspace*{-2pt}
        be a process \LTS\ that contains the universal process $\univproc$.
      
  We first note that \eqref{eq:2:thm:char:discr-preorder:jiparamsim:jiparamsimequiv} 
    follows from, and is equivalent, to \eqref{eq:1:thm:char:discr-preorder:jiparamsim:jiparamsimequiv},
      because $\scanjiparamsim{\benv}$ is the converse relation of $\sjiparamsimby{\aenv}$.
  Furthermore \eqref{eq:3:thm:char:discr-preorder:jiparamsim:jiparamsimequiv} follows from 
              \eqref{eq:1:thm:char:discr-preorder:jiparamsim:jiparamsimequiv} and \eqref{eq:2:thm:char:discr-preorder:jiparamsim:jiparamsimequiv},
    due to $\jiparamsimequiv{\aenv} \,=\,  \jiparamsimby{\aenv} \cap \canjiparamsim{\aenv}$.
  Therefore it remains to prove \eqref{eq:2:thm:char:discr-preorder:jiparamsim:jiparamsimequiv}.
    For that we show the two directions of \ref{eq:1:thm:char:discr-preorder:jiparamsim:jiparamsimequiv}, and proceed as follows:
\begin{itemize}[label={``$\Rightarrow$'':},leftmargin=*,align=right,itemsep=0ex]
\item[``$\Rightarrow$'':]
  We show that
$\aenv \simby \benv \implies \sparamsimby{\benv} \:\subseteq\:\sparamsimby{\aenv}$. 
  Then \eqref{eq:1:thm:char:discr-preorder:jiparamsim:jiparamsimequiv} 
    follows from Proposition~\ref{prop:jiparamsim:equals:parasim}, \ref{it:1:prop:jiparamsim:equals:parasim}.
So, let 
  $\family{\sasimi{\aenv}}{\aenv\in\envs}$
  be a \parameterizedby{\aenvLTS} family of binary relations $\sasimi{\aenv} \subseteq \procs\times\procs$ that is defined, for all $\aenv\in\envs$ by:
\[
        \aproc \asimi{\aenv}\bproc 
                                     \;\;\funin\Longleftrightarrow\;\;   
                                       \exists \benv \scansimulate \aenv \,
                                          \bigl[\aproc \paramsimby{\benv} \bproc\bigl] 
\]
It suffices to show that $\family{\sasimi{\aenv}}{\aenv\in\envs}$  
                                    is an \parameterizedby{\aenvLTS} simulation.
So, suppose $\aproc \asimi{\aenv} \bproc$. Then $\aproc \paramsimby{\benv} \bproc$ for some $\benv \scansimulate \aenv$.
Suppose $\aproc \proclt{\aact} \aprocacc$ and $\aenv \envlt{\aact} \aenvacc$. Then $\benv \envlt{\aact} \benvacc$ for some 
$\benvacc \scansimulate \aenvacc$, and hence $\bproc \proclt{\aact}
\bprocacc$ for some $\bprocacc$ such that $\aprocacc \paramsimby{\benvacc} \bprocacc$. Then it follows that $\aprocacc \asimi{\aenvacc} \bprocacc$ holds, as required.

\item[``$\Leftarrow$'':]
  Assume $\sjiparamsimby{\benv} \:\subseteq\:\sjiparamsimby{\aenv}$.
    Let  $\aunivproc$ be a universal process in $\aprocLTS$.
      Then $\join{\aunivproc}{\benv} \bisim \benv$. 
      Moreover, $\benv \simby 
\join{\benv}{\benv}$ (which is easy to show), and then $\aunivproc \jiparamsimby{\benv} \benv$. By the assumption 
$\sjiparamsimby{\benv} \:\subseteq\:\sjiparamsimby{\aenv}$ we
have that $\aunivproc \jiparamsimby{\aenv} \benv$. In other words:
$\aenv \bisim \join{\aunivproc}{\aenv} \simby \join{\benv}{\aenv}$.
By Lemma~\ref{lem:aux1} we get $\aenv \simby \benv$, as required.\qed
\end{itemize}
\renewcommand{\qed}{}
\end{proof}


%

\section{\protect\label{log-char}%
         Modal characterization of \protect\jiorparameterized\ simulatability}

In this section we adapt the modal characterization of parameterized bisimilarity~$\sparambisim{\aenv}$,
  see Theorem~\ref{thm:log-char:parambisim}, 
    for \jiorparameterized\ simulatability~$\sparamsimby{\aenv}$ and $\sjiparamsimby{\aenv}$. 
The crucial observation for our adaptation is Lemma~\ref{lem:posforms:of:joins} below
  which states that the set of positive formulas that a join interaction $\join{\aproci{1}}{\aproci{2}}$ satisfies
    is the intersection of the sets of positive formulas satisfied by the constituent processes $\aproci{1}$ and $\aproci{2}$. 
Finally we explain why a similar line of argument is not possible
  in order to adapt the modal characterization for $\sparambisim{\aenv}$ to obtain one for \jiparameterized\ bisimilarity~$\sjiparambisim{\aenv}$.  
In doing so we provide some evidence for Larsen's assessment,
  that (in view of that $\sparambisim{\aenv} \,\subsetneqq\, \sjiparambisim{\aenv}$ holds in general, see Proposition~\ref{prop:parambisim:jiparambisim:jiparamsimequiv})
    ``the modal characterization for $\sparambisim{\aenv}$ does not hold for $\sjiparambisim{\aenv}$, 
      and no other modal characterization seems immediate'' \cite[p.$\,$210]{lars:1987}.
However, we report about further work of ours on this issue in the final section (see \ref{CW1} in Section~\ref{concl}).


\begin{lem}\label{lem:posforms:of:joins}
  For all processes $\aproc$ and $\bproc$ of a process LTS $\aprocLTS = \tuple{\procs,\actions,\sprocltzero}$ it holds:
  \begin{equation}\label{eq:lem:posforms:of:joins}
    \posformsof{ \join{\aproc}{\bproc} }
      =
    \posformsof{ \aproc }   \cap   \posformsof{ \bproc } \punc{.}  
  \end{equation}
  %
\end{lem}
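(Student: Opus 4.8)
The plan is to reduce the set identity to a pointwise statement and prove that by structural induction. Concretely, I would show that for every positive formula $\aform \in \posforms$ and \emph{all} processes $\aproc, \bproc \in \procs$,
\begin{equation*}
  \join{\aproc}{\bproc} \satisfies \aform
    \;\;\Longleftrightarrow\;\;
  \aproc \satisfies \aform \;\text{ and }\; \bproc \satisfies \aform \punc{.}
\end{equation*}
Since $\aform \in \posformsof{r}$ means exactly $r \satisfies \aform$, collecting all positive formulas for which this equivalence holds yields the set identity \eqref{eq:lem:posforms:of:joins}. Crucially, the statement is universally quantified over the pair $\aproc,\bproc$, so that the induction hypothesis may later be applied to derivative processes.

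The induction proceeds on the structure of $\aform$. For the base case $\aform = \True$, both sides hold for every process by the clause for $\True$ in Definition~\ref{def:satisfies}, so the equivalence is immediate. For a conjunction $\aform = \aformi{1} \logand \aformi{2}$, unfolding the satisfaction clause for $\logand$ on the left gives $\join{\aproc}{\bproc} \satisfies \aformi{1}$ and $\join{\aproc}{\bproc} \satisfies \aformi{2}$; applying the induction hypothesis to $\aformi{1}$ and to $\aformi{2}$ and then merely reordering the four resulting conjuncts produces $(\aproc \satisfies \aformi{1} \logand \aformi{2})$ and $(\bproc \satisfies \aformi{1} \logand \aformi{2})$.

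The crucial case is the diamond $\aform = \diamondact{\aact}\bform$. By the satisfaction clause for the diamond together with the operational semantics of the \joininteraction\ in Definition~\ref{def:join}, we have $\join{\aproc}{\bproc} \satisfies \diamondact{\aact}\bform$ if and only if there exist $\aprocacc, \bprocacc \in \procs$ with $\aproc \proclt{\aact} \aprocacc$ and $\bproc \proclt{\aact} \bprocacc$ such that $\join{\aprocacc}{\bprocacc} \satisfies \bform$; here one uses that the only $\aact$\nb-transitions out of $\join{\aproc}{\bproc}$ are exactly the synchronizing joins $\join{\aprocacc}{\bprocacc}$ of an $\aact$\nb-step of $\aproc$ with an $\aact$\nb-step of $\bproc$. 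Applying the induction hypothesis to $\bform$ on the derivatives replaces $\join{\aprocacc}{\bprocacc} \satisfies \bform$ by ``$\aprocacc \satisfies \bform$ and $\bprocacc \satisfies \bform$''. For the left-to-right implication this directly witnesses both $\aproc \satisfies \diamondact{\aact}\bform$ and $\bproc \satisfies \diamondact{\aact}\bform$. For the converse, if $\aproc \satisfies \diamondact{\aact}\bform$ and $\bproc \satisfies \diamondact{\aact}\bform$, one may pick the two witnessing transitions $\aproc \proclt{\aact} \aprocacc$ and $\bproc \proclt{\aact} \bprocacc$ \emph{independently}, join them into $\join{\aproc}{\bproc} \proclt{\aact} \join{\aprocacc}{\bprocacc}$, and conclude $\join{\aprocacc}{\bprocacc} \satisfies \bform$ by the induction hypothesis.

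The main obstacle, and the reason the statement is confined to positive formulas, lies in exactly this independent choice of witnesses: positivity guarantees that satisfaction only ever demands the \emph{existence} of suitable outgoing transitions, never the absence of any, so the witnesses for $\aproc$ and for $\bproc$ need not be correlated. This breaks down for general formulas in $\forms$ — for instance $\lognot{\diamondact{\aact}\True}$ holds at $\join{\aproc}{\bproc}$ as soon as \emph{one} of $\aproc, \bproc$ fails to permit an $\aact$\nb-step, whereas requiring it at both $\aproc$ and $\bproc$ is strictly stronger — which is precisely why the analogous intersection identity, and with it the modal characterization of $\sparambisim{\aenv}$, cannot be transferred to $\sjiparambisim{\aenv}$.
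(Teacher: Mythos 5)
Your proof is correct and takes essentially the same route as the paper's: both proceed by structural induction on positive formulas, with the diamond case as the crux, resolved by the synchronous operational semantics of $\sjoin$ together with the observation that the existential witnesses for $\aproc$ and $\bproc$ may be chosen independently (in the paper this is the step splitting one existential over pairs of derivatives into two separate existentials). Your pointwise satisfaction equivalence, universally quantified over process pairs so the induction hypothesis applies to derivatives, is just a repackaging of the paper's chain of set-membership equivalences.
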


\begin{proof}
  Statement \eqref{eq:lem:posforms:of:joins} can be established by induction on the structure of positive modal formulas~$\aform$
    according to their definition in grammar \eqref{eq:2:def:forms} of Definition~\ref{def:forms}.
      
  The base case of \eqref{lem:posforms:of:joins} for $\aform = \True$ is obviously true, because  $\True$ is satisfied for any process.
  It remains to establish the induction step for formulas of the forms $\aform = \aformi{1} \logand \aformi{2}$ and $\aform = \diamondact{\aact} \aformi{0}$.
  Since in the first case the induction step is easy to demonstrate,
    we only treat the more interesting case of  $\aform = \diamondact{\aact} \aformi{0}$.
  For this we argue as follows:
  \begin{align*}
    &
    \diamondact{\aact} \aformi{0} \in \posformsof{ \join{\aproc}{\bproc} }
      \;\; \Longleftrightarrow \;\;
        \join{\aproc}{\bproc} \satisfies \diamondact{\aact} \aformi{0}
    \\
    & \hspace*{1.25em}  
      \;\; \Longleftrightarrow \;\;
        (\existsstzero{\,\aprocacc,\bprocacc\in\procs}) \, 
          \bigl[\, \aproc \lt{\aact} \aprocacc
                     \,\logand\,
                   \bproc \lt{\aact} \bprocacc
                     \,\logand\,
                   \join{\aprocacc}{\bprocacc} \satisfies \aformi{0}
          \,\bigr]
      \displaybreak[0]\\
    & \hspace*{1.25em}  
      \;\; \Longleftrightarrow \;\;
        (\existsstzero{\,\aprocacc,\bprocacc\in\procs}) \, 
          \bigr[\, \aproc \lt{\aact} \aprocacc
                     \,\logand\,
                   \bproc \lt{\aact} \bprocacc
                     \,\logand\,
                   \aformi{0} \in \posformsof{ \join{\aprocacc}{\bprocacc} }
          \,\bigr]
      \displaybreak[0]\\[-0.25ex]
    & \hspace*{1.25em}  
      \;\; \overset{\text{IH}}{\Longleftrightarrow} \;\;
        (\existsstzero{\,\aprocacc,\bprocacc\in\procs}) \, 
          \bigl[\, \aproc \lt{\aact} \aprocacc
                     \,\logand\,
                   \bproc \lt{\aact} \bprocacc
                     \,\logand\,
                   \aformi{0} \in \posformsof{\aprocacc} \cap \posformsof{\bprocacc}
           \,\bigr]
      \displaybreak[0]\\
    & \hspace*{1.25em}  
      \;\; \Longleftrightarrow \;\;
        (\existsstzero{\aprocacc\in\procs}) \, 
          \bigl[\,
             \aproc \lt{\aact} \aprocacc
               \,\logand\,
             \aformi{0} \in \posformsof{\aprocacc}
          \,\bigr] \logand
        (\existsstzero{\bprocacc\in\procs}) \,     
          \bigr[\, 
             \bproc \lt{\aact} \bprocacc
               \,\logand\,
             \aformi{0} \in \posformsof{\bprocacc}
          \,\bigr]
      \displaybreak[0]\\
    & \hspace*{1.25em}  
      \;\; \Longleftrightarrow \;\;
        \diamondact{\aact} \aformi{0} \in \posformsof{\aproc}
          \;\logand
        \diamondact{\aact} \aformi{0} \in \posformsof{\bproc}  
    \\
    & \hspace*{1.25em}  
      \;\; \Longleftrightarrow \;\;
        \diamondact{\aact} \aformi{0} \in \posformsof{\aproc} \cap \posformsof{\bproc} \punc{,}   
  \end{align*}
  where we have marked by (IH) the logical equivalence in which the induction hypothesis is used.
\end{proof}

Based on this lemma, a modal characterization of $\sjiparamsimby{\aenv}$ and $\sparamsimby{\aenv}$ is now an easy consequence
  of the modal characterization of the simulation preorder $\ssimby$ on processes, see \eqref{eq:sim:thm:log-char:sim:bisim} in Theorem~\ref{thm:log-char:sim:bisim}.
In this way we obtain,
  in analogy with Theorem~$\,$\ref{thm:log-char:parambisim} ,
    the following modal characterizations
      of \jiorparameterized\ similarity and of \jiorparameterized\ simulation equivalence
        with respect to positive formulas.
      
\begin{thm}\label{thm:log-char:paramsim:paramsimequiv}%
  For all \imagefinite\ environments~$\aenv$, 
    the following characterizations of $\sjiparamsimby{\aenv}$ and $\sparamsimby{\aenv}\,$,
      as well as of $\sjiparamsimequiv{\aenv}$ and $\sparamsimequiv{\aenv}$ hold
        for all \imagefinite\ processes $\aproc$, $\bproc\,$:
  \begin{alignat}{5}
    \aproc & \jiparamsimby{\aenv} \bproc \;\;\;
    &  
    \bigl( & \Longleftrightarrow\;\; 
        &
        \aproc & \paramsimby{\aenv} \bproc 
    \,\bigr) &
    \;\; & \Longleftrightarrow\;\; & 
    \posformsof{\aproc}
      \cap
    \posformsof{\aenv}
      \: & \subseteq \:
    \posformsof{\bproc}
      \cap
    \posformsof{\aenv} \punc{,}
      \label{eq:1:thm:log-char:paramsim:paramsimequiv}
    \displaybreak[0]\\
    \aproc & \jiparamsimequiv{\aenv} \bproc \;\;\;
    &  
    \bigl( & \Longleftrightarrow\;\; 
        &
        \aproc & \paramsimequiv{\aenv} \bproc 
    \,\bigr) &
    \;\; & \Longleftrightarrow\;\; & \;\;
    \posformsof{\aproc}
      \cap
    \posformsof{\aenv}
      \: & = \:
    \posformsof{\bproc}
      \cap
    \posformsof{\aenv} \punc{.}
      \label{eq:2:thm:log-char:paramsim:paramsimequiv}
  \end{alignat}      
  The implications ``$\Rightarrow$'' in \eqref{eq:1:thm:log-char:paramsim:paramsimequiv} and \eqref{eq:2:thm:log-char:paramsim:paramsimequiv} 
    hold also for not necessarily \imagefinite\ $\aproc$, $\bproc$, and $\aenv$.
    %
\end{thm}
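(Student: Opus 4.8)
The plan is to reduce the statement to the Hennessy--Milner characterization of the simulation preorder (the equivalence~\eqref{eq:sim:thm:log-char:sim:bisim} in Theorem~\ref{thm:log-char:sim:bisim}) combined with the description of the positive theory of a join in Lemma~\ref{lem:posforms:of:joins}. First I would note that the two middle equivalences, $\aproc \jiparamsimby{\aenv} \bproc \Leftrightarrow \aproc \paramsimby{\aenv} \bproc$ and $\aproc \jiparamsimequiv{\aenv} \bproc \Leftrightarrow \aproc \paramsimequiv{\aenv} \bproc$, are supplied directly by Proposition~\ref{prop:jiparamsim:equals:parasim}, \ref{it:1:prop:jiparamsim:equals:parasim} and~\ref{it:2:prop:jiparamsim:equals:parasim}, and hold with no finiteness assumption. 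Hence it remains only to connect the left-hand relations with the stated inclusion and equality of positive theories.

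For the simulatability characterization~\eqref{eq:1:thm:log-char:paramsim:paramsimequiv}, I would unfold the definition of $\sjiparamsimby{\aenv}$ from Definition~\ref{def:jiparamsimby:jiparamsimequiv:jiparambisimby}, rewriting $\aproc \jiparamsimby{\aenv} \bproc$ as $\join{\aproc}{\aenv} \simby \join{\bproc}{\aenv}$. Applying the Hennessy--Milner equivalence~\eqref{eq:sim:thm:log-char:sim:bisim} to these two join processes turns this into $\posformsof{\join{\aproc}{\aenv}} \subseteq \posformsof{\join{\bproc}{\aenv}}$, and Lemma~\ref{lem:posforms:of:joins} then rewrites each side as an intersection, giving precisely $\posformsof{\aproc} \cap \posformsof{\aenv} \subseteq \posformsof{\bproc} \cap \posformsof{\aenv}$. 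Before invoking the completeness (``$\Leftarrow$'') direction of Theorem~\ref{thm:log-char:sim:bisim}, one small observation is required: $\join{\aproc}{\aenv}$ and $\join{\bproc}{\aenv}$ are image-finite whenever $\aproc$, $\bproc$ and $\aenv$ are. This is because every reachable state of a join is again a join $\join{\aprocacc}{\aenvacc}$, and its $\aact$-derivatives are exactly the joins of an $\aact$-derivative of the process component with an $\aact$-derivative of the environment component, whence their number is bounded by the finite product of the two component derivative-counts.

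For the simulation-equivalence characterization~\eqref{eq:2:thm:log-char:paramsim:paramsimequiv}, I would invoke the defining clause $\aproc \jiparamsimequiv{\aenv} \bproc \Leftrightarrow (\aproc \jiparamsimby{\aenv} \bproc) \logand (\bproc \jiparamsimby{\aenv} \aproc)$ from Definition~\ref{def:jiparamsimby:jiparamsimequiv:jiparambisimby} and apply the already-proved~\eqref{eq:1:thm:log-char:paramsim:paramsimequiv} in both directions: the inclusion $\posformsof{\aproc} \cap \posformsof{\aenv} \subseteq \posformsof{\bproc} \cap \posformsof{\aenv}$ together with its reverse is equivalent to the equality $\posformsof{\aproc} \cap \posformsof{\aenv} = \posformsof{\bproc} \cap \posformsof{\aenv}$. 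For the final addendum, that ``$\Rightarrow$'' persists without image-finiteness, I would observe that only the forward direction of Hennessy--Milner is used there, and Theorem~\ref{thm:log-char:sim:bisim} asserts this direction for arbitrary processes; since Lemma~\ref{lem:posforms:of:joins} carries no finiteness hypothesis, the whole chain goes through for all $\aproc$, $\bproc$, $\aenv$.

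I do not anticipate a real obstacle: the theorem is in essence a repackaging of Hennessy--Milner through Lemma~\ref{lem:posforms:of:joins}. The only point demanding attention is the image-finiteness bookkeeping for the join processes sketched above, which is what licenses applying the completeness direction of Theorem~\ref{thm:log-char:sim:bisim} to $\join{\aproc}{\aenv}$ and $\join{\bproc}{\aenv}$ rather than to $\aproc$, $\bproc$, $\aenv$ themselves.
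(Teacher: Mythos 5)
Your proof is correct and follows essentially the same route as the paper's: Proposition~\ref{prop:jiparamsim:equals:parasim} supplies the middle equivalences, then the chain consisting of the definition of $\sjiparamsimby{\aenv}$, the Hennessy--Milner equivalence~\eqref{eq:sim:thm:log-char:sim:bisim}, and Lemma~\ref{lem:posforms:of:joins} yields~\eqref{eq:1:thm:log-char:paramsim:paramsimequiv}, with~\eqref{eq:2:thm:log-char:paramsim:paramsimequiv} obtained by applying~\eqref{eq:1:thm:log-char:paramsim:paramsimequiv} in both directions and the forward-only addendum coming from the unrestricted ``$\Rightarrow$'' direction of Theorem~\ref{thm:log-char:sim:bisim}. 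Your explicit verification that $\join{\aproc}{\aenv}$ and $\join{\bproc}{\aenv}$ are image-finite whenever $\aproc$, $\bproc$, $\aenv$ are is a detail the paper leaves implicit, and it is indeed what licenses applying the completeness direction of Theorem~\ref{thm:log-char:sim:bisim} to the join processes.
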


\begin{proof}
  For \eqref{eq:1:thm:log-char:paramsim:paramsimequiv}
    we argue as follows for all \imagefinite\ processes $\aproc$ and $\bproc$, and environments~$\aenv\,$:
  \begin{alignat*}{2}
    \aproc \paramsimby{\aenv} \bproc
      \;\; & \Longleftrightarrow \;\;
    \aproc \jiparamsimby{\aenv} \bproc
      & & \text{(by Prop.~\ref{prop:jiparamsim:equals:parasim}, \ref{it:1:prop:jiparamsim:equals:parasim})}
    \displaybreak[0]\\
      \;\; & \Longleftrightarrow \;\;
    \join{\aproc}{\aenv} \simby \join{\bproc}{\aenv}
      & & \text{(by the definition of $\sjiparamsimby{\aenv}$)}
    \displaybreak[0]\\
      \;\; & \Longleftrightarrow \;\;
    \posformsof{ \join{\aproc}{\aenv} } \subseteq \posformsof{ \join{\bproc}{\aenv} }
      & & \text{(by \eqref{eq:sim:thm:log-char:sim:bisim} in Thm.~\ref{thm:log-char:sim:bisim})}
    \displaybreak[0]\\
      \;\; & \Longleftrightarrow \;\;
    \posformsof{\aproc} \cap \posformsof{\aenv}
      \subseteq
    \posformsof{\bproc} \cap \posformsof{\aenv}
      \qquad
      & & \text{(by using Lem.~\ref{lem:posforms:of:joins}).}
  \end{alignat*}
  The implication ``$\Rightarrow$'' in the third equivalence statement holds also for not necessarily \imagefinite\ $\aproc$, $\bproc$, and $\aenv$
    due to the Hennessy--Milner Theorem~\ref{thm:log-char:sim:bisim}.
  Together with the fact that ``$\Rightarrow$'' also holds for the other three equivalence statements above,
    this demonstrates that ``$\Rightarrow$'' in \eqref{eq:1:thm:log-char:paramsim:paramsimequiv} holds for all $\aproc$, $\bproc$, and $\aenv$.
  
  Statement \eqref{eq:2:thm:log-char:paramsim:paramsimequiv} for the \jiorparameterized\ simulation equivalences $\sjiparamsimequiv{\aenv}$ and $\sparamsimequiv{\aenv}$
    follows from \eqref{eq:1:thm:log-char:paramsim:paramsimequiv}
  due to the definition of $\sjiparamsimequiv{\aenv}$ from $\sjiparamsimby{\aenv}$ in \eqref{eq:jiparamsimequiv:def:jiparamsimby:jiparamsimequiv:jiparambisimby},
                    and of $\sparamsimequiv{\aenv}$ from $\sparamsimby{\aenv}$ in Definition~\ref{def:paramsim}. 
\end{proof}

There is no obvious generalization of Lemma~\ref{lem:posforms:of:joins}
  that applies to all formulas of $\forms$. In particular,
    $\formsof{ \join{\aproci{1}}{\aproci{2}} }
       = \formsof{\aproci{1}}
           \cap
         \formsof{\aproci{2}}$
    does not hold, because certainly ``$\subseteq$'' is violated:
      in case that $\aproci{1}$ and $\aproci{2}$ are such that $\aproci{1} \permitslt{\aact}$ and $\aproci{2} \notpermitslt{\aact}$ holds for some $\aact\in\actions$,
        then $\lognot{\diamondact{\aact} \True} \in \formsof{ \join{\aproci{1}}{\aproci{2}} }$ due to $(\join{\aproci{1}}{\aproci{2}}) \notpermitslt{\aact}$,
        but $\lognot{\diamondact{\aact} \True} \notin \formsof{\aproci{1}}$, and hence
            $\lognot{\diamondact{\aact} \True} \notin \formsof{\aproci{1}} \cap \formsof{\aproci{2}}$.
Therefore the proof of the characterization above cannot 
  be extended, at least not in an analogous manner, to obtain a modal characterization of \jiparameterized\ bisimilarity~$\sjiparambisim{\aenv}$. 
(But see the report about our current work \ref{CW1} in Section~\ref{concl}.) 

Yet an interesting specialization of Lemma~\ref{lem:posforms:of:joins}
  concerns the specialized join operation $\sjoindot$ introduced in Definition~\ref{def:joindot}:
    $\actprojform{\formsof{ \joindot{\aproc}{\aenv} }}
       = \formsof{\aproc}
           \cap
         \negclosure{\posformsof{\aenv}}$
    holds for all processes~$\aproc$ and environments~$\aenv$,
      where $\actprojform{\cdot}$ projects modalities $\diamondact{\pair{\aact}{\aenv}}$ in formulas
        to their action components $\diamondact{\aact}$. 
  This observation can be used, together with the characterization of parameterized bisimilarity~$\sparambisim{\aenv}$ 
    via $\sjoindot$ in \eqref{eq:1:lem:joindot:join} of Lemma~\ref{lem:joindot:join},
      to obtain, for Larsen's characterization of $\sparambisim{\aenv}$ in Theorem~\ref{thm:log-char:parambisim}, an alternative proof 
        that is similar to the proof of Theorem~\ref{thm:log-char:paramsim:paramsimequiv} above.%

\section{\protect\label{concl}%
         Conclusion (summary, literature, current work, open problems, plans)}
Here we first summarize our contributions in a list with references to statements in earlier sections.
We then explain our path to the definition of \jiparameterized\ bisimilarity,
  and Larsen's comments on the shortcomings of this concept.
Furthermore we collect some references to the literature 
  concerning work that has been done based on parameterized bisimilarity in the meantime. 
Subsequently we report about our current work on a modal characterization of \jiorparameterized\ bisimilarity,
  and about generalizations of the modal characterizations here and by Stirling and Larsen.
We also describe some open problems of which the solutions have evaded us thus far.
Finally we mention our plan to investigate whether \jiparameterized\ bisimilarity
  can be used to refine Larsen's results in his thesis~\cite{lars:1986}
    on a method to show program correctness under the formation of contexts.

\smallskip
\noindent
{\bf Contribution.}
  Below we provide a summary by listing the concepts that we have defined and the results we have obtained,
    together with references to the appertaining formal statements:
\begin{enumerate}[label={(C\arabic{*})},leftmargin=*,align=right,itemsep=0ex]
  \item{}\label{C1}
    We complemented Larsen's parameterized bisimilarity~$\sparambisim{\aenv}$ with respect to `synchronous' interaction with environments~$\aenv$
      by also defining parameterized simulatability, 
        the simulation \preorder~$\sparamsimby{\aenv}$,
          on processes with respect to `synchronous' interaction with environment~$\aenv$
            (see Definition~\ref{def:paramsim}).
  \item{}\label{C2}
    We defined weaker versions $\sjiparamsimby{\aenv}$ of $\sparamsimby{\aenv}$ 
                           and $\sjiparambisim{\aenv}$ of $\sparambisim{\aenv}$
        by relaxing the synchronicity condition of environment interaction for $\sparamsimby{\aenv}$ and $\sparambisim{\aenv}$
          to require only the existence of simulations, and respectively of bisimulations,
            between free join interactions ($\sjoin$) with environments~$\aenv$
              (see Definition~\ref{def:jiparamsimby:jiparamsimequiv:jiparambisimby}).
  \item{}\label{C3}
    We showed that $\sparamsimby{\aenv}$ and $\sparambisim{\aenv}$
      can be characterized similarly to the definitions of $\sjiparamsimby{\aenv}$, and $\sjiparambisim{\aenv}$ via join interactions ($\sjoin$) 
        as the existence of a simulation, and as bisimilarity, respectively,
          of free interactions with the specific form $\sjoindot$ (see Definition~\ref{def:joindot}) of join interactions
            that record targets of environment transitions in action labels (see Lemma~\ref{lem:joindot:join}).
              %
   
  \item{}\label{C4}
    We established that $\sparambisim{\aenv}$ and $\sjiparambisim{\aenv}$ coincide for deterministic environments~$\aenv$
      (see Proposition~\ref{prop:parambisim:jiparambisim:det:envs}). 
    
  \item{}\label{C5}
    We settled the relationships between \jiorparameterized\ bisimilarity $\sparambisim{\aenv}$ and $\sjiparambisim{\aenv}$,
      and the \jiorparameterized\ simulation equivalences $\sparamsimequiv{\aenv}$ and $\sjiparamsimequiv{\aenv}\,$:
      for all environments $\sparambisim{\aenv}$ is contained in $\sjiparambisim{\aenv}$, and furthermore
                           $\sjiparambisim{\aenv}$ is contained in both of $\sparamsimequiv{\aenv}$ and $\sjiparamsimequiv{\aenv}\,$, which coincide.
      The two inclusions in this chain are proper in general. (See Theorem~\ref{thm:incl:jiorparam:bisim:sim}).   
      %
      %
  \item{}\label{C6}    
    Larsen's main technical result about $\sparambisim{\aenv}$ (see Theorem~\ref{thm:char:discr-preorder:parambisim}),
      that the discrimination \preorder\ induced by $\sparambisim{\aenv}$ on environments coincides with the simulation preorder~$\ssimby$ on environments,
        does not hold analogously for the discrimination \preorder\ induced by $\sjiparambisim{\aenv}$ 
          (see Proposition~\ref{prop:discr-preorder:jiparambisim}). 
    However, we showed that this coincidence with the simulation \preorder~$\ssimby$ on environments
      \emph{does} hold analogously for the discrimination \preorders\ induced both by \jiorparameterized\ similarity $\sparamsimby{\aenv} = \sjiparamsimby{\aenv}$
                                                                      and by \jiorparameterized\ simulation equivalence $\sparamsimequiv{\aenv} = \sjiparamsimequiv{\aenv}$
                                                                      (see Theorem~\ref{thm:char:discr-preorder:jiparamsim:jiparamsimequiv}). 
  \item{}\label{C7}
    We adapted Stirling and Larsen's modal characterization of parameterized bisimilarity~$\sparambisim{\aenv}$ (see Theorem~\ref{thm:log-char:parambisim})
      to obtain a modal characterization of \jiorparameterized\ similarity $\sparamsimby{\aenv} = \sjiparamsimby{\aenv}$
        and also of \jiorparameterized\ simulation equivalence $\sparamsimequiv{\aenv} = \sjiparamsimequiv{\aenv}$
        (see Theorem~\ref{thm:log-char:paramsim:paramsimequiv}).
\end{enumerate}


\smallskip
\noindent
{\bf Larsen on \jiparameterized\ bisimilarity, and our way to its definition.}
  We formulated \jiparameterized\ bisimilarity and \jiparameterized\ simulatability
    while reading Larsen's article \cite{lars:1987} from 1987, and trying to improve our intuitive understanding of parameterized bisimilarity.
  Afterwards we developed, in stages, the results that we report here.
  Only when diving deeper into the intricate 
                                             proof of Larsens main result, 
    the characterization of the discrimination \preorder\ induced by parameterized bisimilarity~$\sparambisim{\aenv}$
      as simulatability of environments (see Theorem~\ref{thm:char:discr-preorder:parambisim}),
        did we find his remarks about an ``alternative and perhaps more immediate parameterized version [of bisimulation equivalence]''.
  This passage appears on page~210 in \cite{lars:1987}, at the end of Section~5 that is devoted to this central result.
  The version of bisimulation equivalence that Larsen sketches there coincides with \jiparameterized~bisimilarity~$\sjiparambisim{\aenv}$.
    
  Larsen refers to \jiparameterized~bisimilarity 
    in order to ``give further support for the proposed parameterized version of bisimulation equivalence'',
      in addition to the following assessment:
        ``Due to the modal characterization presented [\ldots] and the simple characterization of the discrimination ordering presented [\ldots], 
         we are confident that the notion of parameterized bisimulation equivalence proposed is indeed a natural one.''
  As for the mentioned further evidence Larsen notes that \jiparameterized\ bisimilarity ``lacks many of the properties presented in this paper''.
  Concretely he mentions three properties.
  
  First, that ``$\sparambisim{\aenv}$ is strictly included in $\sjiparambisim{\aenv}$ for all environments~$\aenv$''
    (in general is meant [we use our notation for $\sjiparambisim{\aenv}$ here]), 
     corresponding to Proposition~\ref{prop:parambisim:jiparambisim:jiparamsimequiv}, 
                                  \ref{it:1:prop:parambisim:jiparambisim:jiparamsimequiv} and \ref{it:2:prop:parambisim:jiparambisim:jiparamsimequiv}). 
  Second,
    that ``thus the modal characterization for $\sparambisim{\aenv}$ does not hold for $\sjiparambisim{\aenv}\,$,
          and no other modal characterization seems immediate.''
    This assessment stimulates us to work out \ref{CW1}.
   
  Finally third, Larsen writes that:
    ``More important though is that the simulation ordering does not characterize the discrimination ordering generated by this alternative parameterized version[.]'',
    in contrast with his impressive and surprising main result in \cite{lars:1987}, Theorem~\ref{thm:char:discr-preorder:parambisim} here,  
      which shows that that is the case for parameterized bisimilarity~$\sparambisim{\aenv}$. 
    For this observation Larsen uses a counterexample
      that is slightly different from the one we use for Proposition~\ref{prop:discr-preorder:jiparambisim}, the corresponding statement here.
Ji-parameterized bisimilarity fits nicely in the recently proposed 
framework \cite{abs-2007-08187,AubertV22}: the authors 
thereof advocate to make a clear distinction between processes and 
what they call tests. The composition of processes with tests give 
rise to instrumentations, which are to be understood as ``compiled binaries ready to be executed'' \cite[page 6]{AubertV22}. Environments in our setting
can be seen as tests, and hence join is a notion of composition. In
contrast, $\sparambisim{\aenv}$ lacks the composition operation (but still distinguishes processes and tests/environments).
    Below we formulate the question of a characterization of the discrimination \preorder\ induced by \jiparameterized\ bisimilarity
      as the open problem \ref{P1}, and a specialization of this question as the open problem~\ref{P2}.
  
%

\smallskip
\noindent
{\bf Literature on parameterized versions of bisimilarity.}
  Parameterized bisimilarity proved to be a very fruitful concept since its inception by Larsen in \cite{lars:1986,lars:1987}.
  His definition has been applied, specialized, and adapted in multiple ways in the meantime. Please see below for a few examples.
  But to the best of our knowledge this does not hold for the \jiparameterized\ concepts of simulatability and bisimilarity,
        apart from the passages in \cite{lars:1987} that we cited and described above.
       
Parameterized bisimilarity in Larsen's definition \cite{lars:1986,lars:1987}
   has later been called `relative bisimilarity' and `relativized bisimilarity' in \cite{lars:miln:1992} by Larsen and Milner,
     who used it also for the practical purpose of verifying the Alternating Bit Protocol \cite{lars:miln:1992}. 
As pointed out in \cite{fahr:lars:lega:trao:2014},
  it was also the basis for `modal transition systems' to which a large body of work has been devoted since, 
    see for example \cite{lars:1990,lars:nyma:wkas:2007,bene:kvre:lars:moll:srba:2011,huth:jaga:schm:2001}.

Environment parameterized bisimulations in the sense of Larsens' definition or adapted and specialized variants of it have been used frequently, 
  for example in \cite{lars:miln:1992,lars:lars:waso:2005}. 
  \cite{pier:sang:2000} introduces a notion of equivalence parameterized with respect to typing information, 
    which, quoting from \cite{pier:sang:2000}: 
      ``can be seen as a disciplined instance of Larsen’s, in which one uses types to express constraints on the behaviors of the observers, 
        rather than explicitly writing all their possible behavior''.

\smallskip
\noindent
{\bf Current Work.}
  We investigate \modallogical\ characterizations
  of \jiorparameterized\ bisimilarity and simulatability,
    and of refinements of the modal characterizations already obtained by Larsen, and here.
\begin{enumerate}[label={(W\arabic{*})},leftmargin=*,align=right,itemsep=0ex]  
  \item{}\label{CW1}
    %
    We are working out a \modallogical\ characterization for \jiparameterized\ bisimilarity $\sjiparambisim{\aenv}$ 
      that is based on a game characterization of $\sjiparambisim{\aenv}$.
    However, our characterization will \underline{not} just be of a simple form comparable
      to Theorem~\ref{thm:log-char:parambisim} and Theorem~\ref{thm:log-char:paramsim:paramsimequiv},
        for all environments~$\aenv\,$:
    \begin{equation*}
      \aproc \jiparambisim{\aenv} \bproc 
        \;\;\Longleftrightarrow\;\; 
      \formsof{\aproc}
        \cap
      \someformsof{\aenv}
        \: = \:
      \formsof{\bproc}
        \cap
      \someformsof{\aenv} 
        \quad
        \text{(for all \imagefinite\ processes $\aproc$, $\bproc$).}
    \end{equation*}
    where $\someforms \subseteq \forms$ would be appropriately defined formulas
      with then $\someformsof{\cenv} \defdby \descsetexp{ \aform\in\someforms }{ \cenv \satisfies \aform }$
        defined for all environments~$\cenv$.
    It is nevertheless interesting to note that
      since $\snotjiparambisim{\aenv} \subseteq \snotparambisim{\aenv}$ holds 
        (due to $\sparambisim{\aenv} \subseteq \sjiparambisim{\aenv}$ by Proposition~\ref{prop:parambisim:jiparambisim:jiparamsimequiv}, \ref{it:1:prop:parambisim:jiparambisim:jiparamsimequiv}),
      that whenever $\aproc \notjiparambisim{\aenv} \bproc$ holds,
             always $\aproc \notparambisim{\aenv} \bproc$ follows,
        and a formula $\aform \in (\formsof{\aproc} \cap \negclosure{\posformsof{\aenv}}) \mathrel{\Delta} (\formsof{\bproc} \cap \negclosure{\posformsof{\aenv}})$
          (where $\Delta$ denotes symmetric difference)
          that distinguishes $\aproc$ and $\bproc$ can always be found via Larsen's characterization, Theorem~\ref{thm:log-char:parambisim}.      
  \item{}\label{CW2}
    The restriction to \imagefinite\ processes 
      for the modal characterizations of simulatability~$\ssimby$ and bisimilarity~$\sbisim$
        by Hennessy and Milner (Theorem~\ref{thm:log-char:sim:bisim})
          can be dropped 
            by permitting infinitary formulas with infinite conjunctions.
              Indeed, Milner has described such an adaptation for infinitary formulas~in~\cite{miln:1985}.
              
    We want to obtain similar extensions to not necessarily \imagefinite\ processes 
      for Larsen's characterization of $\sparambisim{\aenv}$ (Theorem~\ref{thm:log-char:parambisim})
        and our ones of $\sparamsimby{\aenv} \,=\, \sjiparamsimby{\aenv}$ 
          and $\sparamsimequiv{\aenv} \,=\, \sjiparamsimequiv{\aenv}$ (Theorem~\ref{thm:log-char:paramsim:paramsimequiv}).
\end{enumerate}

\smallskip
\noindent
{\bf Open problems.}        
As problems to which (satisfactory) answers have evaded us so far, we want to mention:
\begin{enumerate}[label={(P\arabic{*})},leftmargin=*,align=right,itemsep=0ex]       
  \item{}\label{P1}
    How can the discrimination order for $\sjiparambisim{\aenv}$ be characterized?
    Note that a similar characterization in terms of simulatability~$\ssimby$
      as for the discrimination order of $\sparambisim{\aenv}$ in Thm.~\ref{thm:char:discr-preorder:parambisim} by Larsen,
        and for $\sjiparamsimby{\aenv}$ and $\sjiparamsimequiv{\aenv}$ in Theorem.~\ref{thm:char:discr-preorder:jiparamsim:jiparamsimequiv},
          is not possible due to Proposition~\ref{prop:discr-preorder:jiparambisim}.
          %
          %
  \item{}\label{P2}
    Does equality of \jiparameterized\ bisimilarity with respect to environments~$\aenv$ and~$\benv$
      coincide with bisimilarity of $\aenv$ and~$\benv\,$?
        Equivalently, does the implication ``$\Leftarrow$'' hold in the following statement
          (of which ``$\Rightarrow$'' is easy to verify), for all environments~$\aenv$ and $\benv\,$:
    \begin{equation*}
      \aenv \bisim \benv
        \;\;\; \overset{\displaystyle \overset{?}{\Longleftarrow}}
                       {\:\:\Longrightarrow} \;\;\;
      \jiparambisim{\aenv} \;\, = \;\, \jiparambisim{\benv} \punc{.}                 
    \end{equation*}   
    %
  %
\end{enumerate}

\smallskip
\noindent
{\bf Future research.}       
  As two lines of research for which the concept of \jiparameterized\ bisimilarity may lead to new insights 
    we mention:
      a continuation of Larsen's work in his thesis \cite{lars:1986} 
        towards flexible formal methods for showing compositionality of program correctness (see (F1)),
      and consequences for finding interesting contextual behavioural metrics as introduced in \cite{lago:murg:2023} (see (F2)):  
\begin{enumerate}[label={(F\arabic{*})},leftmargin=*,align=right,itemsep=0ex]      
  \item{}\label{F1}
    An interesting future work is the study of compositionality 
    properties of $\sjiparambisim{\aenv}$, that is the behavior of 
    $\sjiparambisim{\aenv}$ up to context. A context is typically
    defined as a syntactic process $C$ (expressed in some process 
    algebra) with a hole $[]$. Notation $C[\aproc]$ is used 
    for the process obtained upon substitution of $\aproc$ for the 
    hole in $C$. In general, $\sjiparambisim{\aenv}$ is not preserved
    by contexts: 
    Consider processes $\aproc = a + b$, $\bproc = a$, environment
    $\aenv = \actpref{\aact}{\bact}$ and context $C = \actpref{\aact}{[]}$. We have that 
    $\aproc \sjiparambisim{\aenv} \bproc$, but $C[\aproc] = \actpref{\aact}{\aact + \bact} \not\sjiparambisim{\aenv} \actpref{\aact}{\aact} = C[\bproc]$. Notice
    that the above example also applies to $\sparambisim{\aenv}$.
    Indeed, including a process in a context intuitively also affects
    the environment, as shown in a study of the
    compositionality of $\sparambisim{\aenv}$ in \cite{lars:1987}. 
    The idea in that work is to introduce
    parametric environment-transformer\footnote{We use a 
    different notation than \cite{lars:1987}. There, 
    $T_C(\aenv)$ is rendered as $wie_{E\!\!\!\!E}(C,\aenv)$} $T_C$ 
    which preserves $\sparambisim{\aenv}$ in the following sense:
    \[
    \aproc\;\parambisim{T_C(\aenv)}\; \bproc \;\;\implies\;\; 
        <\!C,\aproc\!>\; \equiv_{\aenv}\; <\!C,\bproc\!> \punc{,}
    \]
    where $<\!C,\aproc\!>\; \equiv_{\aenv}\; <\!C,\bproc\!>$
    intuitively means that ``$C[\aproc] \jiparambisim{\aenv} 
    C[\bproc]$ 
    with $C$ interacting identically with $\aproc$ and $\bproc$'' 
    \cite{lars:1987} (we omit the formal definition for brevity).
    We speculate that, for $\sjiparambisim{\aenv}$, the requirement
    ``$C$ interacting identically with $\aproc$ and $\bproc$'' could
    be removed. If so, the compositionality of $\sjiparambisim{\aenv}$
    could be expressed as follows (for an appropriate environment-
    transformer $T'_C$):
    \[
    \aproc \jiparambisim{T'_C(\aenv)} \bproc \;\;\implies\;\; 
        C[\aproc] \jiparambisim{\aenv} C[\bproc] \punc{.}
    \]
%
%
%
%
%
%
\item{}\label{F2}
   The relatively recent work \cite{lago:murg:2023}  
shows that from $\sparambisim{\aenv}$ (and quantitative 
generalizations of it) one can extract a generalized pseudo-metric
between processes, where the codomain of the metric is the set of 
environments (under some closure assumptions). 
The idea is that the distance 
$d(\aproc,\bproc)$ between processes $\aproc,\bproc$ is defined
as the largest environment $\aenv$ (according to 
\eqref{eq:def:discrpo}) such that 
$\aproc \parambisim{\aenv} \bproc$. An obvious future work
is exploring whether a metric can be extracted for 
$\sjiparambisim{\aenv}$. The main challenge is finding the right
notion of ``largest environment'' for $\sjiparambisim{\aenv}$,
which is related to open problem (P1).  
\end{enumerate}

\smallskip
\noindent
{\bf Acknowledgment.}
  We thank the reviewers for their questions 
    about the literature on parameterized bisimilarity since its inception, 
    and concerning our work on the logical characterization of $\sjiparambisim{\aenv}$.
  We are also grateful for lists of comments that pointed us to necessary corrections and adaptations of details. 
  We thank Cl\'{e}ment Aubert for a direct discussion after the workshop, and in particular for pointing us to interesting connections
    with work \cite{abs-2007-08187,AubertV22} of his and Daniele Varacca.
  %

\nocite{*}
\bibliographystyle{eptcs}
\bibliography{IntConcExp-2025-final.bib}


\end{document}